\documentclass[aoas]{imsart}

\usepackage{xr}
\makeatletter
\newcommand*{\addFileDependency}[1]{
  \typeout{(#1)}
  \@addtofilelist{#1}
  \IfFileExists{#1}{}{\typeout{No file #1.}}
}
\makeatother

\newcommand*{\myexternaldocument}[1]{%
    \externaldocument{#1}%
    \addFileDependency{#1.tex}%
    \addFileDependency{#1.aux}%
}

\myexternaldocument{sMRT-supplement}

\usepackage{hyperref}
\usepackage{amsmath,amssymb}
\usepackage{color}
\usepackage{bbm}
\usepackage{amsthm}
\usepackage{enumitem}
\usepackage{subcaption}
\usepackage{soul}
\usepackage{mathtools}

\usepackage{amsmath}
\usepackage{tikz}
\usepackage{dsfont}
\usetikzlibrary{matrix,arrows}
\usepackage{amsmath, amsfonts,amssymb,amsthm, txfonts, pxfonts,amscd}
\RequirePackage[OT1]{fontenc}
\usepackage{amsthm,amsmath}

\usepackage{natbib}

\usepackage{algorithm}
\usepackage{algpseudocode}
\usepackage{multirow}

\newcommand{\comments}{0}

\newenvironment{hproof}{%
  \proof}{\endproof}



\def\pr{\text{pr}}
\def\given{\, | \,}
\def\Given{\, \big | \,}
\def\Var{\text{Var}}

\def\EE{\mathbb{E}}

\def\indep{\mathrel{\rlap{$\perp$}\kern1.6pt\mathord{\perp}}}
\def\Var{\text{Var}}

\def\EE{\mathbb{E}}

\def\f{\boldsymbol{f}}

\newcommand{\sam}[1]{\textcolor{blue}{[SAM:\ #1]}}
\newcommand{\walt}[1]{\textcolor{red}{[WD:\ #1]}}
\newcommand{\indicator}[1]{ \mathds{1}_{\{#1\}}}

\def\Var{\text{Var}}
\def\E{\mathbb{E}}

\def\p{{P}}
\def\Ep{\E_{\bf p}}
\def\Eeta{\E_{\eta}}
\def\f{\boldsymbol{f}}

\newtheorem{thm}{Theorem}[section]
\newtheorem{lemma}[thm]{Lemma}

\newtheorem{defn}[thm]{Definition}

\newtheorem{assumption}[thm]{Assumption}



\begin{document}
\begin{frontmatter}
\title{The stratified micro-randomized trial design: 
  sample size considerations for testing nested causal effects of time-varying treatments}
\runtitle{Stratified micro-randomized trial design}

\begin{aug}
\author[add1]{Walter Dempsey}\ead[label=e1]{wdem@umich.edu},
\author[add2]{Peng Liao}\ead[label=e1]{pengliao@umich.edu},
\author[add3]{Santosh Kumar}\ead[label=e1]{santosh.kumar@memphis.edu},
and
\author[add1]{Susan A. Murphy}\ead[label=e1]{samurphy@umich.edu}
\affiliation[add1]{Harvard University}
\affiliation[add2]{University of Michigan}
\affiliation[add3]{University of Memphis}
\runauthor{W. Dempsey et al.}
\end{aug}

\begin{abstract}
Technological advancements in the field of mobile devices and wearable sensors have helped overcome obstacles in
the delivery of care, making it possible to deliver  behavioral treatments anytime and anywhere.
Increasingly the delivery of these treatments is triggered by predictions of risk or engagement
which may have been impacted by prior treatments.
Furthermore the treatments are often designed to have
an impact on individuals over a span of time during which
subsequent treatments may be provided.

Here we discuss our work on the design of a mobile health smoking
cessation  experimental study in which two challenges arose. First the
randomizations to treatment should occur at times of stress and second
the outcome of interest accrues over a period that may include
subsequent treatment.  To address these challenges we develop the ``stratified
micro-randomized trial,'' in which each individual is randomized among
treatments at times determined by predictions constructed from outcomes to prior treatment and with
randomization probabilities depending on these outcomes. We define
both conditional and marginal proximal treatment effects. Depending on
the scientific goal these effects may be defined over a period of time
during which subsequent treatments may be provided.  We develop a
primary analysis method and associated sample size formulae for
testing these effects.

\end{abstract}

\begin{keyword}
\kwd{sequential randomization}
\kwd{nested causal effects}
\kwd{stratified micro-randomized trials}
\kwd{mobile Health}
\kwd{weighted, centered methods}
\end{keyword}
\end{frontmatter}


\section{Introduction}
\label{section:introduction}


The rise of wearable technologies has generated increased
scientific interest in the use and development of
mobile interventions.
Such mobile technology holds promise in providing
accessible support to individuals in need.
Mobile interventions to maintain adherence to HIV medication
and smoking cessation, for example, have shown sufficient effectiveness
to be recommended for inclusion in health services~\citep{Freeetal2013}.
Increasingly scientists aim to trigger delivery of treatments
based on predictions, such as of risk or engagement, which are
outcomes of prior treatments.
In these settings scientists are increasingly 
interested in assessing nested treatment effects.
For example, a scientist may want to understand if providing a
treatment  at  high risk time~\citep{Hovsepian:2015} is
effective.  Often times of high risk occur
infrequently. In these cases randomization to treatment might be triggered by a  risk prediction so as to avoid providing treatment at the wrong time and potentially providing too much treatment. Furthermore the scientist may want to detect these
treatment effects over the next hour during which subsequent
treatments may be delivered.


In this paper, we propose the \emph{stratified micro-randomized trial
design} because it is critical to stratify randomization 
to ensure sufficient occasions where the variable of
interest~(denoted~$X_t$), such as risk, 
takes a particular value~$x$ and treatment is
provided and sufficient occasions where~$X_t=x$ and treatment is not
provided. 
In these settings, the outcome of interest may require a period of time 
over which to develop; during this time period further treatment might
be provided.
To address this we provide a careful definition of the desired
treatment contrast and introduce the notion of a reference
distribution. 
We proceed by developing an appropriate test statistic for
the desired treatment contrast.
The associated sample size calculation is non-trivial due to 
unknown form of the non-centrality parameter.
Moreover, the distribution of~$X_t$ over time,~$t$, is unknown.
Therefore we develop an approach to formulating a simulation based
sample size calculator to accommodate the unknown longitudinal
distribution of~$X_t$.
The calculator requires the scientist to specify a generative model 
for the history~$H_t$ which achieves the specified 
alternative treatment effect.
However existing data sets that include the use of the required sensor
suites and thus can be used to guide the form of the generative model
are often small and do not include treatment.
To address this we provide a protocol for the use of such noisy, small 
datasets to inform the selection of the generative model,
leading to a data-driven, simulation-based sample size calculator.
We also illustrate how exploratory data analysis and over-fitting of
the same data can be used in constructing a feasible set of deviations
to which the sample size calculator should be robust.


This work is motivated by our participation in a mobile health smoking
cessation study, in which an average of 3 stress-reduction treatments
should be delivered per day, 1/2 at times the participant is
classified as stressed and 1/2 at times the participant is \emph{not}
classified as stressed. 
We use data from an observational, no treatment,
study of individuals~\citep{Sarker:book:2017, Saleheen:2015}
who are attempting to quit smoking
to construct the generative model underlying the simulation
based sample size calculator.
The data directly informs the generative model {under no treatment}.
We then build a generative model {under treatment}
by combining the generative model under no treatment
with the targeted alternative treatment effect.
We next over-fit the noisy, small data to suggest potential deviations
to which we assess robustness of the sample size calculator.

\subsection{Related work}

We build  upon prior work in experimental design and
on data analysis methods for time-varying causal effects.
We outline this related work below, highlighting key
differences to our current setting.

\subsubsection{Micro-randomized trials}

Recently micro-randomized trial designs \citep{Liaoetal2015, Dempsey_Significance}
were developed for testing proximal and delayed effects
of treatment \citep{Klasjnaetal}.
While in these trials treatment is sequentially randomized per participant, this approach does not permit  the randomization probabilities to  depend
on features of the participant's observation history.
This restriction is quite problematic.  Indeed due to the rapid
increase in sensor technology and the ability of various machine
learning methods to provide real-time predictions, it is now feasible
for scientists to trigger treatments based on these predictions or
other features of the participant's observation history.
A critical question is whether triggering a treatment based on such features is effective.
Often these features may be impacted by prior treatment.
Furthermore the responses of greatest interest may be
 defined over a span of time
during which subsequent treatments may be delivered yet the approach
developed in \citep{Liaoetal2015} does not accomodate this.
We designed the stratified micro-randomized trial specifically for this more complex setting.




\subsubsection{N-of-1 trials}

 At first glance, the  micro-randomized trial design appears
similar to the N-of-1 trial design frequently used in the behavioral sciences.
However the estimand is quite different.
We will, as is typical in statistical causal inference,
consider average causal effects, possibly conditional on covariates.
In the behavioral field N-of-1 trials are used most often to
ascertain individual level causal effects~\citep{McDonaldetal}.
A variety of nuanced assumptions about individual behavior
using behavioral science theory is brought to bear as
scientists attempt to triangulate on individual level effects;
see the section on \lq\lq Measuring behavior over time\rq\rq\
in \cite{McDonaldetal} for a discussion.
In the clinical field, N-of-1 trials were developed for
settings in which scientists wish to compare the
effect of one treatment versus another (treatment A versus
treatment B) on an outcome but it is very expensive to
recruit many participants.
In both settings a common  assumption underlying the analysis of N-of-1 trials
is that there are no carry-over effects.
Additionally one often assumes that the treatment effect is constant over time.
An excellent overview of N-of-1 designs and their use
for evaluating technology based interventions
is~\cite{Dallery2013}. See \cite{Kravitz2014}
for a review of this design in pharmacotherapy trials.

\subsection{Outline}

This paper is organized as follows.
In section~\ref{section:strat-mrt} we discuss
the stratified micro-randomized trial and describe in
greater detail the motivating smoking cessation study.
In section~\ref{section:cond_effects} we define two
types of  treatment effects:
a conditional treatment effect, conditional on a stratification
variable, and a treatment effect that is marginal
over the stratification variable.
Section~\ref{section:cond_test_statistic} provides
primary analysis methods and associated
theory for the proposed trial design.
We then provide a simulation-based method for determining the sample size
for a stratified micro-randomized trial in
section~\ref{section:sample_size}.  This simulation-based sample size calculator requires a generative model for the trial data.
We develop a generative model for the smoking cessation example in section~\ref{section:smoking_example} and develop the simulation based sample size calculator for this example.  In this example the development of the generative model begins with the development of model under no treatment.  This latter model is constructed using summary 
statistics on
	data collected in an observational, no treatment,
	smoking cessation study of
	cigarette smokers
	\citep{Saleheen:2015}.
	Section~\ref{subsubsection:calculator_smoking} describes the dataset and how it is used to inform the generative model.
We also conduct a variety of robustness checks and subsequently revise the generative model.  Here too, the observational, no treatment, smoking cessation study is used to indicate where robustness is required. 
Section~\ref{section:alternatives} provides a discussion.

\section{Stratified Micro-Randomized Trial}
\label{section:strat-mrt}

\subsection{Motivating example -- Smoking cessation study}
\label{subsection:motex}
Here we provide a simplified description of the smoking cessation study
which we are involved in through the
Mobile Data to Knowledge Center (https://md2k.org/).
This is a 10 day mobile health intervention study focused on developing a mobile health intervention
aimed at aiding individuals who are attempting to quit smoking.
Participants wear both an AutoSense chest band~\citep{Autosense:Ertin:2011} as well as bands on each wrist for 10 hours per day.
Sensors in the chestband and wristband measure various physiological responses
and body movements to robustly assess {physiological stress}.
In particular a pattern-mining algorithm uses the sensor data
 to construct a binary time-varying stress classification
(see Section~\ref{section:smoking_example} and \cite{Sarker:2016} for  further details) at each minute of sensor
wearing throughout the entire day.

Each participant's smartphone contains a number of ``mindfullness apps'' that
can be accessed 24/7 to engage in guided stress-reduction exercises.
In this study the treatment is a  smartphone notification to remind
the participant to access the app and  practice  the stress-reduction exercises.
Theoretically, a treatment can be delivered at any minute during the 10 hour day.
However in practice, treatment will only be delivered when the participant is available.
That is, at some time points it is inappropriate for scientific, ethical or burden
reasons to provide treatment.
In this example, one of the reasons why a participant would not be available at decision time~$t$ is if the participant received a treatment in the past hour (see Section~\ref{section:smoking_example} for further details on availability specific to this trial).

At each minute availability is ascertained and if the participant is available, then
the participant is randomized to receive or not receive a treatment.
In this study the repeated randomizations are stratified to ensure that  each participant should receive
an average of 1.5 treatments per day while classified as stressed and an average of 1.5
treatments per day
while not classified as stressed.

We  consider primary analyses and sample size formula when the
primary aim of this type of study is to address  scientific questions such as:
\begin{center} \emph{
Is there an effect of the treatment on the proximal response? And is there an effect of the treatment if  the individual is currently experiencing stress?
}
\end{center}
The stratified micro-randomized trial is an experimental design
intended to provide data to address such questions.


\subsection{A Stratified Micro-Randomized Trial}
\label{subsection:notation}

A \emph{micro-randomized trial}~\citep{Liaoetal2015,
  Dempsey_Significance}  consists of a sequence of
within-person decision times~$t=1,\ldots, T$, e.g. occasions,
at which treatment may be randomized.  For example, in the smoking cessation study the
decision times are at minute intervals during a 10 hour day over a period of 10 days
(i.e., $T=600*10$ decision times) for each participant.
As discussed in the introduction
we are  interested in
treatment effects at particular values of a variable $X_t$ that
are likely impacted by prior treatment
(in the smoking cessation study, $X_t$ is an indicator of stress and
treatment is intended to impact the occurrence of stress); often in
these settings  some values of $X_t$ occur more rarely
(e.g., participants experience many fewer
minutes of stress than non-stress minutes in a day)
and thus to ensure sufficient treatment exposure at these
values we stratify  the randomization.  We call such trials \emph{stratified micro-randomized trials}.
We assume the sample space
for the covariate~$\mathcal{X}$ is finite and small.
That is, $X_{t}$ is a time-varying
categorical (or ordinal)  variable with
support~$\mathcal{X} = \{ 0, 1, \ldots, k \}$ 
where $k$ is small.   In the case of the smoking cessation example,
$X_t=1$ if the participant is classified as
stressed at decision time $t$ and $X_t=0$, otherwise, thus $k=1$.

$O_t$ ($t \geq 1)$ denotes  observations collected
after time~$t-1$ and up to and including time $t$ (including the time varying stratification variable, $X_t$);  $O_{1}$ contains baseline covariates.
$O_t$ also contains the availability indicator: $I_t=1$ if available for treatment and $I_t=0$ otherwise.
Availability at time $t$ is determined before treatment randomization.
In this paper, we consider  binary treatment (e.g., on or off);
$A_t$ denotes the indicator for the randomized treatment at time~$t$.
A randomization only occurs if $I_t=1$.
In the smoking cessation example $A_t=1$ if at minute $t$,
the participant is notified to practice stress-reduction
exercises and $A_t=0$ otherwise.
In particular if the participant is unavailable~(i.e., $I_t = 0$)
there can be no notification to practice stress-reduction
exercises (i.e.,~$A_t = 0$).
The ordering of the data at a decision time $t$ is $O_t, A_t$.
Let~$H_{t} = ( \{  O_s, A_s \}_{s=1}^{t-1} , O_{t} ) $ denote
the observation history up to and including time~$t$,
as well as the treatment history at all
decision times up to, but not including, time~$t$.

In general the randomization probability for $A_t$ will depend
not only on the stratification variable, $X_t$ but also  other variables  in $H_t$.  The
$\pr  ( A_{t} = 1 \given H_{t})$ is a known function of $H_t$,
denoted by  $p_{t} (1 \given H_{t} )$.
We define~$p_{t} (1 \given H_{t}) = 0$ when
the participant is currently unavailable
(i.e., $I_t = 0 \subset H_t$).
Appendix~\ref{app:randprobs} provides an example,
suitable in the smoking cessation example, of a formula
for $p_{t} (a \given h_{t} ),  t=1,\dots, T$
for any possible value of history given by $h_t$.
From here on, we assume the investigator has access
to a formula for these randomization probabilities.
Let $\p_{\bf p}$  denote the distribution of the data if collected using
randomization probabilities determined by this formula.

\sloppy
The proximal response, denoted by~$Y_{t,\Delta}$, is a known
function of the participant's data within a subsequent window
of length~$\Delta$ (i.e.,  $\{ O_{t+1}, A_{t+1}, \ldots,  O_{t+\Delta-1}, A_{t+\Delta-1}, O_{t+\Delta}\}$).
In the smoking cessation study, for example, the length of window might be $\Delta=60$ minutes
with proximal response
\[
Y_{t,\Delta} = \Delta^{-1} \sum_{s=1}^{\Delta} {\bf 1}_{ X_{t+s} = 1}.
\]
In this smoking cessation example, the response is a deterministic
function of \emph{only} the stratification
covariate, $X_t$; this need not be the case.  For example in a physical activity study in which the treatments are activity messages $X_t$ may be a binary variable indicating currently sedentary or not yet the response might be the number of steps over subsequent $x$ minutes.

\fussy
\section{Proximal effect of treatment}
\label{section:cond_effects}

The primary question of interest is whether the treatment has a proximal effect;
that is, whether there is an effect of treatment at decision time $t$ on the proximal
response $Y_{t,\Delta}$. In particular we aim to test if the proximal effect is zero.
Note we are only interested in treatment effects conditional on availability ($I_t=1$).
We consider two types of proximal effects:
an effect that is defined conditionally on the value of the stratification variable,
$X_t$ and $I_t=1$ or an effect that is conditional only on $I_t=1$, so marginal with
respect to the distribution of $X_t$.

\subsection{Proximal effect of treatment, Potential outcomes \& Reference distribution}
\label{section:prox_effects_pot_outcome}
We use potential outcomes~\citep{Robins,Rubin} to define  both the
conditional and marginal proximal effect.
At time 2, the potential observations are $\{O_2(a_1)\}_{a_1 \in\{0,1\}}$.
The potential observations and availability
at decision time~$t$
are $\{ O_t (\bar{a}_{t-1}) \}_{\bar{a}_{t-1} \in\{0,1\}^{t-1}}$.
Recall that the proximal response is a known function of the
participant's data within a subsequent window of length~$\Delta$.
Thus the potential outcomes for the response at time~$t$ are
$\{Y_{t,\Delta}(\bar{a}_{t+\Delta-1})\}_{\bar{a}_{t+\Delta-1} \in\{0,1\}^{t+\Delta-1}}$;
each individual has $2^{t+\Delta-1}$ potential responses at time~$t$.

\begin{defn}[Proximal treatment effects] \normalfont
At the individual level, the effect of providing treatment versus not
providing treatment at time $t$ is a difference in potential outcomes
for the proximal response and is given by
\begin{eqnarray}
\label{ind_txt}
Y_{t,\Delta} (\bar{a}_{t-1},1, a_{t+1},\dots,a_{t+\Delta-1} ) -Y_{t,\Delta} (\bar{a}_{t-1},0, a_{t+1},\dots,a_{t+\Delta-1} ).
\end{eqnarray}
There are $2^{t+\Delta-2}$ of these treatment differences for each individual,
each corresponding to a value for $(\bar{a}_{t-1}, a_{t+1},\dots,a_{t+\Delta-1})$.
The  \lq\lq fundamental problem of causal inference\rq\rq~\citep{Imbens2015,Pearl2009}
is that we can not observe any one of  these individual differences.
Thus we provide a definition of the treatment effect that is an average across individuals.
Furthermore to define the effect of treatment we must specify a reference distribution,
that is the distribution of the treatments prior to time $t$, $\bar{a}_{t-1}$
and if $\Delta>1$ then we must also define the distribution of the treatments
after time $t$, $(a_{t+1},\dots,a_{t+\Delta-1})$.  If the reference distribution is
not a point mass then, in the definition of the treatment effect, here too, the
treatment effect will be an average; the average is over the above differences
(\ref{ind_txt}) with respect to the reference distribution.  So in summary the
treatment effect at time $t$ will be an average of the differences in (\ref{ind_txt})
both over the distribution across individuals in potential outcomes as well as over
the reference distribution for the treatments.

The question is, \lq\lq Which reference  distribution should be used for the treatments?\rq\rq\
The choice of which distribution to use for $(a_{t+1},\dots,a_{t+\Delta-1})$ might differ
by the type of inference desired.   For example in the smoking cessation study,
it makes sense to consider setting the treatments~$a_{t+1},\dots,a_{t+\Delta-1}$
to $0$.
In this case we can interpret the treatment effect as the effect of providing
a notification at time $t$ to practice stress-reduction exercises and no more notifications
within the next hour versus no notification at time $t$ nor over
the next hour on the fraction of time stressed in the next hour
(i.e., the proximal response).

In  this paper, we set treatment at the subsequent $\Delta-1$ times equal to $0$ as described above. In order to select the reference distribution for $\bar{a}_{t-1}$  we follow
common practice in observational mobile health studies; here longitudinal methods
such as GEEs and random effects models~\citep{Liang1986} might be used to model how
a time-varying variable, such as physical activity, varies with  current mood. In this case the mean model
in these analyses is marginal over the past distribution of  mood.
A similar strategy in the randomized setting  is to use the past treatment randomization probabilities
as the reference distribution.

With the  reference distribution set to  the randomization probabilities for past treatment and  set to no treatment for the subsequent $\Delta-1$ times, the average causal effect at time $t$ can be viewed as an \lq\lq excursion.\rq\rq\ That is, participants get to time $t$ under treatment according to  the randomization probabilities, then at time $t$ (if available) the effect is the contrast between two opposing excursions into the future.  In one excursion, we treat at time $t$ and then do not treat for $\Delta-1$ further times; in the  opposing excursion, we do not treat at time $t$ nor do we treat for $\Delta-1$ subsequent times.

Using the above reference distribution, the marginal, proximal treatment effect at time $t$, $\beta(t)$, is:
\begin{eqnarray*}
  \frac{ \E\left[\sum_{\bar a_{t-1}}\left(\prod_{j=1}^{t-1} p_j(a_j|H_j(\bar a_{j-1})) \right)
  (  Y_{t,\Delta} (\bar{a}_{t-1}, 1, \bar 0)  -
  Y_{t,\Delta} (\bar{a}_{t-1}, 0, \bar 0 )) I_t (\bar{a}_{t-1})\right]}
  {\E\left[\sum_{\bar a_{t-1}} \left(\prod_{j=1}^{t-1} p_j(a_j|H_j(\bar a_{j-1})) \right)
  I_t (\bar{a}_{t-1}) \right]}
\end{eqnarray*}
where the expectation, $\E$ is over the distribution of the potential outcomes
and $\bar 0$ is a row vector of length $\Delta-1$.
Define the conditional, proximal effect, $\beta(t; x)$, as follows:
 \begin{eqnarray*}
   \frac{ \E\left[\sum_{\bar a_{t-1}}\left(\prod_{j=1}^{t-1} p_j(a_j|H_j(\bar a_{j-1})) \right)
   (  Y_{t,\Delta} (\bar{a}_{t-1}, 1, \bar 0)  -
   Y_{t,\Delta} (\bar{a}_{t-1}, 0,\bar 0 )) I_t (\bar{a}_{t-1})1_{X_t(\bar a_{t-1})=x}\right]}
   {\E\left[\sum_{\bar a_{t-1}} \left(\prod_{j=1}^{t-1} p_j(a_j|H_j(\bar a_{j-1})) \right)
   I_t (\bar{a}_{t-1}) 1_{X_t(\bar a_{t-1})=x}\right]}.
\end{eqnarray*}
\end{defn}

The proximal effects can be defined for other reference distributions over
$(\bar{a}_{t-1}, a_{t+1},\dots,a_{t+\Delta-1})$.   Careful consideration is required
in selecting the reference distribution. For example, a natural alternative to
setting the treatments $a_{t+1},\dots,a_{t+\Delta-1}$ to $0$ in the above definition
would be to use a definition which averaged over the randomization distribution, $\p_{\bf p}$.
Consider the smoking cessation example.  Here if at time $t$ treatment is delivered
then according to the randomization protocol the participant cannot be provided further
treatment in the subsequent hour.   On the other hand, if treatment is not provided at
time $t$ then  the participant may be provided treatment in the subsequent hour.
Thus defining the proximal treatment effect with respect
to the randomization distribution~$\p_{\bf p}$ means that the treatment contrast is
between providing treatment at time $t$ versus the combination of  delaying treatment to
later time points in the next hour or not providing treatment in the next hour.

A further consideration in selecting a reference distribution is that if the reference distribution is far from the randomization
distribution then treatment
effects may be very difficult to estimate.  That is, the sample size
necessary to achieve the requisite power to detect treatment effects will
be practically infeasible (i.e, astronomical). Consider again the smoking cessation study example.
Using data from other studies on smokers who are trying to quit we know that there are
only a few times per day at which the smoker is classified as
stressed.
In the subset of the observational, no treatment, study
used to inform our generative models, the mean (standard deviation) of the
number of episodes classified as stressed per day per person
was $2.8$ ($3.2$). The mean (standard deviation) of the
number of episodes \emph{not} classified as stressed per
day per person was $17.1$ ($12.6$).
These statistics support the conclusion that most of the day
the smoker is not stressed.  Recall the randomization distribution
must satisfy the restriction that on average 1.5 treatments are
provided while a smoker is classified as stressed and
on average 1.5 treatments are provided while a smoker is classified
as non-stressed.
This is over a 10 hour day. This means that at any given minute,
the participant is likely classified as not stressed and
the probability of treatment at this minute is very low.
As a result the product of randomization
probabilities~$\prod_{j=t+1}^{t+\Delta-1}p_{j}(0|H_{j})$
is close to $1$ and thus close to a reference distribution
that provides no treatment at times $t+1,\ldots, t+\Delta-1$.
This means that there will be much data from the study
that is consistent with the reference distribution.
If, however the randomization probabilities had to satisfy a
restriction specifying a much larger number of treatments,
then there would be very little data consistent with the
reference distribution.

For the reminder of this paper, the proximal effects are
defined using the randomization distribution for past
treatments ($\bar{a}_{t-1}$) and  $(a_{t+1},\dots,a_{t+\Delta-1})$
are set to 0 (no treatment).

\subsection{Proximal effect of treatment \& Observable Data}
\label{section:prox_effects_data}
To express the causal treatment effects, $\beta(t)$ and $\beta(t;x)$ in terms
of the observable data,\\ e.g. $\{  O_{1}, A_{1},
\ldots, O_{t}, A_{t},
\ldots, O_{T}, A_{T}, \ldots, A_{T+\Delta-1}, O_{T+\Delta}
\}$,  we use the following three assumptions.
\begin{assumption} \normalfont
  \label{consistency}
  We assume consistency, positivity, and sequential ignorability~\citep{Robins}:
  \begin{itemize}
  \item Consistency: For each~$t \leq T+\Delta$,
    $O_t (\bar{A}_{t-1} ) = O_t$.
    That is, the observed values are equal the corresponding potential outcomes.
  \item Positivity: if the joint density~$\{ H_t = h, A_t = a\}$ is greater
    than zero, then~$\pr (A_t = a_t \given H_t = h_t ) > 0$.
  \item Sequential ignorability: for each~$t \leq T$, the
    potential outcomes,\\ $\{O_2(a_1), \ldots,
    O_{T+\Delta}(\bar a_{T+\Delta-1}) \}_{\bar a_{T+\Delta-1}\in \{0,1\}^{T+\Delta-1}}$,
    are independent of~$A_t$ conditional on the history~$H_t$.
  \end{itemize}
\end{assumption}

Sequential ignorability and, assuming all of the randomization probabilities
are bounded away from $0$ and $1$, positivity, are guaranteed for a stratified micro-randomized trial
by design. Consistency is a necessary assumption for linking the potential
outcomes as defined here to the data. When an individual's outcomes may be influenced
by the treatments provided to other individuals, consistency may not hold.
In such instances, a group-based conceptualization of potential outcomes is used~\citep{Hong2006,Vanderweele2013}.
In particular if the mobile intervention includes treatments that aim to
produce social ties between participants, then consistency as stated above
will not hold. For simplicity we do not consider such  mobile interventions here.

\begin{lemma}
  \label{lemma:cond_effect}
  Under assumption~\ref{consistency}, the marginal treatment effect satisfies
  \begin{eqnarray}
    \label{marg_eff}
    \beta(t) =  \E \left[  \E\left[ \prod_{j=t+1}^{t+\Delta-1}
    \frac{1_{A_j=0}}{p_j(A_j|H_j)}Y_{t,\Delta} \bigg| A_t = 1 , H_t
    \right] \bigg| I_t = 1 \right] -\phantom{bbbbbbbbbbb}
    \cr
    \phantom{bbbbbbbbbbb}\E \left[ \E \left[ \prod_{j=t+1}^{t+\Delta-1}
    \frac{1_{A_j=0}}{p_j(A_j|H_j)}Y_{t,\Delta} \bigg| A_t = 0 , H_t
    \right] \bigg|  I_t = 1 \right]
  \end{eqnarray}
  and the conditional treatment effect satisfies
  \begin{eqnarray}
    \label{cond_eff}
    \beta(t; x) = \E \left[  \E\left[ \prod_{j=t+1}^{t+\Delta-1}
    \frac{1_{A_j=0}}{p_j(A_j|H_j)}Y_{t,\Delta} \bigg| A_t = 1 , H_t
    \right] \bigg| X_t = x, I_t = 1 \right] -\phantom{bbbbb}
    \cr
    \phantom{bbbbb}\E \left[ \E \left[ \prod_{j=t+1}^{t+\Delta-1}
    \frac{1_{A_j=0}}{p_j(A_j|H_j)}Y_{t,\Delta} \bigg| A_t = 0 , H_t
    \right] \bigg| X_t = x, I_t = 1 \right]
  \end{eqnarray}
  for all $x \in \{0,\ldots,k\}$
  where $\E$ denotes the expectation with respect to distribution of the data
  generated via a stratified micro-randomized trial with randomization distribution, $\p_{\bf p}$.
\end{lemma}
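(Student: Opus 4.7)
The plan is to identify the causal quantities $\beta(t)$ and $\beta(t;x)$ with the observable expressions using the three standard tools: consistency collapses potential outcomes onto observed outcomes on matching treatment paths, sequential ignorability lets us condition freely on $A_t$ given $H_t$, and inverse probability weighting with $W_t := \prod_{j=t+1}^{t+\Delta-1} 1_{A_j = 0}/p_j(A_j | H_j)$ replaces the reference distribution $(a_{t+1}, \ldots, a_{t+\Delta-1}) = \bar 0$ with an average over the randomization distribution.

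First I would handle the sum over past treatments. The expression $\sum_{\bar a_{t-1}} \bigl(\prod_{j=1}^{t-1} p_j(a_j \mid H_j(\bar a_{j-1}))\bigr) g(\bar a_{t-1})$ for any function $g$ of potential outcomes is, by iterated conditioning combined with sequential ignorability at each of times $1, \ldots, t-1$, equal to $\E_{\p_{\bf p}}[g(\bar A_{t-1})]$ under the randomization distribution. Consistency then identifies $I_t(\bar A_{t-1})$, $H_t(\bar A_{t-1})$, and $X_t(\bar A_{t-1})$ with the observed $I_t$, $H_t$, $X_t$. The numerator of $\beta(t)$ therefore becomes
\begin{equation*}
\E\bigl[\bigl(\E[Y_{t,\Delta}(\bar A_{t-1}, 1, \bar 0) \mid H_t] - \E[Y_{t,\Delta}(\bar A_{t-1}, 0, \bar 0) \mid H_t]\bigr) I_t\bigr],
\end{equation*}
and analogously the denominator becomes $\E[I_t] = \p_{\bf p}(I_t = 1)$, so the ratio is precisely $\E[\cdot \mid I_t = 1]$.

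Next I would apply the IPW identity inside the inner expectation: under sequential ignorability and consistency at times $t, t+1, \ldots, t+\Delta-1$, for $a \in \{0,1\}$,
\begin{equation*}
\E[Y_{t,\Delta}(\bar A_{t-1}, a, \bar 0) \mid H_t] = \E\bigl[W_t Y_{t,\Delta} \,\big|\, A_t = a, H_t\bigr].
\end{equation*}
This follows by expanding the right-hand side as iterated conditional expectations over $(A_{t+1}, O_{t+2}, A_{t+2}, \ldots, A_{t+\Delta-1}, O_{t+\Delta})$; at each step the factor $1_{A_j=0}/p_j(A_j|H_j)$ together with sequential ignorability collapses the average over $A_j$ to evaluation at $A_j = 0$, and consistency then identifies the resulting observed outcome with the potential outcome at $(\bar A_{t-1}, a, \bar 0)$. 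Substituting into the display above yields (\ref{marg_eff}). For $\beta(t;x)$ the same argument applies verbatim, with the extra indicator $1_{X_t(\bar A_{t-1}) = x} = 1_{X_t = x}$ carried through both numerator and denominator, which turns the outer conditioning into $\{X_t = x, I_t = 1\}$, yielding (\ref{cond_eff}).

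The main obstacle is the rigorous verification of the IPW identity across all $\Delta - 1$ future decision times: one must peel off the expectations over $(O_{t+1}, A_{t+1}), (O_{t+2}, A_{t+2}), \ldots$ in the correct order, invoke sequential ignorability to swap $A_j \indep (\text{future potential outcomes}) \mid H_j$, and exploit the positivity assumption so the ratios $1/p_j(A_j|H_j)$ are well-defined whenever they multiply a nonzero integrand. The sum $\sum_{\bar a_{t-1}} \prod p_j$ in the original definition complicates bookkeeping because $H_j(\bar a_{j-1})$ is a function of the potential history, so the consistency step needs to be applied uniformly in $\bar a_{t-1}$ before the expectations can be collected.
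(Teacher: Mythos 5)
Your proposal is correct and follows essentially the same route as the paper's proof: consistency to identify potential histories and availability with their observed counterparts, sequential ignorability to insert the conditioning on $A_t=a$, the iterated-conditioning/indicator-ratio trick to peel off the weights $1_{A_j=0}/p_j(A_j\mid H_j)$ for $j=t+1,\ldots,t+\Delta-1$, and finally normalization of the sum over $\bar a_{t-1}$ to produce the conditioning on $I_t=1$ (and on $X_t=x$ in the conditional case). The only difference is bookkeeping order — you collapse the sum over past treatments into an expectation under $\p_{\bf p}$ first and then apply the inverse-probability identity, whereas the paper works with a fixed $\bar a_{t-1}$ throughout and sums at the end — which does not change the substance of the argument.
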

Note that the above products, e.g. $\prod_{j=t+1}^{t+\Delta-1}\frac{1_{A_j=0}}{p_j(A_j|H_j)}$,
are set to $1$ if $\Delta=1$.  Proof of Lemma~\ref{lemma:cond_effect} can be
found in Appendix~\ref{app:technical}.
In the following we focus on designing a stratified micro-randomized trial
for the primary purpose of testing whether the treatment effect at
any time point differs from 0.

\section{Test statistic}
\label{section:cond_test_statistic}

Our main objective is the development of a sample size formula that
will ensure sufficient power to detect alternatives to the null
hypothesis of no proximal treatment effect.  For the conditional
proximal effect the null hypothesis is~$H_0 : \beta(t; x) = 0, t = 1\ldots, T$ and $x \in \{0,\dots, k\}$.
For the marginal proximal effect the null hypothesis is~$H_0 : \beta (t) = 0, t = 1\ldots, T$.
The proposed sample size formulas are simulation based and will
follow from consideration of the distribution of test statistics under
alternatives to the above null hypotheses.  The sample size will be denoted by $N$. Our test statistic will be based on a  generalization of the test statistics developed by~\cite{Boruvkaetal} to accommodate the fact that the response $Y_{t,\Delta}$ covers a time interval during which subsequent treatment may be delivered  (in~\cite{Boruvkaetal}, $\Delta=1$ throughout) and the conceptual insight that  these estimators   can be interpreted as $L_2$ projections.
 These test statistics are quadratic forms based on estimators of the
coefficients involved in  $L_2$ projections.

In the following we describe $L_2$ projections, and provide the test statistics.
First in the conditional setting the test statistic is based on an empirical projection
 of $\{\beta(t; x) \}_{ t = 1\ldots, T;x \in \{0,\dots, k\}}$ on the  space spanned
by a $q_c$ by $1$ vector of features involving $t$ and $x$, denoted by $f_t (x)$.
We denote the projection by $f_t (x)^\prime \beta_c$. The $\beta_c$ weights in
this projection are given by
\[
  \beta_c^\star=\arg\min_{\beta_c}  \E\left[\sum_{t=1}^T I_t
    \tilde{p}_t(1|X_t)(1-\tilde{p}_t(1|X_t))  \left( \beta(t; X_t)
      -  f_t (X_t)^\prime \beta_c \right)^2 \right]
\]
where $ \{ \tilde{p}_t(1|x)\}_{t=1,\ldots, T; x\in \{0,\ldots,k\}}$ are pre-specified
probabilities used to define the weighting across time and stratification
distribution in the projection.  Note that if desired, one can set
$\tilde{p}_t(1|x)=1/k$ for all $t,x$. See Section~\ref{subsection:computations}
for further comments on the choice of the pre-specified probabilities
and on the choice of $f_t (x)$.

Second, in the marginal setting, the test statistic is based on estimators of
the coefficients involved in an $L_2$ projection of $\{\beta(t) \}_{ t = 1\ldots, T}$
on the space spanned by a $q_m$ by $1$ vector of features involving $t$,
denoted by $f_t$. We denote the projection by $f_t^\prime \beta_m$.
The $\beta_m$ weights in this projection are given by
\[
  \beta_m^\star=\arg\min_{\beta_m}  \E\left[\sum_{t=1}^T
    I_t \tilde{p}_t(1)(1-\tilde{p}_t(1))  \left( \beta(t)
      -  f_t^\prime \beta_m \right)^2 \right]
\]
for pre-specified probabilities, $ \{ \tilde{p}_t(1)\}_{t=1,\ldots, T}$.
Again these probabilities are used to specify the weighting across time
and stratification distribution in the projection.

Here we discuss the estimators of the coefficients in the $L_2$ projections.
The estimators will form the basis for the test statistics.
Note that neither treatment effect, $\beta(t;x)$ in (\ref{cond_eff})
nor $\beta(t)$ in (\ref{marg_eff}), are conditional expectations of an
observable variable (rather the effects are defined by differences in repeated
conditional expectations). Thus instead of minimizing a standard least squares
criterion, we minimize a  generalization of the criterion in \cite{Boruvkaetal} (see (\ref{eq:conditional_ls}), (\ref{eq:marginal_ls}) below).

In some settings there will be sufficient a priori information (e.g. using data
on individuals from a similar population) that will permit the simulation based
sample size formula to depend on ``control variables.'' These variables are used
to help reduce the variance of the estimators with the goal that the resulting
test statistic is more powerful in detecting particular alternatives to the null hypothesis.
See Section~\ref{subsection:computations} for further discussion concerning the choice
of the control variables. For example in the smoking cessation study a natural
control variable would be the fraction of time stressed in the hour prior to
time $t$ as this pre-time $t$ variable may be expected to be highly correlated with
the fraction of time stressed in the hour subsequent to time $t$, $Y_{t,60}$.

Given a $q^\prime$ by $1$ vector of ``control variables'' $g_t(H_t)$, define
$g_t(H_t)^\prime \alpha_c^\star$ as an $L_2$ projection; in particular
\[\alpha^\star_c=\arg\min_\alpha \E \left[ \sum_{t=1}^T \ I_t w_{ct} (H_{t+\Delta-1})
  \Big(  Y_{t,\Delta} -  g_t(H_t)^\prime\alpha_{c}  \Big)^2\right]\]
where $w_{ct} ( H_{t+\Delta-1} ) = \frac{\tilde{p}_t (A_t|X_{t}) {\prod_{s=1}^{\Delta-1} }
  {\bf 1} [A_{t+s} = 0] }{\prod_{s=0}^{\Delta-1} p_{t+s} ( A_{t+s } \given H_{t+s}) }$.
Also define $g_t(H_t)^\prime \alpha_m^\star$ as an $L_2$ projection; in particular
\[\alpha^\star_m=\arg\min_\alpha \E \left[ \sum_{t=1}^T \ I_t w_{mt} (H_{t+\Delta-1})
  \Big(  Y_{t,\Delta} -  g_t(H_t)^\prime\alpha_{m}  \Big)^2\right]\]
where $w_{mt} ( H_{t+\Delta-1} ) = \frac{\tilde{p}_t (A_t) {\prod_{s=1}^{\Delta-1} }
  {\bf 1} [A_{t+s} = 0] }{\prod_{s=0}^{\Delta-1} p_{t+s} ( A_{t+s } \given H_{t+s}) }$.
Note one can choose $g_t(H_t)$ equal to the scalar, $1$.
Again see Section~\ref{subsection:computations} for further discussion.
See appendix~\ref{app:tradeoff} for a discussion
of the trade-off between the approximation error of the~$L_2$
projection of~$\{ \E[ w_{ct} (H_{t+\Delta-1} ) Y_{t,\Delta} \given H_t, I_t = 1 ] \}_{t=1,\ldots,T}$
onto the control variables~$g_t (H_t) \alpha^\star_c$,
sample size~$N$, and statistical power~$1-\beta_0$.

Recall the proposed test statistic is based on an estimator of $\beta_c^\star$ or $\beta_m^\star$.
Here we consider an estimator of $\beta_c^\star$ which is the minimizer of the following weighted, centered
least-squares criteria, minimized over $(\alpha_c, \beta_c)$:
\begin{equation}
  \label{eq:conditional_ls}
  \mathbb{P}_n \left[ \sum_{t=1}^T I_t \, w_{ct} (H_{t+\Delta-1} )
    \left( Y_{t,\Delta} - g_t(H_t)^\prime \alpha_c
      -  (A_t - \tilde{p}_t (1|X_t) ) f_t (X_t)^\prime \beta_c \right)^2 \right]
\end{equation}
where $\mathbb{P}_n [ \phi (H_{t+\Delta-1}) ]$ is defined as the average
of a function,~$\phi(H_{t+\Delta-1})$, over the sample. The centering refers to the centering of the
treatment indicator $A_t$ in the above weighted least squares
criteria. This criterion is similar to~\cite{Boruvkaetal};
however \cite{Boruvkaetal} restrict to $\Delta=1$ and thus
the weight $w_{ct}$ does not contain the ratio, $\frac{
  {\prod_{s=1}^{\Delta-1} } {\bf 1} [A_{t+s} = 0]}{\prod_{s=1}^{\Delta-1}
  p_{t+s} ( A_{t+s } \given H_{t+s}) }$.
Also \cite{Boruvkaetal}  assume a model for the treatment effect  $\beta(t; X_t)$ (as opposed to estimating the projection of this effect as is the case here).  Under finite moment
and invertibility assumptions,  the minimizers $(\hat\alpha_c, \hat\beta_c)$,
are consistent, asymptotically normal estimators  of $(\alpha_c^\star, \beta_c^\star)$.
The limiting variance of $\sqrt{N}(\hat\beta_c-\beta_c^\star)$ is given by  $Q_c^{-1} {W_c} Q_c^{-1}$ where
\begin{align*}
W_c = \E \bigg[ &\sum_{t=1}^T I_t \, w_{ct} ( H_{t+\Delta-1} )\,
{\epsilon}_{ct} (A_t - \tilde{p}_t (1 \given X_t) )  f_t(X_t) \\
&\times\sum_{t=1}^T  I_t \,w_{ct}( H_{t+\Delta-1} )\, {\epsilon}_{ct}
(A_t - \tilde{p}_t (1 \given X_t) )  f_t(X_t)^\prime \bigg], \nonumber \\
{\epsilon}_{ct} = Y_{t,\Delta} - &g_t(H_t)^\prime {\alpha_c^\star}
-  (A_t - \tilde{p}_t (1|X_t) )
f_t (X_t)^\prime{\beta_c^\star},
\text{ and } \nonumber \\
Q_c = \sum_{t=1}^T \E  &\bigg[ I_t \, \tilde{p}_t (1|X_t) (1 -
\tilde{p}_t (1|X_t) ) ) f_t(X_t) \, f_t(X_t)^\prime \bigg]. \nonumber
\end{align*}
See Appendix~\ref{app:asymptotics} for technical details.

The estimators of the coefficients in the projection of the marginal treatment effect,
$\beta_m$ minimize the following least-squares criteria over $(\alpha_m, \beta_m)$:
\begin{equation}
  \label{eq:marginal_ls}
  \mathbb{P}_n \left[ \sum_{t=1}^T I_t \, w_{mt} (H_{t+\Delta-1} ) \left( Y_{t,\Delta} - g_t(H_t)^\prime \alpha_m
      -  (A_t - \tilde{p}_t(1) ) f_t^\prime \beta_m \right)^2 \right]
\end{equation}
where the probability $\tilde{p}_t (a)$ defines the projection
(see above and Section~\ref{subsection:computations}).
Similarly under finite moment and invertibility assumptions,
the minimizers $(\hat\alpha_m, \hat\beta_m)$, are consistent,
asymptotically normal estimators  of $(\alpha_m^\star, \beta_m^\star)$.
See Appendix~\ref{app:asymptotics} for technical details.  For
expositional simplicity we focus on the test for the conditional
treatment effect in the remainder of this paper.
See Appendix~\ref{app:marginal_ss} for a
parallel discussion in the case of the marginal treatment effect.

The proposed sample size formula in the conditional setting is based on the test statistic
\begin{eqnarray}
  T_{cN}= N \hat{\beta}_c^\prime  \hat{Q}_c \hat{W}_c^{-1} \hat{Q}_c \hat{\beta}_c
  \label{eq:teststatistic}
\end{eqnarray}
where $N$ is the sample size and  $\hat{W}_c$ is given by
\begin{align*}
  \mathbb{P}_n \bigg[ &\sum_{t=1}^T I_t \, w_{ct} ( H_{t+\Delta-1} )\,
    \hat{\epsilon}_{ct} (A_t - \tilde{p}_t (1|X_t) )  f_t(X_t) \\
    &\times\sum_{t=1}^T  I_t \,w_{ct}( H_{t+\Delta-1} )\,
    \hat{\epsilon}_{ct} (A_t - \tilde{p}_t (1|X_t) )  f_t(X_t)^\prime \bigg]
\end{align*}
with $\hat{\epsilon}_{ct}= Y_{t,\Delta} - g_t(H_t)^\prime \hat{\alpha}_c
-  (A_t - \tilde{p}_t (1|X_t) ) f_t (X_t)^\prime\hat{\beta}_c$, and
$\hat{Q}_c$ is given by
\[
  \sum_{t=1}^T\mathbb{P}_n  \left[ I_t \, w_{ct} (H_{t+\Delta-1})
    (A_t - \tilde{p}_t (1|X_t) )^2 f_t(X_t) \, f_t(X_t)^\prime \right].
\]
Here we have implicitly assumed that $\hat{W}_c $ is invertible.
The following lemma provides the distribution of $T_{cN}$:

\begin{lemma}[Asymptotic Distribution of  $T_{cN}$]
  \normalfont
  \label{lemma:centering}
  Under finite moment and invertibility assumptions,
  $$
  N \left(\hat{\beta}_c-\beta_c^\star\right)^\prime \hat{Q}_c
  \hat{W}_c^{-1} \hat{Q}_c \left(\hat{\beta_c}-\beta_c^\star\right)
  \longrightarrow_d \chi^2_{q_c}.
  $$
\end{lemma}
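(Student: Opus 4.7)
The plan is to reduce Lemma~\ref{lemma:centering} to the asymptotic normality of $\hat\beta_c$ that is asserted (and deferred to Appendix~\ref{app:asymptotics}) in the paragraph preceding the lemma, combined with consistency of the plug-in estimators $\hat Q_c$ and $\hat W_c$, and then to apply Slutsky's theorem and the standard identity that $Z^\prime\Sigma^{-1}Z\sim\chi^2_{q_c}$ when $Z\sim N_{q_c}(0,\Sigma)$ with $\Sigma$ invertible. Concretely, I would take as given, from the M-estimation analysis of the weighted centered least squares criterion~(\ref{eq:conditional_ls}), that $(\hat\alpha_c,\hat\beta_c)\to_p(\alpha_c^\star,\beta_c^\star)$ and
\[
\sqrt{N}(\hat\beta_c-\beta_c^\star)\longrightarrow_d N_{q_c}\!\left(0,\,Q_c^{-1}W_cQ_c^{-1}\right).
\]

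Next I would verify $\hat Q_c\to_p Q_c$ and $\hat W_c\to_p W_c$. Since the sample consists of $N$ i.i.d.\ participant trajectories, each of these matrices is a sample average over participants of a bounded inner sum over $t$, so the classical WLLN applies once the expected value is identified. For $\hat Q_c$ the identification uses iterated expectations: conditional on $H_t$ and $A_t$, the future factor $\prod_{s=1}^{\Delta-1}\mathbf{1}[A_{t+s}=0]/p_{t+s}(A_{t+s}\mid H_{t+s})$ telescopes (via nested conditioning on $H_{t+1},\ldots,H_{t+\Delta-1}$) to conditional mean $1$, after which $\E[\tilde p_t(A_t\mid X_t)(A_t-\tilde p_t(1\mid X_t))^2/p_t(A_t\mid H_t)\mid H_t]=\tilde p_t(1\mid X_t)(1-\tilde p_t(1\mid X_t))$ delivers precisely the expression defining $Q_c$. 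For $\hat W_c$, I would expand
\[
\hat\epsilon_{ct}=\epsilon_{ct}-g_t(H_t)^\prime(\hat\alpha_c-\alpha_c^\star)-(A_t-\tilde p_t(1\mid X_t))f_t(X_t)^\prime(\hat\beta_c-\beta_c^\star)
\]
and use consistency of $(\hat\alpha_c,\hat\beta_c)$ together with a continuous-mapping argument to show that the resulting cross and quadratic correction terms vanish in probability, leaving $W_c$ as the limit. This is the main technical obstacle, since $\hat W_c$ is quadratic in plug-in residuals and the corrections are weighted sums in $t$ that must be controlled uniformly; one also needs positive definiteness of $W_c$ (the stated invertibility hypothesis) so that matrix inversion is continuous at $W_c$.

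Finally, the continuous mapping theorem gives $\hat Q_c\hat W_c^{-1}\hat Q_c\to_p Q_cW_c^{-1}Q_c=(Q_c^{-1}W_cQ_c^{-1})^{-1}$, and Slutsky's theorem combines this with the joint limit of $\sqrt{N}(\hat\beta_c-\beta_c^\star)$ to yield
\[
N(\hat\beta_c-\beta_c^\star)^\prime\hat Q_c\hat W_c^{-1}\hat Q_c(\hat\beta_c-\beta_c^\star)\longrightarrow_d Z^\prime(Q_c^{-1}W_cQ_c^{-1})^{-1}Z,
\]
where $Z\sim N_{q_c}(0,\,Q_c^{-1}W_cQ_c^{-1})$. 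The right-hand side is the squared Mahalanobis norm of a mean-zero Gaussian against its own inverse covariance, which is distributed as $\chi^2_{q_c}$, yielding the claim.
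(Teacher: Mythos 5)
Your proposal is correct and follows essentially the same route as the paper's proof in Appendix~\ref{app:asymptotics}: asymptotic normality of $\sqrt{N}(\hat\beta_c-\beta_c^\star)$ with limiting variance $Q_c^{-1}W_cQ_c^{-1}$, consistency of the plug-in estimators $\hat Q_c$ and $\hat W_c$, Slutsky's theorem, and the fact that the Mahalanobis quadratic form of a mean-zero Gaussian against its inverse covariance is $\chi^2_{q_c}$. You simply spell out the consistency of $\hat Q_c$ (via the change-of-reference telescoping, the paper's Lemma~\ref{lemma:change_of_reference}) and of $\hat W_c$ (via the residual expansion) in more detail than the paper, which asserts these steps.
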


From a technical perspective the above test statistic, $T_{cN}$, is very
similar to the quadratic form test statistics based on weighted
regression used in Generalized Estimating Equations method~\citep{Liang1986,Diggle2002}.
In this  field much work has been done on how to best adjust these
test statistics and their distribution when the sample size $N$ might
be small \citep{Liaoetal2015, Mancl2001}.  The adjustments are based on
the intuition that the quadratic form is akin to  the multivariate
T-test statistic used to test whether a vector of means is equal to
$0$ and thus Hotellings T-squared distribution is used to approximate
the distribution when $N$ may be small.
Here we follow the lead of this well developed area and use a
non-central Hotelling's T-squared distribution to approximate the distribution of $T_{cN}$.
Recall that if a random variable~$X$ has non-central Hotelling's
T-squared distribution with degrees of freedom~$(d_1, d_2)$ and
non-centrality parameter~$\lambda$ then~$\frac{d_2}{d_1 (d_1 + d_2
  - 1)} X$ has non-central F-distribution with the same
degrees of freedom and non-centrality parameter
\citep{Hotelling1931}.
In our setting  $d_1=q_c$ and  $d_2 = N - (q^\prime + q_c)$
and~$\lambda = N \gamma_c$ with
\begin{align}
  \gamma_{c} =  \left(\beta_c^\star\right)^\prime {Q_c} {W_c}^{-1} {Q_c}
  &\beta_c^\star \label{noncent}.
\end{align}
Recall that $q^\prime$ is the dimension of $\alpha_c$ and $q_c$ is the dimension of $\beta_c$.
See Appendix~\ref{app:technical} for a discussion of how for
large~$N$, we recover the Chi-Squared distribution given in Lemma~\ref{lemma:centering}.

Thus the rejection region for the test $H_0: \beta(t; x) = 0, t =
1\ldots, T$ and $x \in \{0,\dots, k\}$ is:
\begin{equation}
  \label{eq:reject}
  \left \{
    T_{cN}
    > \frac{q_c \, ( N - (q^\prime+1) )}{N- (q^\prime+q_c )}
    F_{q_c, N - (q^\prime+q_c);0}^{-1} \left( 1-\alpha_0 \right) \right \}
\end{equation}
with $\alpha_0$ a specified significance level. For details regarding
further small sample size adjustments, used when analyzing the data,  see
Appendix~\ref{app:ssa}.

\section{Sample size formulae}
\label{section:sample_size}

To plan the stratified micro-randomized study, we need to determine
the sample size needed, $N$, to detect a specific alternative
with a given power ($1-\beta_0$) at a given significance level ($\alpha_0$).
The sample size is the smallest value~$N$ such that
\begin{equation}
  \label{eq:ss}
  1-F_{q_c, N- (q^\prime+q_c ); N \gamma_c}
  \left( \frac{N-(q^\prime +1)}{N - (q^\prime+q_c)} F^{-1}_{q_c, N- (q^\prime+q_c ); 0} (1- \alpha_0) \right)
  \geq 1- \beta_0.
\end{equation}
$F_{d_1, d_2; \lambda }$ and $F^{-1}_{d_1, d_2; \lambda }$ denote the
cumulative and inverse distribution
functions respectively for the non-central $F$-distribution
with degrees of freedom~$(d_1, d_2)$ and non-centrality
parameter~$\lambda$.
Calculation of the sample size $N$ is non-trivial due to the unknown
form of the noncentrality parameter, $N\gamma_c$
(where $\gamma_c$ is defined in~\eqref{noncent}).
This is in contrast to micro-randomized trials where,
under certain working assumptions,
\cite{Liaoetal2015} were able to find an analytic
form for the noncentrality parameter~$ N \gamma_c$.

We outline a simulation based sample size calculation, starting with
general overview and comments in Section~\ref{subsection:computations}
and employ this calculator to design the smoking cessation study 
in Section~\ref{section:smoking_example}.

\subsection{Simulation based sample size calculation}
\label{subsection:computations}

As discussed above, calculation of the sample size~$N$ is non-trivial
due to the unknown form of the non-centrality parameter.
Here, we propose a three-step procedure for sample size calculations.

In the first step, equation~\eqref{noncent} and information elicited
from the scientist is used to calculate, via Monte-Carlo integration,
$\gamma_c$ in the non-centrality parameter.
The resulting value,
~$\hat{\gamma}_c$, is plugged in to
equation~\eqref{eq:ss} to solve for an \emph{initial} sample size~$\hat{N}_0$.
In the second step we use a binary search algorithm to search over a neighborhood of~$\hat{N}_0$;
in our simulations we found the binary search quickly resulted in a solution.
For each sample size~$N$ required by the binary search algorithm,
$K$ samples each of $N$ simulated participants are run.
Within each simulation, the rejection region for the test is given by
equation~\eqref{eq:reject} at the specified significance level.
The average number of rejected null hypotheses across the $K$ simulations
is the estimated power for the sample size~$N$.
The  sample size is the minimal~$N$ with estimated power
above the pre-specified threshold~$1-\beta_0$.

In the last, third, step we conduct a variety of simulations to assess the robustness of the sample size
calculator to any assumptions and to make adjustments to ensure robustness.
See our use of these simulations to test robustness in the case of the smoking cessation study in Section~\ref{section:smoking_example}.

Our sample size formula requires  the following information for $t=1,\ldots,T; x \in \{0,\ldots, k\}$:
\begin{enumerate}
\item desired type 1 and type 2 error rates,
\item targeted alternative~$\beta(t; x)$,
\item selected 	 probabilities~$\{ \tilde{p}_t (1 \given x) \}$,
\item  selected ``control variables''~$g_t (H_t)$,
\item the randomization formula  used to determine  $p_t(1|h)$  given a history $h$  and
\item a generative model for~$\{H_t\}_{t=1,\ldots,T}$.
\end{enumerate}

We provide general comments concerning the choice of
the above items and then build the sample size calculator for the smoking cessation study of Section~\ref{section:smoking_example}.
First we elicit information from the scientist to construct a specific
alternative form for ~$\beta(t; x)$.  A simple approach is to consider linear
alternatives,~$\{ \beta(t; x) = f_t (x)^\prime \beta_c^\star \}_{t=1,\ldots,T; x \in \{0,\ldots, k\}}$
so that the $L_2$ projection and the alternative coincide.  Stratification variables
are often categorical ($X$ is categorical); as a result we model the alternative
separately for each value of $X=x; x \in \{0,\ldots, k\}$.
Furthermore if we suspect
that the effect will be generally decreasing (with study time)  due to habituation,
then we might consider a vector feature, $f_t$ that represents a linear in time, $t$ trend.
Or we might believe that the effect of the treatments might be low at the beginning of the
study and then increase as participants learn how to use the treatment and then decrease
due to habituation; here we might consider a vector feature, $f_t$ that results in a quadratic trend.

The less complex the projection (smaller $q_c$) of the alternative~$\beta(t; x)$,
the smaller the required sample size, $N$, becomes. On the other hand,
the use of a simple projection for the alternative may not reflect the true
alternative $\beta(t;x)$ very well  (see appendix~\ref{app:tradeoff}
for a discussion of this tradeoff).   We suggest sizing a study for primary hypothesis tests
using the \emph{least} complex alternative possible.  For example, while there may be within day variation in
treatment effect, the study might still be  sized to detect treatment effects averaged across such variation --
i.e., a constant alternative within a day can result in a hypothesis test with sufficient power against a wide range of alternatives.
For example in the smoking cessation study the feature, $f_t (x)$ might be $(1, x, d(t), d(t) x)$
with $d(t)$ equal to the number of days following the \lq\lq quit smoking\rq\rq\ date.
The linear trend in days would be used to detect an approximately decreasing treatment effect, $\beta(t;x)$, with increasing $t$.

An objection to the above approach might be as follows.  Suppose that the scientific team believes that there will be an effect only at a very few
decision points within a day and thus a test statistic based on an $L_2$ projection
that averages over all decision points within the day would result in a test with low power.
However if investigators suspect this might be the case then more care should be
taken in selecting the decision points.   Consider the example of
Heartsteps~\citep{Klasjnaetal}, a mobile health intervention
focused on promoting physical activity and reducing sedentary behavior among sedentary office workers.
HeartSteps uses an  activity tracker to monitor steps taken on a per minute basis.
Originally decision points were set to match the frequency of data collection (i.e., each minute).
Upon reviewing activity data, it was discovered that the highest within person variability in
step count occurred at five timepoints throughout the
day with much less within person variability at other times.\footnote{These times were pre-morning commute,
mid-day, mid-afternoon,  evening commute and after dinner. Data collected was on individuals with ``regular'' daytime jobs.}
This information combined with the types of treatments being considered indicates that the treatment
might be most effective  at these 5 timepoints and potentially less effective at other times.
Therefore, decision times were selected to align with the five discovered timepoints.

To select the  probabilities~$  \{ \tilde{p}_t (1 \given x) \}_{t=1,\ldots,T; x \in \{0,\ldots, k\}}$,
recall that these probabilities define the weighting across time and across the stratification
distribution of the  alternative when operationalized as an $L_2$ projection.  To see this
suppose we decide to target a constant-across-time alternative and
select~$f_t(X_t)=( {\bf 1}_{X_t = 1}, {\bf 1}_{X_t = 2},\ldots, {\bf 1}_{X_t = k})^\prime$,
then~$\beta^\star_c = (\beta^\star_{c,1}, \beta^\star_{c,2}, \ldots, \beta_{c,k})$
where
\begin{align*}
  \beta_{c,x}^\star = &\left[ \sum_{t=1}^T \E [ I_t \, {\bf 1}_{X_t = x} ]
                        \tilde{p}_t (1 \given x) (1- \tilde{p}_t (1 \given x) )  \right]^{-1}  \\
                      &\left[ \sum_{t=1}^T \E [ I_t \, {\bf 1}_{X_t = x} ]
                        \tilde{p}_t (1 \given x) (1- \tilde{p}_t (1 \given x) ) \beta(t; x) \right]
\end{align*}
for~$x \in \{0,\ldots, k\}$.
If we set the reference probabilities to be constant in~$t,x$ then
\begin{align*}
  \beta_{c,x}^\star = &\left[ \sum_{t=1}^T \E [ I_t \, {\bf 1}_{X_t = x} ] \right]^{-1}
                        \left[ \sum_{t=1}^T \E [ I_t \, {\bf 1}_{X_t = x} ] \beta(t; x) \right].
\end{align*}
In this case $\beta_{c,x}$ is an average treatment effect across time weighted by
the fraction of time the participant is available and in stratification level $x$.
In our work we usually set $ \tilde{p}_t (1 \given x)$ to be constant in $(t,x)$ so
as to more easily discuss the targeted alternative with collaborators.

Next a decision should be made about which control variables~$g_t (H_t)$
should be included in the construction of the test statistic.
A natural control variable is the pre-decision time version of the proximal
response as this variable is likely highly correlated with the proximal response
and thus can be used to reduce variance in the estimation of the coefficients
for the $L_2$ projection.
For example in the smoking cessation study a natural
control variable is the fraction of time stressed in the hour prior to time~$t$.
One might want to include in the $q^\prime$ by $1$ vector, $g_t (H_t)$, many variables
so as to maximally reduce variance and thus increase the size of the noncentrality
parameter in \eqref{noncent}; indeed for fixed  $q^\prime$, the larger the noncentrality
parameter, the smaller the sample size $N$.
However from equation~\eqref{eq:ss} we
see that fixing all other quantities, the sample size $N$
increases with increasing $q^\prime$.
So intuitively there is a tradeoff between increasing the size of
the noncentrality parameter by including more variables in~$g_t (H_t)$   with the
resulting reduction in degrees of freedom in the denominator of the F test caused
by increasing  $q^\prime$, the number of variables in~$g_t (H_t)$.
See  appendix~\ref{app:tradeoff} for further discussion.

In the smoking cessation example below, we calculate the sample size with the vector of control variables~$g_t (H_t)$
set equal to~$f_t (X_t)$;
this maintains a hierarchical regression yet keeps $q^\prime$
as small as possible.
Incidentally this simplifies the development of the generative model as
additional time-varying variables are not included.

Generally the randomization formula has been determined by considerations of treatment burden,
availability and whether it is critical for the scientific question that the randomization
depend on a time-varying variable such as a prediction of risk.   Treatment burden considerations might impose a
constraint such as,  on average around $n$ treatments should occur over a specified time
period (e.g. an average of $n$ treatments per day); also the randomization formula might be
developed so as to limit the variance in the number of treatments in the specified time period.
In the smoking cessation study, the
randomization probability, $p_t(1|H_t)$ at decision point $t$ depends on at most
$\{X_s, I_s\}_{s=1,\ldots, t}$
(as opposed to the entire history, $H_t$).


The sample size formula requires the specification of a generative
model for the history~$H_t$ which achieves the specified alternative
treatment effect.  However existing data sets that include the use of
the required sensor suites and thus can be used to guide the form of
the generative model are often small and do not include treatment.
In the smoking cessation study, for example, we require 
a generative model for the multivariate distribution of $\{X_t, I_t,
A_t \}_{t=1}^T$ of which only the  distribution of $A_t$ given
$(H_t, I_t=1)$ is known (e.g. $p_t(1|H_t)$).
We have access to a small, observational, no-treatment data
set that included the required sensor suites and thus can be
used to guide the form of the generative model.  Because the data set
is small, in Section~\ref{section:smoking_example} we construct a low
dimensional Markovian generative model.  Here and in general, the
prior data does not include treatments.  Thus we use the
prior data to develop a generative model under no treatment.

The relatively simple generative model allows us to use only a
few summary statistics from this small noisy data set. This of
course, may lead to bias -- this bias would be problematic if the bias
results in sample sizes for which the power to detect the desired
effect is below the specified power.
Thus we also use the small data set to guide our assessment of
robustness of the sample size calculator. In particular, more complex
generative models can be proposed by exploratory data analysis. Of
course such complex alternatives may be due to noise and not reflect
the behavior of trial participants. In
Section~\ref{subsection:SMCdev}, we present results
of an exploratory data analysis in which we over-fit the noisy, small
data to suggest a particular complex
deviation from the simple Markovian generative model.


We follow the three steps outlined at the beginning
of this subsection to provide a sample size $N$.
Our calculator also  provides standardized effect sizes.
That is, given the alternative
effect $\beta(t;x)$ and a generative model we calculate
the average conditional variance given by
$\bar{\sigma}_x^2 = (1/T) \sum_{t=1}^T \E [ \Var \left( Y_{t,\Delta}
  \given I_t = 1, A_t, H_t \right) \given I_t=1, X_t = x ]$.
Table~\ref{tab:std_effect1} in Appendix~\ref{app:add_details} provides standardized
treatment effect sizes, defined as, $d (t; x) = \beta (t;x) / \bar{\sigma}_x$.

\section{Smoking Cessation Study}
\label{section:smoking_example}
In the following, we use the above three step
procedure to form a sample size calculator for the
smoking cessation study.  Recall the last step involves
a variety of simulations to assess robustness to the
assumption underlying the generative model; this step is provided
in section~\ref{subsection:thirdstep}.

As noted previously, the smoking cessation study is a 10 day study; the first day is the
\lq\lq quit day\rq\rq, the day the participant  quits smoking.  Recall that participants wear the AutoSense sensor suite~\citep{Autosense:Ertin:2011} which provides a variety of physiological data streams that  are used by the stress classification algorithm.  A high
level view of the stress classification algorithm is as follows. First,
every minute 
a support vector machine (SVM)  algorithm is applied
to a number of ECG and respiration features constructed from the prior
one minute stream of sensor data.  The output of the SVM, e.g. the
distance of the features from the separating hyperplane, is then
transformed via a sigmoid function to obtain a stress likelihood in $(0,1)$;
see \cite{Hovsepian:2015} for details.
This output (in $(0,1)$) across the minute 
intervals is further
smoothed to obtain a smoother \lq\lq stress likelihood time series."
Next, a Moving Average Convergence Divergence approach is used to
identify minutes at which the trend in the stress likelihood  is going up
and  when it is going down; see \cite{Sarker:2016} for details.  The
beginning of an episode  is marked by the start of a positive-trend
interval; the \emph{peak} of an episode is the end of a positive-trend interval
followed by the start of a negative-trend interval.
If the area under the curve
from the beginning of the episode to the peak of the episode exceeds a
threshold then the episode is declared to be a stress episode.   The
threshold is based on prior data from lab experiments
and was evaluated on independent test data sets (from both lab and
field) in terms of the F1 score (a combination of sensitivity and
specificity~\citep{wiki:f1score}) for use in detecting physiological stress.

A participant is available, $I_t=1$, for a treatment at minute $t$ if the
participant has  not received a treatment in the prior hour, if
this minute corresponds to a peak of an episode and if the minute is during the 10 hours since attiring  Autosense.
The stratification variable at every available minute (decision point)
$t$ is whether the criterion for stress is met ($X_t=1$) or  whether
the criterion for stress was not met  ($X_t=0$).
There are 600 decision times per day  (i.e., $10$ hours/day $\times$ $60$ minutes/hour)
at which, assuming the participant is available,
the participant may receive a treatment notification.  We plan the trial with 11 hour days in which during the
final hour participants cannot receive treatment. The final hour of data collection ensures we can calculate the proximal response
for the final  decision time each day.
Each participant should receive a daily average (over  the 10 hours) of 1.5 treatment notifications (notifications to practice the stress-reduction exercise on the app)
when $X_t=1$ and a daily average of 1.5 treatment notifications when $X_t = 0$.

Next, we build the simulation-based calculator
assuming the primary hypothesis is $H_0 : \beta(t; x) = 0; t = 1\ldots, T; x \in \{0,1\}$
and the  test statistic is as given in (\ref{eq:teststatistic}).
Small sample corrections are used in constructing the test statistic
as discussed in Section~\ref{section:cond_test_statistic};
see Appendix~\ref{app:ssa} for additional details.

\subsection{Simulation-based calculator}
\label{section:calculator_smoking}
We start by choosing inputs for the sample size formula as outlined in Section~\ref{subsection:computations}.
We set the desired type~$1$ and type~$2$ error rates to be $5$\% and $20$\% respectively.
We next specify the targeted alternative~$\beta(t; x) = f_t (x)^\prime \beta^\star_c$
for~$\beta^\star_c \in \mathbb{R}^{q_c}$.
Suppose the scientific team suspects that if there is an effect of the mindfulness reminders, then this
effect might be negligible at the beginning of the study, increase as participants
begin to practice the mindfulness exercises and then the effect may decrease due
to habituation. Thus, we select
$f_t (X_t)^\prime =  \left( f^\prime_t \cdot {\bf 1}_{X_t = 0}, f^\prime_t \cdot {\bf 1}_{X_t = 1} \right)$
where
$f_t^\prime = \left( 1, \left \lfloor \frac{t-1}{600} \right \rfloor ,
  \left \lfloor \frac{t-1}{600} \right \rfloor^2 \right)$.
This leads to a non-parametric treatment effect model in the stratification variable~$X_t$, and
a piece-wise constant treatment effect model in time given~$X_t = x$ that
is quadratic as a function of ``day in study.''  In this case, the dimension of the~$L_2$ projection is~$q_c = 3 \cdot 2 = 6$,
$\beta_c^\star = (\beta_{c,0}^\star, \beta_{c,1}^\star) \in \mathbb{R}^{6}$ and the targeted
alternative is~$\beta(t;x) = f_t^\prime \beta_{c,x}^\star$
for~$x = 0,1$.
Next to elicit enough information from the scientist to specify~$\beta^\star_c$,
we ask scientists to specify for each level of $X$,
(1) an initial conditional effect,
(2) the day of maximal effect ($t^\star_x$)
and (3) the average conditional treatment effect~$\bar{\beta}_{c,x}
= T^{-1} \sum_{t=1}^{T} \beta(t;x)$.
This set of conditions uniquely identifies the
subvector~$\beta^\star_{c,x}$; therefore, the conditions
over each level of~$X$ combine to uniquely identify the
vector~$\beta^\star_c = (\beta^\star_{c,0}, \beta^{\star}_{c,1})$
as desired.
For this example, we will target the same alternative for both levels of the
stratification variable~$X_t$, thus $\beta^\star_{c,0}= \beta^{\star}_{c,1}$.   To set this common alternative, we use the following values:
the day of maximal effect is day~$5$ and
the initial conditional effect is $0$.   We consider three possible
common values of $\bar{\beta}_{c,0} = \bar{\beta}_{c,1}$
denoted~$\bar{\beta}$ in Table~\ref{tab:est_sample_size1}.

Here we set the control variables  to  $g_t (H_t)=f_t(X_t)$.
Furthermore suppose the formula for randomization probability depends only on past values of the
time-varying variable~$X_t$ and availability $I_t$.
We use the formula for~$p_t (a \given h_t)$ provided in Appendix~\ref{app:randprobs}.
One of the inputs to the randomization formula at an available decision point $t$  is the expected number of  episodes during the remaining part of the  day that will be classified as stressed ($X=1$) and the expected number of  episodes during the remaining day that will not be classified as stressed ($X=0$).  The generative model developed below is used to provide this input.   See appendix~\ref{app:randprobs} for further details and the specification of other inputs to this randomization formula.

\subsubsection{Generative Model}
\label{subsubsection:calculator_smoking}

We now use a subset of the
data collected in an observational, no treatment,
smoking cessation study of $61$ cigarette smokers
\citep{Saleheen:2015}
to inform the generative model of longitudinal outcomes~$\{ X_t, I_t
\}_{t=1}^{T}$.    
Study enrollment was restricted to smokers who reported smoking $10$ or more cigarettes per day
for the prior $2$ years and a high motivation to quit.
Enrolled participants select a smoking quit date.
Two weeks prior to the specified quit date,  participants
wore the AutoSense sensor suite [Ertin et al., 2011]  for~$24$ hours in their natural environment.
Participants again wore the sensor suite for~$72$ hours in
their natural environment starting on the specified quit date.
The same classification algorithm that is used in the smoking cessation example can be used with this data to produce the stress likelihood and associated episodes as described at the beginning of Section~\ref{section:smoking_example}.
Of the $61$ participants, $50$ had sufficiently high-quality
electrocardiogram data to construct the episodes and infer the stress classification for the $72$ hours post-quit.
This subset is reported in~\cite{Sarker:book:2017}. 
From this data we calculate the sample moments:
\begin{enumerate}
\item \label{p1} For each episode type (i.e., $x \in \{0,1\}$),
  the probability that the next
  episode will be a stress episode --
  i.e., a $2$ by 1 vector~$\bar{W}$
\item \label{p2} For each episode type (i.e., $x \in \{0,1\}$), the average episode length --  i.e., a $2$ by 1 vector~$\bar{Z}$
\end{enumerate}
These are: $\bar{W} = (6.7\%,51.9\%)$ and~$\bar{Z} =
(10.9,12.0)$; that is, the fraction of episodes not classified as stressed that are followed by an episode classified as stressed is $6.7\%$, the fraction of episodes classified as stressed that are followed by an episode classified as stressed is $51.9\%$,
the average length of episode not classified as stressed is $10.9$ minutes and the average length of an episode classified as stressed is $12.0$ minutes.

Using these sample moments we construct a no-treatment transition matrix for the joint process~$V_t = (X_t, U_t), t=1,\ldots, 600$ where~$X_t$ is the
time-varying stress classification and $U_t$ is the time-varying variable
indicating phases of the current episode -- ``pre-peak'', ``peak'',
and ``post-peak'' given by $U_t = 0,1,$ and $2$ respectively.  $U_t$ will be used to generate an availability indicator, $I_t$.
Each episode ends in state $V_{t} = (x, 2)$
for $x \in \{0,1\}$ and transitions to the beginning
of the next episode, $V_{t+1} = (x^\prime,0)$ for $x^\prime \in \{0,1\}$.
We restrict the transition matrix such that
for $x \in \{0,1\}$:
\begin{itemize}
\item $(x, 0)$ can \emph{only} transition to states
$(x,0)$ or $(x,1)$ (i.e., stay in state ``pre-peak'' or
transition to state ``peak'') from one minute to the next minute.
\item $(x,1)$ transitions immediately to $(x,2)$ with probability one
(i.e., $\pr (V_{t+1} = (x,2) \given V_{t} = (x,1) ) = 1$).
In other words the process inhabits the ``peak'' state for \emph{only} one minute.
\item $(x,2)$ can \emph{only} transition to states
$(x,2)$, $(0,0)$, \emph{or} $(1,0)$ (i.e., stay in state
``post-peak'' or end the episode and begin a new one).
\end{itemize}
We label each episode depending on the value~$x$.
The added complexity of the joint process~$V_t$
(in lieu of a generative model solely for $X_t$)
is used to accomodate the fact that  the scientific team decided to deliver treatment, if at all, only at  ``peaks'' of an
episode (i.e., $U_t \neq 1$ then $I_t = 0$).  Note that at the peak of
the episode, the episode is classified as stressed or not classified
as stressed. Define  $\tilde{Z}_{(x,u)}$ to be the length of the phase
$u$ in an episode of type $x$ after the chain enters state
$(x,u)$. Then $\tilde{Z}_{(x,1)} = 0$ for each~$x$ because as soon as
the chain enters the peak ($u=1$) of an episode, the chain departs.
Otherwise set $\tilde{Z}_{(x,u)} = (\bar{Z}_{x}-3)/2$ for $u=0$ and
$u=2$\footnote{We subtract three as we are guaranteed one pre-peak,
  one peak and one post-peak minute in each episode. Dividing by two
  splits the remaining average time evenly between pre-peak and
  post-peak phases of an episode.} (recall that $\bar{Z}_x$ is the
elicited average length, in minutes, of an episode classified as $x$,
under no treatment).

We set the no-treatment transition probability matrix to
\[
P^{(0)}_{(x,u), (x,u)} = \tilde{Z}_{x,u}/(\tilde{Z}_{x,u}+1)
\hspace{0.2cm} \text{ and } \hspace{0.2cm}
P^{(0)}_{(x,1), (x,2)} = 1.0
\]
for~$x \in \{0,1\}$
and $u \in \{0,2\}$,
and then set
\[
P^{(0)}_{(x, 2), (0,0)} = (1-\bar{W}_{x}) (1 - P_{(x,2), (x,2)})
\hspace{0.2cm} \text{ and } \hspace{0.2cm}
P^{(0)}_{(x, 2), (1,0)} = \bar{W}_{x} (1 - P_{(x,2), (x,2)})
\]
for $x \in \{0,1\}$ (recall that $\bar{W}_x$ is the elicited probability that the next
episode will be a stress episode).
All other entries of~$P^{(0)}$ are set to zero.
Thus~$P^{(0)}$ is a deterministic function of
the moments~$\bar{W}$ and $\bar{Z}$.  See Figure~\ref{tab:mcapprox2} for the transition matrix~$P^{(0)}$.

\begin{table}[!h]
  \caption{${P}^{(0)}$: Transition Matrix for the Markov chain, $V_t$, under No Treatment}
  \label{tab:mcapprox2}
  \centering
  \begin{tabular}{| c | c | c c c c c c c|} \hline
    \multicolumn{2}{|c|}{}
    & \multicolumn{3}{c}{Non-stress} & & \multicolumn{3}{c|}{Stress} \\
    \cline{3-5} \cline{7-9}
    \multicolumn{2}{|c|}{} & Pre-peak & Peak & Post-peak & & Pre-peak & Peak & Post-peak \\ \hline
    \multirow{3}{*}{Non-stress} & Pre-peak & 0.80 & 0.20 & 0.00 &
      & 0.00 & 0.00 & 0.00  \\
    & Peak & 0.00 & 0.00 & 1.00 & & 0.00 & 0.00 & 0.00 \\
    & Post-peak & 0.19 & 0.00 & 0.80 & & 0.01 & 0.00 & 0.00 \\ \cline{1-2}
    \multirow{3}{*}{Stress} & Pre-peak & 0.00 & 0.00 & 0.00 &
      & 0.82 & 0.18 & 0.00   \\
    & Peak & 0.00 & 0.00 & 0.00 &
    & 0.00 & 0.00 & 1.00 \\
    & Post-peak & 0.09 & 0.00 & 0.00 &
    & 0.09 & 0.00 & 0.82 \\ \hline
  \end{tabular}
\end{table}
The transition matrix~$P^{(0)}$ specified in Table~\ref{tab:mcapprox2}
has stationary distribution $(\pi_{(0,0)} = 39.4\%, \pi_{(0,1)} =
8.0\%, \pi_{(0,2)} = 39.4\%, \pi_{(1,0)} = 6.1\%, \pi_{(1,1)} =
1.1\%, \pi_{(1,2)} = 6.1\%)$.

\subsection{Generative model under treatment}
\label{section:undertreatment}
Next we form the generative model under treatment.   We make the simplifying assumption  that following treatment (i.e.,~$A_t = 1$)
stress, $V_{t+j}$,
evolves as a discrete-time Markov chain but with respect to
a different transition matrix~$P^{(1)}_{t}$ for each of the \emph{subsequent $j=1,\ldots, 60$ minutes}. After the hour, assuming a subsequent treatment notification is not provided,
the time-varying stratification variable returns to
evolution as a Markov chain with transition matrix~$P^{(0)}$. Thus,
\[
  \pr ( V_{t} = (x,u) \given V_{t-1} = (x^\prime,u^\prime), H_{t-1} )
  = \Bigg \{ \begin{array}{c c} \left[ P^{(0)}
               \right]_{(x^\prime,u^\prime), (x,u)} 
               & \text{if } A_{t-s} = 0,s = 1,\ldots,60
               \\ \left[ P^{(1)}_{t}\right]_{(x^\prime,u^\prime),
               (x,u)} 
               & \text{otherwise} \end{array}.
\]
Because the alternative~$\beta(t;x)$  is constant within each day, we will construct a transition matrix, $P^{(1)}_{t}$, that will only depend on $t$ through the day of decision $t$.  Thus we  use the notation ${P}_{d(t)}^{(1)}$  instead of  $P^{(1)}_{t}$ where $d(t)$ is the day of decision time~$t$.

Recall that in the smoking cessation study,  the treatment effect is the effect of providing
a notification at time $t$ to practice stress-reduction exercises and no more notifications
within the next hour versus no notification at time $t$ and no notifications over
the next hour on the percent of time stressed in the next hour.  Thus the  reference policy sets  the treatments~$a_{t+1},\dots,a_{t+\Delta-1}$
to $0$ and the expected proximal response under the reference policy is
\[
  \E \bigg [ \E \bigg [ \prod_{j=t+1}^{t+\Delta-1}
  \frac{1_{A_j=0}}{p_j(A_j|H_j)}Y_{t,\Delta} \, \bigg|
  \, A_t = a, H_t \bigg] \Given I_t = 1, X_t = x \bigg ]
\]
can be computed analytically for any combination of~$x$ and $a$
($\Delta=60$).
See Appendix~\ref{app:example_treatment_effect} for derivations of
the below analytic forms.
When~$a = 1$, under the proposed generative model the above
expectation is equal to~$\Delta^{-1} \sum_{s=1}^{\Delta} \sum_{u\in
  \{0,1,2\}} \left[ \left( P_{d(t)}^{(1)} \right)^{s} \right]_{(x,1),
  (1,u)}$.
When~$a = 0$, the expectation is equal to the fraction of time stressed within
the next hour under the reference policy
of no actions for that hour~$\Delta^{-1}  \sum_{s=1}^{\Delta} \sum_{u\in \{0,1,2\}}
\left[ \left( P^0 \right)^{s} \right]_{(x,1), (1,u)}$.

Given the alternative~$\beta(t;x)$ for a particular day,
we set ${P}_{d(t)}^{(1)}$ equal to
\[
  \arg \min_{Q \in \mathcal{P}} \sum_{x \in \{0,1\}} \left( \, \Delta^{-1}
    \sum_{s=1}^{\Delta} \sum_{u\in \{0,1,2\} }
    \left( \left[ Q^{s} \right]_{((x,1),(1,u))}
      - \left[ \left(P^{(0)}\right)^{s} \right]_{((x,1),(1,u))}
    \right) - \beta(t; x) \right)^2
\]
where~$\mathcal{P}$ denotes the set of transition matrices
which satisfy the constraints discussed above.
The set~$\mathcal{P}$ can be parameterized in
order to use general-purpose,
box-constrained optimization methods to
calculate~${P}_{d(t)}^{(1)}$ efficiently.
For all calculations, we initialize with inputs equivalent to
the transition matrix~$P^{(0)}$.
Using this procedure, the maximum squared distance
across all alternatives~$\beta(t;x)$
considered in this paper is $2.71 \times 10^{-11}$ (i.e.,
low approximation error).


\subsection{Generating the simulated data}
The prior section yields the no-treatment and treatment transition 
matrices~(i.e.,~$P^{(0)}$ and~$\{ P_{d}^{(1)} \}_{d=1}^{10}$)) given
the specified alternative~$\{ \beta (t;x) \}$.
We briefly show how to use this information along with the randomization
probability formula to generate synthetic data arising from a stratified micro-randomized trial.
First, we generate data for each day independently.  On a given day at time~$t$,
we first generate~$V_t$ using the transition equation in section~\ref{section:undertreatment}.
We then assess availability,~$I_t$, which is a deterministic function
of the current value of~$V_t$ and the past sixty minute history of actions~$\{ A_{t-s}
\}_{s=1}^{60}$. That is,~$I_t = {\bf 1} [ \sum_{s=1}^{60} A_{t-s} = 0
] \times {\bf 1} [U_t = 1]$.
Given~$I_t = 1$, we take the history~$H_t$ and 
generate the action at time~$t$, $A_t$, using the
given randomization probability formula~$p_t (1 \given H_t)$
found in appendix~\ref{app:randprobs}.
In order to compute the proximal outcome~$Y_{t,\Delta}$ for every minute over the ten hour day
(i.e.,~$t = 1,\ldots, 600$), we simulate an additional eleventh
hour during which participants cannot receive treatment (i.e.,
participants are unavailable). The above procedure generates
synthetic data for one participant in a stratified micro-randomized
trial.

\subsubsection{The test statistic}
\label{subsubsection:teststat_calculator_smoking}
The above provides the generative model for use in the simulation based sample size calculator.   Next consider the choice of the test statistic for use
in calculating the sample size.   In the test statistic,
(\ref{eq:teststatistic}), we set  the time $t$ reference probability as
  $\tilde{p}_t (1 \given x) =
  \sum_{x=0,1} \pi_{(x,1)} \left(1.5 / [ (600 - 1.5 \cdot 60)  \pi_{(x,1)} ]\right)
  = 2 \left(1.5 / [ (600 - 1.5 \cdot 60) ]\right)
  = 5.88 \times 10^{-3}$
(recall that the numerator of the weight, $w_{ct}$, in \eqref{eq:conditional_ls} is
$\tilde{p}_t (A_t\given x) {\prod_{s=1}^{\Delta-1} } {\bf 1} [A_{t+s}
= 0]$).
The probability,  $\tilde{p}_t(1|x)$ is equal to the daily average number of treatments while in state~$x$ divided by
the daily average number of times the participant is available and in state~$x$, marginalized
over the state~$x$.
In the denominator, the term~$1.5 \cdot 60$ is subtracted off the total number of decision points due to the availability
constraints following treatment; that is, the participant is unavailable for $60$ decision points following
a treatment notification and we require on average $1.5$ daily treatments while the participant is classified in
state~$x$, so the remaining number of decision points on average after taking into
account this deterministic constraint is approximately~$600 - 1.5 \cdot 60$.

The   test statistic, (\ref{eq:teststatistic}), with the above choice of reference probabilities, and the above generative model are used to generate the sample sizes in Table~\ref{tab:est_sample_size1}.  The column labeled, Sample Size, in this table provides the estimated
sample size to detect a specified alternative for the
conditional proximal effect
given power of~$80\%$ and significance level~$5.0\%$
for the smoking cessation study.
Recall that our input for the day of maximal effect
is day~$5$ and the input for
the initial conditional effect is $0$ for both levels
of the time-varying variable~$X_t$.
The average treatment effects~$\{ \bar{\beta}_x = T^{-1} \sum_{t=1}^T
\beta(t;x) \}_{x=0,1}$ are assumed equal across levels~$X$
and set to~$\bar{\beta}$; in the tables below  three values of $\bar{\beta}$ are considered.

\begin{table}[!htb]
    \caption{Estimated sample size, $N$, and
    achieved power.
    }
    \label{tab:est_sample_size1}
    \centering
    \begin{tabular}{l | c c}
      & Sample size & Power \\ \hline
      $\bar{\beta} = 0.030$ & 50 & 80.6 \\
      $\bar{\beta} = 0.025$ & 67 & 80.7 \\
      $\bar{\beta} = 0.020$ & 127 & 80.6 \\ \hline
    \end{tabular}
\end{table}

\subsection{Evaluation of Simulation Calculator for the Smoking Cessation Study}
\label{subsection:thirdstep}

First to assess the quality of the sample size calculator
under an ideal setting we perform $1000$ simulations.
Each simulation is based on the
transition matrices~$P^{(0)}$ and~$\{ P_{d(t)}^{(1)} \}_{d(t)=1}^{10}$,
participant being unavailable for the hour following
treatment and at non-peak times, and the randomization probability~$p_t (1 \given H_t)$.
See the last column in Table~\ref{tab:est_sample_size1}.
Each simulation consists of generating data for~$N$ individuals
and performing the hypothesis test using equation~\eqref{eq:reject}
with the small-sample size adjustment described in Appendix~\ref{app:ssa}.
Appendix~\ref{app:marginal_ss} discusses the sample size calculations
with respect to marginal proximal effect for the smoking cessation study.

Recall the relatively simple generative model allowed us to use only a
very few statistics from a small data set, namely the data set described in Section~\ref{subsubsection:calculator_smoking}. This may lead to
bias which is problematic if the bias results in
sample sizes for which the power to detect the desired effect is below
the specified power. Therefore, here we  construct 
a feasible set of alternative generative models 
to which the sample size calculator should be robust.

First we evaluate  the sensitivity of the calculator
to the assumptions on the form of the transition matrix $P^{(0)}$. in the
next section we assess robustness to
the form of the transition matrix and, how as a result of the
assessment, we make the calculator more robust to the assumptions.

Second we evaluate the sensitivity of the calculator to deviations from a Markovian generative model.  Here we once again make use of the data set described in Section~\ref{subsubsection:calculator_smoking}.

\subsubsection{Misspecification of transition matrix~$P^{(0)}$}
\label{subsubsection:ball}

We start by testing robustness of the sample size calculator
to misspecification of the transition matrix~$P^{(0)}$
for the Markov chain,~$V_t$, under no treatment;
the treatment effect is still correctly specified.
We suppose the misspecification stems from
noise related to the use of sample moments from a small data set.
Let~$B_{(\epsilon, \epsilon^\prime)}$ denote
an~$(\epsilon,\epsilon^\prime)$-ball around the
inputs~$(\bar{W}, \bar{Z})$; that is,
\[
  B_{(\epsilon, \epsilon^\prime)} = \{ \, (W, Z)
  \hspace{0.2cm} | \hspace{0.2cm}
  \| W - \bar{W} \|_{\infty} \leq \epsilon
  \text{ and }
  \| Z - \bar{Z} \|_{\infty} \leq \epsilon^\prime
  \, \}.
\]
For each~$(W,Z) \in B_{(\epsilon,\epsilon^\prime)}$, we wish to
compute the achieved power under the alternative generative model
where~$V_t$ under no treatment
evolves as a Markov chain with transition matrix~$P$
constructed from inputs~$W$ and~$Z$.
In practice, this is computationally prohibitive as
the cardinality of~$B_{(\epsilon, \epsilon^\prime)}$ is large.
Simulation suggests power to be a smooth, non-increasing
function of both~$\epsilon$ and~$\epsilon^\prime$,
so instead we focus on computing
power for the following subset of $B_{(\epsilon, \epsilon^\prime)}$:
\begin{equation*}
  \Omega_{(\epsilon, \epsilon^\prime)} =
  \{ (W,Z) \, | \,
  W \in \bar{W} \pm \left \{ (\epsilon, -\epsilon), (\epsilon,
    \epsilon)\right \}
  \text{ and }
  Z \in \bar{Z} \pm \left \{ (\epsilon^\prime, -\epsilon^\prime),
    (\epsilon^\prime,\epsilon^\prime)\right \}.
\end{equation*}
For each pair~$(W,Z) \in \Omega_{(\epsilon, \epsilon^\prime)}$
we compute the associated transition matrix~$P$;
then we compute the sequence of transition matrices~$P_{d(t)}^{(1)}$
which maintain the correct alternative treatment effect.
We define the power for~$B_{(\epsilon, \epsilon^\prime)}$ to be the
minimum power across $(W,Z) \in \Omega_{(\epsilon, \epsilon^\prime)}$.
Table~\ref{tab:ball_1} presents
 achieved power under the previously calculated
sample sizes for~$\Omega_{(0.02,4)}$ and~$\Omega_{(0.01,2)}$
respectively.
For both~$(\epsilon,\epsilon^\prime) = (0.01,2)$
and~$(\epsilon,\epsilon^\prime) = (0.02, 4)$,
the achieved power is significantly
below the pre-specified 80\% level
for all three choices of the average treatment effect~$\bar{\beta}$.

\begin{table}[!htb]
      \caption{Misspecification of transition matrix~$P^{(0)}$:
        minimum achieved power \\
        over set of matrices in~$\Omega_{\epsilon, \epsilon^\prime}$}
      \label{tab:ball_1}
      \centering
        \begin{tabular}{c |c c}
          & \multicolumn{2}{c}{$(\epsilon, \epsilon^\prime) = $} \\
          & $(0.02,4)$ & $(0.01,2)$ \\ \hline
          $\bar{\beta} = 0.030$ & 57.5 & 61.5 \\
          $\bar{\beta} = 0.025$ & 43.9 & 52.2 \\
          $\bar{\beta} = 0.020$ & 40.4 & 65.6 \\ \hline
        \end{tabular}
\end{table}

\subsubsection{Deviations from a time-homogenous transition matrix
under no treatment}
\label{subsubsection:weekend}
Next we test robustness of the sample size calculator
to a different type of misspecification of the transition matrix~$P^{(0)}$, that of time-inhomogeneity;
as before the treatment effect is still correctly specified.
In particular suppose that the assumed transition matrix, $P^{(0)}$, is correct for weekdays but not for weekends;
in particular, suppose in reality that  the transition matrix
under no treatment on the weekend is~$P^{(0)}_{\text{weekend}} \neq P^{(0)}$.
The weekend is defined as~$d(t) = 6$ and $7$ (i.e., all
participants enter the study on a Monday).
We specify~$P^{(0)}_{\text{weekend}}$ via inputs~$(\bar{W}_{\text{weekend}},\bar{Z}_{\text{weekend}})$
which we set to two possible values
\[
\underbrace{( (0.04, 0.45), (10.9, 12.0) )}_{\text{weekend inputs }(1)}
\hspace{0.2 cm} \text{ or }  \hspace{0.2 cm}
\underbrace{( (0.10, 0.60), (10.9, 12.0) )}_{\text{weekend inputs }(2)}.
\]
Using the inputs we construct two alternate versions of what the true transition matrix~$P^{(0)}_{\text{weekend}}$ might be.
In the former, the individual is less likely to
enter a stress episode over the weekend; in the
latter, the individual is more likely to enter
a stress episode over the weekend.  In both cases,
the average episode lengths are assumed equal to~$\bar{W}$.

To test the calculator, we generate data using the no-treatment
transition matrices $P^{(0)}_{\text{weekend}}$ (for the weekend)
and $P^{(0)}$(for the weekday).  This data is simulated so that
the treatment effect used by the calculator is still correct (e.g. we
select the transition matrices under treatment, $P_{d(t)}^{(1)}$, to
ensure this).
However the expectation~$\E[w_{ct} (H_{t+\Delta-1}) Y_{t,\Delta}\mid
H_t, I_t=1]$ will not be quadratic in day-of-study.

Table~\ref{tab:power_weekendeffect}
presents achieved power  under these
alternative generative models.
We see that the achieved power is below the
pre-specified 80\% threshold in each case
except for~$\bar{\beta} = 0.020$ under weekend
input 1.
If the scientist thought such deviations feasible,
then the above analysis suggests for the smoking
cessation example that the sample size be set
to ensure a least~$80\%$ power \emph{over a
set of feasible choices for time-inhomogeneous
choices for the no-treatment transition matrix.}

\begin{table}[!ht]
  \caption{Estimated power under generative
    model with time-inhomogeneous Markov chain.}
  \label{tab:power_weekendeffect}
  \centering
  \begin{tabular}{c | r r}
    & \multicolumn{2}{c}{Estimated power} \\
    & Weekend Input 1 & Weekend Input 2 \\ \hline
    $\bar{\beta} = 0.030$ & 79.2 & 69.8 \\
    $\bar{\beta} = 0.025$ & 72.5 & 66.0 \\
    $\bar{\beta} = 0.020$ & 81.5 & 76.4 \\ \hline
  \end{tabular}
\end{table}

\subsubsection{Deviations from a Markovian generative model}
\label{subsection:SMCdev}

Here we use the data set described in Section~\ref{subsubsection:calculator_smoking} 
to construct feasible deviations to the simple Markovian generate
model. In particular, we present an exploratory data analysis
where we over-fit the noisy, small data  to build 
a more complex semi-Markovian generative model.  Due to the small size
of this data set, such complex alternatives may be due to noise and
not reflect the behavior of trial participants.  However these complex
alternatives can be used to assess robustness of our sample size
calculator. Therefore, after presenting data analysis suggesting the
semi-Markovian deviation, we then assess robustness of the sample size
calculator to this particular deviation.


We start by considering the episodic transition
rule.  The Markovian model assumes that the
episode transitions only depend on the prior
episode classification. We test this by fitting
a logistic regression with episode classification
as the response variable with lagged values of episode classification
as well as additional summaries of past history, including
prior episode durations and time of day.
Analysis suggests that neither time of day
nor prior episode duration were statistically
significant.  We used forward selection to determine
the number of lagged values of episode classification.
Using this procedure, we include two lagged values of
episode classification in our over-fit model.
Table~\ref{tab:app_SMCtrans} presents the
estimates of the logistic regression along
with robust standard errors and confidence intervals.
The likelihood ratio test failed to reject the null when
comparing this model to the larger model in which
all interactions among the lagged variables were included
(i.e., a nonparameteric Markovian model).

\begin{table}[!ht]
\caption{Parameter estimates for the logistic regression.
Response is indicator of current episode being a stress
episode.}
\label{tab:app_SMCtrans}
\centering
\begin{tabular}{l | r r r r r}
\hline
Parameter & Estimate & Std. Error & $95\%$ LCL
  & $95\%$ LCL \\ \hline
  Intercept & $-2.83$ & $0.10$ & $-3.03$ & $-2.63$ \\
  $1$L Stress Ep. & $2.75$ & $0.20$ & $2.37$ & $3.14$ \\
  $2$L Stress Ep. & $0.71$ & $0.22$ & $0.27$ & $1.14$ \\ \hline
\end{tabular}
\end{table}

The over-fit, two-lagged Markovian model leads to
slightly distinct behavior of the transition
rules.  For example, given the prior episode was a stress episode,
the probability of the next episode being a
\emph{stress} episode ranges from 48.0\% (prior
episode was non-stress) to 65.2\% (prior
episode was stress).  Given the
prior expisode was a non-stress episode,
the probability of the next episode being a
\emph{stress} episode ranges from 5.6\% (prior
episode was non-stress) to 10.7\% (prior two
episodes was stress).
Table~\ref{tab:app_SMCtrans} suggests a
different Markovian model in which the state
is~$(X_t, U_t, L^{(1)}_t)$
where~$L^{(1)}_t$ denotes the classification of the
prior episode.

\begin{figure}[!ht]
\centering
\begin{subfigure}{.5\textwidth}
  \centering
  \includegraphics[width=.9\linewidth]{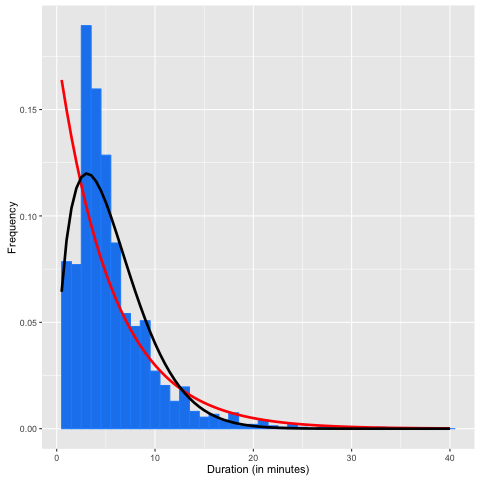}
  \caption{Pre-peak duration}
  \label{fig:sub1}
\end{subfigure}%
\begin{subfigure}{.5\textwidth}
  \centering
  \includegraphics[width=.9\linewidth]{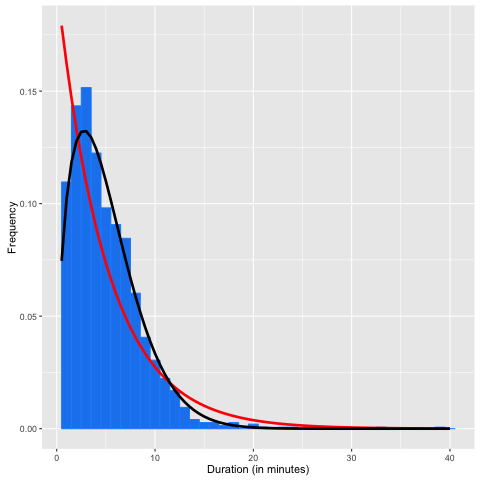}
  \caption{Post-peak duration}
  \label{fig:sub2}
\end{subfigure}
\caption{Histograms of duration for pre/post-peak durations.
  Empirical bayes pdfs for exponential (red) and weibull (black) densities
  are overlayed.}
\label{fig:durations}
\end{figure}

We next examine the pre and post peak durations.
Under the Markovian model, the duration of each
period is exponentially distributed.
Figure~\ref{fig:durations} shows histograms of the
duration of pre and post peak durations in the
analyzed subset of data along with empirical
Bayes estimates of the probability density functions
under both exponential and Weibull distribution
specifications.
We recognize the durations are discrete
and the above distributions are continuous.  These are
fit for simplicity. When generating the episode duration 
we generate a random variable from the continuous distribution 
and take the integer part of that random variable.
It is evident from the figures that the Weibull
distribution is more appropriate.  This is supported
by data analysis presented below.

Table~\ref{tab:app_SMCdur} presents the
parameter estimates for this over-fit model to the duration
data assuming a Weibull distribution\footnote{Models are fit to
duration minus one as pre and post peak durations are guaranteed
to be greater than one. Thus we are modeling the duration in
the state above the minimum value of one.}.  Like
the episodic transition rules, the post and pre
peak durations now depend on the current episode classification
as well as the prior episode classifications.
The exploratory data analysis suggests a semi-Markovian model in which
the pre/post peak durations are Weibull distributed, 
and the state is given by~$(X_t, U_t, L^{(1)}_t, L^{(2)}_t)$
where~$L^{(i)}_t$ denotes the classification of the
$i$th prior episode.

\begin{table}[!ht]
\caption{Parameter estimates for each Weibull
survival regression.}
\label{tab:app_SMCdur}
\centering
\begin{tabular}{l | r r r r r r r}
  & \multicolumn{3}{c}{Pre-peak} & & \multicolumn{3}{c}{Post-peak} \\ \cline{2-4} \cline{6-8}
Parameter & Estimate & Std. Error & p-value
  & & Estimate & Std. Error & p-value \\ \hline
  Intercept & $1.78$ & $0.016$ & $0.000$ &
                     & $1.59$ & $0.02$ & $0.000$ \\
  $0$L Stress Ep. & $-0.20$ & $0.037$ & $0.000$ &
                     & $0.45$ & $0.07$ & $0.000$ \\
  $1$L Stress Ep. & - & - & - & 
                     & $-0.21$ & $0.058$ & $0.004$ \\
  $2$L Stress Ep. & - & - & - &
                     & $-0.16$ & $0.07$ & $0.020$ \\
  Log(scale) & $-0.24$ & $0.015$ & $0.000$ &
                     & $-0.31$ & $0.05$ & $0.000$ \\ \hline
\end{tabular}
\end{table}


Next we test robustness of the sample size calculator
to the semi-Markovian deviations described above.
To test the calculator, we generate data using the
no-treatment semi-Markov model specified in Appendix~\ref{app:SMC}.
The data is simulated so that the treatment effect used
by the calculator is correct. See Appendix~\ref{app:SMC} for
a discussion of how this was achieved.

Table~\ref{tab:power_SMCeffect} 
presents achieved power under these alternative 
generative models. 
We see that the achieved power is 
well above the pre-specified 80\%
threshold in each case. 
Therefore the sample size calculator is robust to
such complex deviations from the Markovian generative model.
For the given the alternative effect $\beta(t;x)$ 
and semi-Markov generative model we calculate
the standardized effects.  These are provided in
Table~\ref{tab:std_effect2} in Appendix~\ref{app:add_details}.

\begin{table}[!ht]
  \caption{Estimated power under 
    semi-Markov generative.}
  \label{tab:power_SMCeffect}
  \centering
  \begin{tabular}{c | r }
    & Estimated power \\ \hline
    $\bar{\beta} = 0.030$ & 93.6 \\
    $\bar{\beta} = 0.025$ & 88.0 \\
    $\bar{\beta} = 0.020$ & 93.6 \\ \hline
  \end{tabular}
\end{table}

\subsection{Adjustments to the  simulation-based calculator}

In section~\ref{subsection:thirdstep} we evaluated the simulation calculator
built in section~\ref{section:calculator_smoking}.
Here we make adjustments to the simulation calculator to ensure
robustness. 
First, we note that the simulation calculator is robust
to the potential semi-Markovian deviation discussed 
in Section~\ref{subsection:SMCdev}.
Next we make the decision that we are not
concerned with lack of robustness to
deviations from a time-homogenous transition matrix as
discussed in section~\ref{subsubsection:weekend}.
Therefore we focus on making the simulation calculator
robust to misspecification of Markov transition matrix as
discussed in section~\ref{subsubsection:ball}.

Analysis in section~\ref{subsubsection:ball} suggests for the smoking
cessation example that the sample size should be
set to ensure at least 80\% power \emph{over a set of
feasible choices for the transition matrix~$P^{(0)}$}.
We fix~$(\epsilon,\epsilon^\prime) = (0.01,2)$ to be our
tolerance to misspecification of the inputs.
For each set of inputs~$(W,Z) \in \Omega_{0.01,2}$,
we compute a sample size using the simulation calculator built in
Section~\ref{section:calculator_smoking}.
The maximum of this set of computed sample sizes is chosen
to ensure tolerance to misspecification of the transition
matrix. Table~\ref{tab:est_sample_size_robust} presents
the sample size under this procedure as well as the
achieved \emph{minimum power} over the set~$\Omega_{\epsilon,\epsilon^\prime}$.

\if1\comments
\sam{is the last column in table 4 the achieved power or is the
last column the power you computed  against the worst case in ~$\Omega_{\epsilon,
\epsilon^\prime}$? Tell reader. Need to be clear about this.
Separate the sample size calculation from the evaluation of the calculator.
What power do we get if the worst case scenario does not hold?
I suggest to make this table include both the power you calculated as
well as the achieved power}
\walt{I was a bit confused by this comment so I add here some
text to hopefully clear up the confusion.  Table~\ref{tab:ball_1}
is aimed at providing the \emph{achieved} power and is designed to
evaluate the calculator.  The table suggests a large dip
in achieved power.  The subsequent analysis is meant to serve
as an ``edit'' to the sample size calculation.  So the second column
in Table~\ref{tab:est_sample_size_robust} is the achieved power in the
more complex sample size calculation.}
\fi

\begin{table}[!htb]
    \caption{Estimated sample size, $N$, and
      computed power under~$\epsilon = 2$
      and $\epsilon^\prime = 0.01$.}
    \label{tab:est_sample_size_robust}
    \centering
    \begin{tabular}{l | c c}
      & Sample size & Minimum Power \\ \hline
      $\bar{\beta} = 0.030$ & 69 & 81.9 \\
      $\bar{\beta} = 0.025$ & 107 & 80.4 \\
      $\bar{\beta} = 0.020$ & 208 & 80.5 \\ \hline
    \end{tabular}
\end{table}

We have now used the three-step procedure to form a
sample size calculator for the smoking cessation study example.
For illustration suppose we wish to detect  an average conditional treatment
effect $\bar{\beta}$ equal to $0.025$.
Based on the above discussion a sample size,~$N$, of $107$ would
be recommended to ensure power above the pre-specified 80\% threshold
across a set of feasible deviations from the assumed generative model.

\section{Discussion}
\label{section:alternatives}
In this paper we introduced the ``stratified micro-randomized trial''
and provided a definition and discussion of proximal treatment effects
along with the dependence of this definition on a reference
distribution. We proposed a simulation-based approach for determining
sample size and used this approach to determine the sample size for a simplified
version of the MD2K smoking cessation study.  We expect that similar
trial designs would be applicable in areas such as  marketing and
advertising in which each client is tracked and provided incentives,
e.g. treatments, repeatedly over time, and it is of interest to
determine in which contexts particular treatments are most effective.


While the focus here is sample size considerations, stratified
micro-randomized studies yield data for a variety of interesting
secondary data analyses.  For example, understanding
predictors of future availability is of general interest
as keeping participants engaged in the mobile health intervention
is often of high concern.
Moreover, there is interest in using the data in constructing
``dynamic treatment regimes'' (e.g., just-in-time adaptive
interventions~\citep{Spruijt2014}). The stratified micro-randomized
trial improves such analyses by reducing causal confounding.

\bibliographystyle{plainnat}
\bibliography{str-mrts-refs}

\begin{thebibliography}{25}
\providecommand{\natexlab}[1]{#1}
\providecommand{\url}[1]{\texttt{#1}}
\expandafter\ifx\csname urlstyle\endcsname\relax
  \providecommand{\doi}[1]{doi: #1}\else
  \providecommand{\doi}{doi: \begingroup \urlstyle{rm}\Url}\fi

\bibitem[Boruvka et~al.(2017)Boruvka, Almirall, Witkiewitz, and
  Murphy]{Boruvkaetal}
A.~Boruvka, D.~Almirall, K.~Witkiewitz, and S.A. Murphy.
\newblock Assessing time-varying causal effect moderation in mobile health.
\newblock To appear in the {\it Journal of the American Statistical
  Association}, 2017.

\bibitem[Dallery et~al.(2013)Dallery, Cassidy, and Raiff]{Dallery2013}
J.~Dallery, N.~R. Cassidy, and R.~B. Raiff.
\newblock Single-case experimental designs to evaluate novel technology-based
  health interventions.
\newblock \emph{J Med Internet Res}, 15\penalty0 (2):\penalty0 e22, 2013.
\newblock URL \url{http://www.jmir.org/2013/2/e22/}.

\bibitem[Dempsey et~al.(2015)Dempsey, Liao, Nahun-Shani, and
  Murphy]{Dempsey_Significance}
W.~Dempsey, P.~Liao, P.~Klasnja~I. Nahun-Shani, and S.A. Murphy.
\newblock Randomised trials for the fitbit generation.
\newblock \emph{Significance}, 12\penalty0 (6):\penalty0 20--23, 2015.

\bibitem[Diggle et~al.(2002)Diggle, Heagerty, Liang, and Zeger]{Diggle2002}
P.J. Diggle, P.~Heagerty, K.Y. Liang, and S.L. Zeger.
\newblock \emph{Analysis of Longitudinal Data}.
\newblock Oxford Science Publications. Clarendon press, 2002.

\bibitem[Ertin et~al.(2011)Ertin, Stohs, Kumar, Raij, al'Absi, and
  Shah]{Autosense:Ertin:2011}
E.~Ertin, N.~Stohs, S.~Kumar, A.~Raij, M.~al'Absi, and S.~Shah.
\newblock Autosense: Unobtrusively wearable sensor suite for inferring the
  onset, causality, and consequences of stress in the field.
\newblock In \emph{Proceedings of the 9th ACM Conference on Embedded Networked
  Sensor Systems}, pages 274--287, New York, NY, USA, 2011.

\bibitem[Free et~al.(2013)Free, Phillips, Galli, Watson, Felix, Edwards, Patel,
  and Haines]{Freeetal2013}
C.~Free, G.~Phillips, L.~Galli, L.~Watson, L.~Felix, P.~Edwards, V.~Patel, and
  A.~Haines.
\newblock The effectiveness of mobile-health technology-based health behaviour
  change or disease management interventions for health care consumers: A
  systematic review.
\newblock \emph{PLOS Medicine}, 10\penalty0 (1):\penalty0 1--45, 2013.

\bibitem[Hong and Raudenbush(2006)]{Hong2006}
G.~Hong and S.~W. Raudenbush.
\newblock Evaluating kindergarten retention policy.
\newblock \emph{Journal of the American Statistical Association}, 101\penalty0
  (475):\penalty0 901--910, 2006.

\bibitem[Hotelling(1931)]{Hotelling1931}
H.~Hotelling.
\newblock The generalization of student's ratio.
\newblock \emph{Annals of Mathematical Sciences}, 2\penalty0 (3):\penalty0
  360--378, 1931.

\bibitem[Hovsepian et~al.(2015)Hovsepian, al'Absi, Ertin, Kamarck, Nakajima,
  and Kumar]{Hovsepian:2015}
K.~Hovsepian, M.~al'Absi, E.~Ertin, T.~Kamarck, M.~Nakajima, and S.~Kumar.
\newblock cstress: Towards a gold standard for continuous stress assessment in
  the mobile environment.
\newblock In \emph{Proceedings of the 2015 ACM International Joint Conference
  on Pervasive and Ubiquitous Computing}, UbiComp '15, pages 493--504, New
  York, NY, USA, 2015. ACM.

\bibitem[Imbens and Rubin(2015)]{Imbens2015}
G.W. Imbens and D.B. Rubin.
\newblock \emph{Causal Inference for Statistics, Social, and Biomedical
  Sciences: An Introduction}.
\newblock Cambridge University Press, New York, NY, USA, 2015.

\bibitem[Klasnja et~al.(2015)Klasnja, Hekler, Shiffman, Boruvka, Almirall,
  Tewari, and Murphy]{Klasjnaetal}
P.~Klasnja, E.B. Hekler, S.~Shiffman, A.~Boruvka, D.~Almirall, A.~Tewari, and
  S.A. Murphy.
\newblock Micro-randomized trials: An experimental design for developing
  just-in-time adaptive interventions.
\newblock \emph{Health Psychology}, 34:\penalty0 1220--1228, 2015.

\bibitem[Kravitz et~al.(2014)Kravitz, Duan, eds, and the DEcIDE Methods Center
  N-of 1~Guidance~Panel]{Kravitz2014}
R.L. Kravitz, N.~Duan, eds, and the DEcIDE Methods Center N-of
  1~Guidance~Panel.
\newblock Design and implementation of n-of-1 trials: A user’s guide.
\newblock \emph{AHRQ Publication}, 13\penalty0 (14), January 2014.
\newblock URL \url{http://www.effectivehealthcare.ahrq.gov/N-1-Trials.cfm}.

\bibitem[Liang and Zeger(1986)]{Liang1986}
KY~Liang and SL~Zeger.
\newblock Longitudinal data analysis using generalized linear models.
\newblock \emph{Biometrika}, 73\penalty0 (1):\penalty0 13--22, 1986.

\bibitem[Liao et~al.(2016)Liao, Klasjna, Tewari, and Murphy]{Liaoetal2015}
P.~Liao, P.~Klasjna, A.~Tewari, and S.A. Murphy.
\newblock Micro-randomized trials in mhealth.
\newblock \emph{Statistics in Medicine}, 35\penalty0 (12):\penalty0 1944--71,
  2016.

\bibitem[Mancl and DeRouen(2001)]{Mancl2001}
LA. Mancl and TA. DeRouen.
\newblock A covariance estimator for {GEE} with improved small-sample
  properties.
\newblock \emph{Biometrics}, 57\penalty0 (1):\penalty0 126--134, 2001.

\bibitem[McDonald et~al.(2017)McDonald, Quinn, Vieira, O'Brien, White,
  Johnston, and Sniehotta]{McDonaldetal}
S.~McDonald, F.~Quinn, R.~Vieira, N.~O'Brien, M.~White, D.W. Johnston, and F.F.
  Sniehotta.
\newblock The state of the art and future opportunities for using longitudinal
  n-of-1 methods in health behaviour research: a systematic literature
  overview.
\newblock \emph{Health Psychology Rev.}, 0\penalty0 (0):\penalty0 1--17, 2017.

\bibitem[Pearl(2009)]{Pearl2009}
J.~Pearl.
\newblock Causal inference in statistics: An overview.
\newblock \emph{Statistics Surveys}, 3:\penalty0 96--146, 2009.

\bibitem[Robins(1986)]{Robins}
J.~Robins.
\newblock A new approach to causal inference in mortality studies with a
  sustained exposure period-application to control of the healthy worker
  survivor effect.
\newblock \emph{Mathematical Modelling}, 7\penalty0 (9):\penalty0 1393--1512,
  1986.

\bibitem[Rubin(1978)]{Rubin}
DB. Rubin.
\newblock Bayesian inference for causal effects: The role of randomization.
\newblock \emph{The Annals of Statistics}, 6\penalty0 (1):\penalty0 34--58,
  1978.

\bibitem[Saleheen et~al.(2015)Saleheen, Ali, Hossain, Sarker, Chatterjee,
  Marlin, Ertin, al'Absi, and Kumar]{Saleheen:2015}
N.~Saleheen, A.A. Ali, S.M. Hossain, H.~Sarker, S.~Chatterjee, B.~Marlin,
  E.~Ertin, M.~al'Absi, and S.~Kumar.
\newblock puffmarker: A multi-sensor approach for pinpointing the timing of
  first lapse in smoking cessation.
\newblock In \emph{Proceedings of the 2015 ACM International Joint Conference
  on Pervasive and Ubiquitous Computing}, UbiComp '15, pages 999--1010, New
  York, NY, USA, 2015. ACM.
\newblock URL \url{http://doi.acm.org/10.1145/2750858.2806897}.

\bibitem[Sarker et~al.(2016)Sarker, Tyburski, Rahman, Hovsepian, Sharmin,
  Epstein, Preston, Furr-Holden, Milam, Nahum-Shani, al'Absi, and
  Kumar]{Sarker:2016}
H.~Sarker, M.~Tyburski, M.M. Rahman, K.~Hovsepian, M.~Sharmin, D.H. Epstein,
  K.L. Preston, C.D. Furr-Holden, A.~Milam, I.~Nahum-Shani, M.~al'Absi, and
  S.~Kumar.
\newblock Finding significant stress episodes in a discontinuous time series of
  rapidly varying mobile sensor data.
\newblock In \emph{Proceedings of the 2016 CHI Conference on Human Factors in
  Computing Systems}, CHI '16, pages 4489--4501, Santa Clara, California, USA,
  2016. ACM.

\bibitem[Sarker et~al.(2017)Sarker, Hovsepian, Chatterjee, Nahum-Shani, Murphy,
  Spring, Ertin, al'Absi, Nakajima, and Kumar]{Sarker:book:2017}
H.~Sarker, K.~Hovsepian, S.~Chatterjee, I.~Nahum-Shani, S.A. Murphy, B.~Spring,
  E.~Ertin, M.~al'Absi, M.~Nakajima, and S.~Kumar.
\newblock From markers to interventions: The case of just-in-time stress
  intervention.
\newblock In J.M. Regh, S.A. Murphy, and S.~Kumar, editors, \emph{Mobile Health
  Sensors, Analytic Methods, and Applications}. Springer International
  Publishing, 2017.

\bibitem[Spruijt-Metz and Nilsen(2014)]{Spruijt2014}
D.~Spruijt-Metz and W.~Nilsen.
\newblock Dynamic models of behavior for just-in-time adaptive interventions.
\newblock \emph{Pervasive Computing, IEEE}, 13\penalty0 (46):\penalty0 28--35,
  2014.

\bibitem[Vanderweele et~al.(2013)Vanderweele, Hong, Jones, and
  Brown]{Vanderweele2013}
T.~J. Vanderweele, G.~Hong, S.M. Jones, and J.L. Brown.
\newblock Mediation and spillover effects in group-randomized trials: A case
  study of the 4rs educational intervention.
\newblock \emph{Journal of the American Statistical Association}, 108\penalty0
  (502):\penalty0 469--482, 2013.

\bibitem[Wikipedia(2017)]{wiki:f1score}
Wikipedia.
\newblock F1 score --- {W}ikipedia{,} the free encyclopedia, 2017.
\newblock URL \url{https://en.wikipedia.org/wiki/F1_score}.
\newblock [Online; accessed 23-May-2017].

\end{thebibliography}

\appendix
\section{Randomization probabilities}
\label{app:randprobs}

Here we provide a brief discussion of
how the randomization probabilities~$p_{t} ( a \given H_{t})$ might be calculated.
Suppose we require the participant to receive on average a certain number of interventions per day
at the various levels of the time varying covariates,~$N_x$ for $x \in
\{0,\dots,k\} := [k]$.
In the smoking cessation example, we have the time-varying covariate taking values in
$\mathcal{X} = \{ 0 = \text{Non-stressed}, 1 = \text{Stressed} \}$.
We aim for participants to receive on average one and a half
interventions per day when classified as stressed and one and a half
interventions per day when \emph{not} classified as stressed.
Formally, our randomization algorithm
is designed to satisfy the following:
\begin{equation}
\label{eq:numint_constraint}
\mathbb{E} \left[ \sum_{i=1}^T A_t I_t{\bf 1} [ X_{t_i} = x ]\right] = N_x
\end{equation}
for each~$x \in [k]$. The inputs of the randomization
algorithm are~$\{N_x\}$, a tuning parameter~$\lambda \in [0,1]$, and
a prediction, denoted by $g_t(x,r;h)$, at time $t$ of the number of
times in state~$x$ and available during the remaining time~$r$ given data $h$.
The probability to assign treatment at time~$t$ given~$H_t$
is given by
\begin{equation}
\label{eq:randprobs}
p_t (1 \given H_{t}) = \frac{N_{X_{t}} - \sum_{s=1}^{t-1} \left[ \lambda_{s} A_{s} + (1 - \lambda_{s})
	p_s ( 1 \given H_{s} ) \right] {\bf 1} [ X_{s} = X_{t} ]}{ 1 + g_t(X_{t}, T - t; H_t )}
\end{equation}
where $\lambda_s = \lambda^{t-s}$.
In addition, we restrict the randomization probability within the interval $[\epsilon, 1-\epsilon]$.

To derive  (\ref{eq:randprobs}) we start by re-writing equation~\eqref{eq:numint_constraint} as
\begin{align*}
\EE \left[ \sum_{t=1}^{T} A_t \indicator{X_t = x} \right] = \EE \left[ \sum_{t=1}^{T} p_t (1 \given H_t)
\indicator{X_t = x} \right] 
\end{align*}
or for $\lambda_t \in (0,1)$,
\begin{align*}
\EE \left[ \sum_{t=1}^{T} A_t \indicator{X_t = x} \right]
=\EE \left[\sum_{t=1}^{T} \big(\lambda_t A_t + (1-\lambda_t) p_t (1 \given H_t) \big) \indicator{X_t = x}\right]
\end{align*}
The conditional expectation of the latter given the current history $H_s$ is
\begin{align*}
&\EE\left[\sum_{t=1}^{T} \big(\lambda_t A_t + (1-\lambda_t) p_t (1 \given H_t) \big) \indicator{X_t = x}
\given H_s \right] \notag
\\
=& \sum_{t=1}^{s-1}  \left( \lambda_t A_t + (1-\lambda_t) p_t (1 \given H_t) \right) \indicator{X_t = x}  + p_s (1 \given H_s)
+ \EE \left[\sum_{t=s+1}^{T} p_t (1 \given H_t) \indicator{X_t = x} \given H_s \right] 
\end{align*}

We aim to find $p_s (1 \given H_s)$ such that
$N_x \approx \EE[\sum_{t=1}^{T} \left(\lambda_t A_t + (1-\lambda_t) p_t (1 \given H_t)\right)\indicator{X_t = x}| H_s]$
for each~$x \in [k]$.
However, at the end of $s$ decision time, we do not have access to
future randomization probabilities, i.e. $p_t (1 \given H_t)$ under
$X_t = x$ for $t \geq s+1$, which appears in the last term above. As
such, we approximate $p_t (1 \given H_t)$ by $p_s (1 \given H_s)$
whenever $X_t = x$, e.g. using the same randomization probabilities
for future time points, and obtain $p_s (1 \given H_s)$ by solving:
\begin{align*}
N_x = \sum_{t=1}^{s-1}  \left( \lambda_t A_t + (1-\lambda_t) p_t (1 \given H_t) \right) \indicator{X_t = x}  + p_s (1 \given H_s) +
\EE\left[\sum_{t=s+1}^{T} p_s (1 \given H_s) \indicator{X_t = x} \given H_s \right]
\end{align*}
That is,
\[
p_s (1 \given H_s) = \frac{N_x -\sum_{t=1}^{s-1}  \left[ \lambda_t A_t + (1-\lambda_t) p_t (1 \given H_t) \right]
	\indicator{X_t = x} }{1 + \EE\left[ \sum_{t=s+1}^{T}  \indicator{X_t = x } \given H_s \right]}
\]
The remaining problem is to approximate $g_t(1, T-s;H_s)=\EE \left[ \sum_{t=s+1}^{T}  \indicator{X_t = 1} \given H_s \right]$,
which is, at time $s$, the expected total number of future, available time points classified as $x$ given the current history.
This approximation is essentially a prediction problem.
There are a variety of approaches one can take depending on the data
available.  These prediction approaches may use distributional
assumptions, such as Markovian assumptions, on $X_t$ if there is only
a small amount of data to form these predictions or can employ more
black box predictions if there is a great deal of data.
In the actual smoking cessation study, there was only a small amount
of pre-existing data so a Markovian model was used to form the
predictions. For the smoking cessation study example, we set~$N =
(1.65, 2.15)$, $\lambda =0.3$, $\epsilon = 0.001$ and the remainder
function is set according to following rule:
\begin{itemize}
\item If~$(T-t) - N_{x} \cdot 60 < 60$ 
  then set $g_t (x, T-t; H_t) = (T-t) \pi_x$, 
\item else if~$(T-t) - N_{x} \cdot 60 < 120$ 
  then set $g_t (x, T-t; H_t) = (T-t - 60) \pi_x$, 
\item else set $g_t (x, T-t; H_t) = (T-t - 120) \pi_x$, 
\end{itemize}
where~$\pi$ is the stationary distribution of the markov transition
matrix~$P^{(0)}$. The reason for this complex rule is we must 
incorporate the fact that a participant is unavailable for the hour
following delivery of an intervention.


%
%
%
%
%

\section{Technical Arguments}
\label{app:technical}

\subsection{Treatment effects under potential outcomes framework}

\begin{proof}[Proof of Lemma \ref{lemma:cond_effect}]
We establish Lemma~\ref{lemma:cond_effect} for the marginal treatment effect.
The argument for the conditional treatment effect follows from a similar argument.

For~$a \in \{ 0,1\}$, we consider the
 \begin{align*}
\E &\left[\left(\prod_{j=1}^{t-1} p_j(a_j|H_j(\bar a_{j-1})) \right)
 Y_{t,\Delta} (\bar{a}_{t-1}, a, \bar 0)  I_t (\bar{a}_{t-1})\right] \\
=  \E & \left[\left(\prod_{j=1}^{t-1} p_j(a_j|H_j(\bar a_{j-1})) \right)
I_t (\bar{a}_{t-1})
 \E \left[ Y_{t,\Delta} (\bar{a}_{t-1}, a, \bar 0)  \given H_t (\bar{a}_{t-1}) \right]  \right]
 \end{align*}
Since the history~$H_t$ includes availability at time~$t$.  By consistency,~$H_t(\bar{A}_{t-1}) = H_t$
and~$I_t (\bar{a}_{t-1}) = I_t$ so the above is equal to
 \begin{equation} \label{eq:cons_version}
\E \left[\left(\prod_{j=1}^{t-1} p_j(a_j| H_j ) \right) I_t
 \E \left[ Y_{t,\Delta} (\bar{a}_{t-1}, a, \bar 0)  \given H_t \right]  \right]
 \end{equation}
Sequential ignorability implies that
\begin{align*}
 \E &\left[ Y_{t,\Delta} (\bar{a}_{t-1}, a, \bar 0)  \given H_t \right]  \\
 = \E &\left[ Y_{t,\Delta} (\bar{a}_{  t-1}, a, \bar 0)  \given H_{t}, A_t = a \right]
\end{align*}
Sequential ignorability also implies that
 $\E [ Y_{t,\Delta} (\bar{a}_{t-1}, a, \bar 0)  \given H_{t+k} ] \E [ 1_{A_{t+k} = 0} \given H_{t+k} ]$
 is equal to
$\E [ Y_{t,\Delta} (\bar{a}_{t-1}, a, \bar 0) 1_{A_{t+k} = 0} \given H_{t+k} ]$.
We apply this to show
\begin{align*}
 \E &\left[  Y_{t,\Delta} (\bar{a}_{t-1}, a_t, \bar 0)  \given H_t, A_t = a \right]  \\
 = \E &\left[ \E \left[ Y_{t,\Delta} (\bar{a}_{t-1}, a_t, \bar 0)  \given H_{t+1} \right] \given H_t, A_t = a \right]   \\
 = \E &\left[  \E \left[ Y_{t,\Delta} (\bar{a}_{t-1}, a_t, \bar 0)  \given H_{t+1} \right] \frac{\E [ 1_{A_{t+1} = 0} \given H_{t+1} ]}{p_{t+1} ( 0 \given H_{t+1} )}
 		\given H_t, A_t = a\right] \\
 = \E &\left[ \E \left[ Y_{t,\Delta} (\bar{a}_{t-1}, a_t, \bar 0)   \frac{1_{A_{t+1} = 0}}{p_{t+1} ( 0 \given H_{t+1} )}
 		\given H_{t+1} \right] \given H_t, A_t = a \right]
\end{align*}
Iteratively applying this argument, we end up with the following expectation
\begin{align*}
\E \bigg[ \frac{1_{A_{t+1} = 0}}{p_{t+1} ( 0 \given H_{t+1} )} \E &\left[ \cdots \E \left[ Y_{t,\Delta} (\bar{a}_{t-1}, a_t, \bar 0)   \frac{1_{A_{t+\Delta-1} = 0}}{p_{t+\Delta-1} ( 0 \given H_{t+\Delta-1} )} \given H_{t+\Delta-1} \right] \given H_{t+1} \right]
\given H_t, A_t = a \bigg] \\
&=  \E \left[ \prod_{j=t+1}^{t+\Delta-1} \frac{1_{A_{j} = 0}}{p_{j} ( 0 \given H_{j} )}  \, Y_{t,\Delta} (\bar{a}_{t-1})  \given H_t, A_t = a \right]
\end{align*}
Plugging this result into equation~\eqref{eq:cons_version}, we have
\[
\E \left[\left(\prod_{j=1}^{t-1} p_j(a_j| H_j ) \right) I_t
 \E \left[ \prod_{j=t+1}^{t+\Delta-1} \frac{1_{A_{j} = 0}}{p_{j} ( 0 \given H_{j} )}  \, Y_{t,\Delta} (\bar{a}_{t-1})  \given H_t, A_t = a \right] \, \right].
\]
Summing over all potential outcomes~$\bar{a}_{t-1}$ and normalizing yields
\begin{align*}
\E &\left[ \sum_{\bar{a}_{t-1}} \frac{\left(\prod_{j=1}^{t-1} p_j(a_j| H_j ) \right) I_t}{\E [ \sum_{\bar{a}_{t-1}} \left(\prod_{j=1}^{t-1} p_j(a_j| H_j ) \right) I_t ]}
 \E \left[ \prod_{j=t+1}^{t+\Delta-1} \frac{1_{A_{j} = 0}}{p_{j} ( 0 \given H_{j} )}  \, Y_{t,\Delta} (\bar{a}_{t-1})  \given H_t, A_t = a \right] \, \right] \\
 = \E &\left[ \sum_{\bar{a}_{t-1}} \frac{\left(\prod_{j=1}^{t-1} p_j(a_j| H_j ) \right) I_t}{\E [ \sum_{\bar{a}_{t-1}} \left(\prod_{j=1}^{t-1} p_j(a_j| H_j ) \right) I_t ]}
 \E \left[ \prod_{j=t+1}^{t+\Delta-1} \frac{1_{A_{j} = 0}}{p_{j} ( 0 \given H_{j} )}  \, Y_{t,\Delta} (\bar{a}_{t-1})  \given H_t, A_t = a \right] \given I_t = 1 \right] \\
  = \E &\left[ \E \left[ \prod_{j=t+1}^{t+\Delta-1} \frac{1_{A_{j} = 0}}{p_{j} ( 0 \given H_{j} )}  \, Y_{t,\Delta}  \given H_t, A_t = a \right] \given I_t = 1 \right].
\end{align*}
In the final equation, the outer expectation is with respect to the history~$H_t$ conditional on~$I_t = 1$.  That is, over \emph{both} past actions~$A_s$
and past observations~$O_s$ for $s < t$.

The above shows
\begin{align*}
 &\frac{\E \left[\left(\prod_{j=1}^{t-1} p_j(a_j|H_j(\bar a_{j-1})) \right)
	 Y_{t,\Delta} (\bar{a}_{t-1}, a, \bar 0)  I_t (\bar{a}_{t-1})\right]}
 	{\E \left[\left(\prod_{j=1}^{t-1} p_j(a_j|H_j(\bar a_{j-1})) \right)
	 I_t (\bar{a}_{t-1})\right]}\\
 =& \E \left[ \E \left[ \prod_{j=t+1}^{t+\Delta-1} \frac{1_{A_{j} = 0}}{p_{j} ( 0 \given H_{j} )}  \, Y_{t,\Delta}  \given H_t, A_t = a \right] \given I_t = 1 \right]
\end{align*}
which completes the proof.
\end{proof}

\subsection{Asymptotic consistency and normality}
\label{app:asymptotics}

We provide a detailed proof of asymptotic normality and consistency
in the conditional setting for the weighted-centered
least squares estimator.
The proof in the marginal setting
follows in a similar manner with only minor modification.
For ease of notation, we write~$\E_{\bf p}$ to denote
expectations where the distribution over actions is
with respect to the randomization probability~$\p_{\bf p}$,
and~$\E_{\eta}$ to denote expectations where the distribution
over actions is with respect to the chosen
\emph{reference distribution} -- that is,
providing a prompt at time~$t$ with
probability~$\tilde{p}_t (1 \given X_t)$
followed by no prompt over the next hour.

\begin{lemma}[Change from randomization probability to reference distribution]
\label{lemma:change_of_reference}
For any function~$\phi (H_{t+k})$ of the history
up to time~$t+k$, for~$k \geq 0$,
\[
\Ep \left[ w_{ct} (H_{t+\Delta-1} ) \phi ( H_{t+k} ) \given H_t
\right] = \Eeta \left[ \phi (H_{t+k} ) \given H_t \right]
\]
where
\[
  w_{ct} ( H_{t+\Delta-1} ) = \frac{\tilde{p}_t (A_t|X_{t}) {\prod_{s=1}^{\Delta-1} }{\bf 1} [A_{t+s} = 0] }{\prod_{s=0}^{\Delta-1} p_{t+s} ( A_{t+s } \given H_{t+s}) }.
\]
\end{lemma}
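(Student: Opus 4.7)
The strategy is to recognize that $w_{ct}$ is the Radon--Nikodym derivative $d\eta/d\p_{\bf p}$ restricted to the action sigma-algebra at times $t,\ldots,t+\Delta-1$. Under both measures the conditional law of each observation $O_{s+1}$ given $(H_s, A_s)$ is the same natural law; the measures differ only in how actions $A_t,\ldots,A_{t+\Delta-1}$ are drawn, with $\eta$ drawing $A_t \sim \tilde{p}_t(\cdot\given X_t)$ and then setting $A_{t+s}=0$ for $s=1,\ldots,\Delta-1$. Factoring $w_{ct} = r_0\prod_{s=1}^{\Delta-1} r_s$ with $r_0 = \tilde{p}_t(A_t\given X_t)/p_t(A_t\given H_t)$ and $r_s = {\bf 1}[A_{t+s}=0]/p_{t+s}(A_{t+s}\given H_{t+s})$, each $r_s$ is precisely the one-step likelihood ratio at time $t+s$, so the lemma is morally an instance of change-of-measure.

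To make this rigorous via iterated conditioning I would split into two cases. For $k \leq \Delta-1$, condition on $H_{t+k}$; the factor $\prod_{s=0}^{k-1} r_s\,\phi(H_{t+k})$ is $H_{t+k}$-measurable, and backward induction shows $\Ep[\prod_{s=k}^{\Delta-1} r_s\given H_{t+k}] = 1$: start from $\Ep[r_{\Delta-1}\given H_{t+\Delta-1}] = \sum_a {\bf 1}[a=0] = 1$ (after cancellation of $p_{t+\Delta-1}$ in numerator and denominator of the summand), then peel off $r_{\Delta-2},r_{\Delta-3},\ldots$ one at a time by the tower law. What remains is $\Ep[\prod_{s=0}^{k-1} r_s \phi(H_{t+k})\given H_t]$, which I would identify with $\Eeta[\phi(H_{t+k})\given H_t]$ by forward induction on $k$: at each step, integrating $r_s$ against a function of $(A_{t+s}, O_{t+s+1})$ replaces the $\p_{\bf p}$-conditional law of $A_{t+s}$ with its $\eta$-counterpart (for $s=0$, $p_t(\cdot\given H_t) \mapsto \tilde{p}_t(\cdot\given X_t)$; for $s\geq 1$, $p_{t+s}(\cdot\given H_{t+s})\mapsto$ point mass at $0$), while the natural conditional law of $O_{t+s+1}$ is unchanged.

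For $k \geq \Delta$, I would first take the inner expectation of $\phi(H_{t+k})$ given $H_{t+\Delta}$; because $\eta$ agrees with $\p_{\bf p}$ on the conditional laws of every observation and, under the natural interpretation, on the conditional laws of actions at times $\geq t+\Delta$, this inner expectation is the same under both measures. Calling the result $\psi(H_{t+\Delta})$ reduces the claim to the Case 1 argument applied to $\psi$, with minor notational care since $H_{t+\Delta}$ now includes $A_{t+\Delta-1}$. The main obstacle is filtration bookkeeping --- tracking which variables are measurable w.r.t. $H_{t+s}$ versus $H_{t+s}\vee \sigma(A_{t+s})$ as $w_{ct}$ is split and factors are pulled in and out of conditional expectations. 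Positivity of the randomization probabilities (Assumption~\ref{consistency}) legitimizes every ratio; no individual step is deep, but the cascade of iterated expectations must be orchestrated carefully.
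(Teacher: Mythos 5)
Your proposal is correct and follows essentially the same route as the paper's proof: both arguments telescope the weight $w_{ct}$ one factor at a time via the tower property, cancelling each randomization probability $p_{t+s}(\cdot\given H_{t+s})$ against the corresponding one-step likelihood ratio so that the conditional law of $A_t$ is replaced by $\tilde{p}_t(\cdot\given X_t)$ and the laws of $A_{t+1},\ldots,A_{t+\Delta-1}$ by point masses at $0$, while the conditional laws of the observations are untouched. The paper carries this out explicitly only for $\Delta=2$ and then asserts the iteration; your split into a backward pass (showing the factors beyond time $t+k$ integrate to one) and a forward pass (converting the remaining factors) is just a more carefully bookkept version of the same computation.
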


 \begin{proof}[Proof of Lemma~\ref{lemma:change_of_reference}]
   Suppose $\Delta = 2$.  Then
   \begin{align*}
     &\Ep \left[ w_{ct} ( H_{t+\Delta-1} ) \phi (H_{t+k}) \given H_t \right] \\
     = &\Ep \left[ \frac{\tilde{p}_t (A_t|X_{t})}{p_{t} ( A_{t } \given H_{t}) }
         \E \left[ \frac{
         {\bf 1} [A_{t+1} = 0]
         }{p_{t+1} ( A_{t+1} \given H_{t+1}) }
         \phi ( H_{t+k} )
         \given H_{t+1}, A_{t+1} \right] \given H_t \right]  \\
     = &\E \left[ \frac{\tilde{p}_t (A_t|X_{t})}{p_{t} ( A_{t } \given
         H_{t}) } \sum_{a} \frac{{\bf 1} (a = 0)}
         {\pi ( A_{t+1} = a \given H_{t+1} )} \pi (A_{t+1} = a \given H_{t+1} )
         \E \left[ \phi(H_{t+k}) \given H_{t+1}, A_{t+1} = a \right] \given H_t \right]\\
     = &\E \left[ \frac{\tilde{p}_t (A_t|X_{t})}{p_{t} ( A_{t } \given
         H_{t}) } \sum_{a} {\bf 1} [ a = 0 ] \cdot
         \E \left[ \phi(H_{t+k}) \given H_{t+1}, A_{t+1}=a \right] \given H_t \right]\\
     = &\E \left[ \frac{\tilde{p}_t (A_t|X_{t})}{p_{t} ( A_{t } \given
         H_{t}) } {\bf 1} [ A_{t+1} = 0] \E \left[  \phi(H_{t+k}) \given H_{t+1}, A_{t+1} \right] \given H_t \right] \\
     = &\sum_{a} \frac{\tilde{p}_t (A_t = a |X_{t})}{p_{t} ( A_{t} = a
         \given
         H_{t}) } p_t (A_t = a \given H_t) \E \left[  {\bf 1} [
         A_{t+1} = 0] \E \left[  \phi(H_{t+k}) \given H_{t+1}, A_{t+1}
         \right] \given H_t \right] \\
     = &\sum_{a} \tilde{p}_t (A_t = a |X_{t}) \E \left[  {\bf 1} [
         A_{t+1} = 0] \phi(H_{t+k}) \given H_{t} \right]
   \end{align*}
   Applying this argument iteratively leads
\[
     \sum_{a} \tilde{p}_t (A_t = a |X_{t}) \E \left[  \left( \prod_{j=1}^\Delta {\bf 1} [
         A_{t+j} = 0] \right) \phi(H_{t+k}) \given H_{t}, A_t = a \right]
     = \Eeta \left[ \phi (H_{t+k}) \given H_t \right]
\]
as desired.
 \end{proof}

Lemma~\ref{lemma:change_of_reference} yields many
important corollaries.
First,
\[
\Ep[ w_{ct} ( H_{t+\Delta-1}) \given H_t] = \Eeta [ 1 \given H_t ] = 1,
\]
which implies~$\Ep [w_{ct} ( H_{t+\Delta-1}) \given X_t ] = 1$.
Also, define
\begin{align*}
  \Ep [ w_{ct} (H_{t+\Delta-1}) Y_{t,\Delta} \given H_t]
  = \Eeta [ Y_{t,\Delta} \given H_t ]
  = \alpha (t;H_t)
\end{align*}

\begin{lemma} \label{lemma:2}
  For any $k \geq 0$ and function $\phi(H_{t+k})$, we have
  \begin{align*}
    \Ep &[w_{ct} (H_{t+\Delta-1}) \big(A_t - \tilde{p}_t(1 \given X_t) \big) \phi(H_{t+k}) \given X_t] \\
           &= \tilde{p}_t(1 \given X_t) (1-\tilde{p}_t(1 \given X_t))\E\left[
             \Big(\Eeta [  \phi (H_{t+k}) \given H_t, A_t = 1]
             - \Eeta [  \phi(H_{t+k}) \given H_t, A_t = 0]\Big) \given X_t\right]
  \end{align*}
  In particular, we have $\Ep [w_{ct} (H_{t+\Delta-1}) Y_{t,\Delta} (A_t - \tilde{p}_t(1 \given X_t)) \given X_t = x] =
  \tilde{p}_t(1 \given X_t) (1-\tilde{p}_t(1 \given X_t))\beta (t; X_t)$.
  Note that if~$\phi (H_{t+k})$ is a constant function of~$H_t$ and~$A_t$
  then the above expression is equal to zero.
\end{lemma}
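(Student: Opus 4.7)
The plan is to prove the first display by combining Lemma~\ref{lemma:change_of_reference} with a one-step calculation over the binary action $A_t$, and then derive the $\beta(t;X_t)$ statement by specializing $\phi = Y_{t,\Delta}$ and invoking Lemma~\ref{lemma:cond_effect}.

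First I would observe that, since $A_t \in H_{t+1}$, the product $(A_t - \tilde p_t(1\mid X_t))\,\phi(H_{t+k})$ is a function of $H_{t+\max(k,1)}$, which is itself a function of $H_{t+\Delta-1}$ (recall $\Delta \geq 1$; the case $\Delta = 1$ is trivial since the two sides collapse). Applying Lemma~\ref{lemma:change_of_reference} to this function I obtain
\[
\Ep\!\left[w_{ct}(H_{t+\Delta-1})(A_t - \tilde p_t(1\mid X_t))\phi(H_{t+k})\,\big|\,H_t\right]
= \Eeta\!\left[(A_t - \tilde p_t(1\mid X_t))\phi(H_{t+k})\,\big|\,H_t\right].
\]
Under the reference distribution $\eta$, the conditional law of $A_t$ given $H_t$ is Bernoulli with parameter $\tilde p_t(1\mid X_t)$, so the right-hand side equals
\[
\sum_{a\in\{0,1\}} \tilde p_t(a\mid X_t)\,(a-\tilde p_t(1\mid X_t))\,\Eeta[\phi(H_{t+k})\mid H_t, A_t=a].
\]

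Second, a direct two-term calculation in the binary variable $A_t$ shows that the coefficient of $\Eeta[\phi\mid H_t, A_t=1]$ is $\tilde p_t(1\mid X_t)(1-\tilde p_t(1\mid X_t))$ and the coefficient of $\Eeta[\phi\mid H_t, A_t=0]$ is its negative. Therefore
\[
\Eeta\!\left[(A_t - \tilde p_t(1\mid X_t))\phi(H_{t+k})\,\big|\,H_t\right]
= \tilde p_t(1\mid X_t)(1-\tilde p_t(1\mid X_t))\bigl\{\Eeta[\phi\mid H_t, A_t=1]-\Eeta[\phi\mid H_t, A_t=0]\bigr\}.
\]
Since $X_t \in H_t$, I can pull the factor $\tilde p_t(1\mid X_t)(1-\tilde p_t(1\mid X_t))$ outside when I take a further conditional expectation given $X_t$, which yields the first display of the lemma. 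The observation that the expression vanishes when $\phi$ depends only on $H_t$ (or only on $(H_t, A_t)$ in a way that is constant in $A_t$) is immediate from the bracketed difference being zero in those cases.

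Third, for the special case I set $\phi(H_{t+\Delta}) = Y_{t,\Delta}$ and note that by consistency and sequential ignorability,
\[
\Eeta[Y_{t,\Delta}\mid H_t, A_t = a] = \E\!\left[\prod_{j=t+1}^{t+\Delta-1}\frac{\mathbf 1_{A_j=0}}{p_j(A_j\mid H_j)}\,Y_{t,\Delta}\,\bigg|\,A_t=a, H_t\right],
\]
so the bracketed difference is precisely the integrand in Lemma~\ref{lemma:cond_effect}, and taking the outer conditional expectation given $X_t = x$ recovers $\beta(t;x)$. The main obstacle is really the bookkeeping in the first step: one must be comfortable treating $(A_t-\tilde p_t(1\mid X_t))\phi(H_{t+k})$ as a function of $H_{t+\max(k,1)}$ so that Lemma~\ref{lemma:change_of_reference} applies without modification, rather than redoing its proof by peeling the time-$t$ factor off $w_{ct}$ separately; once that reindexing is made, everything else is algebra.
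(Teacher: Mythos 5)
Your proof is correct and follows essentially the same route as the paper's: a change from the randomization distribution to the reference distribution $\eta$, the two-term expansion over the binary $A_t$ yielding the factor $\tilde p_t(1\mid X_t)(1-\tilde p_t(1\mid X_t))$, and identification of the resulting contrast with $\beta(t;X_t)$ via Lemma~\ref{lemma:cond_effect}. The only difference is organizational — you invoke Lemma~\ref{lemma:change_of_reference} once on the whole product $(A_t-\tilde p_t(1\mid X_t))\phi(H_{t+k})$, whereas the paper peels off the time-$t$ importance weight explicitly and re-derives the change of measure for the remaining $\Delta-1$ factors inline — and your reindexing of that product as a function of $H_{t+\max(k,1)}$ is legitimate since $A_t\in H_{t+1}$.
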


\begin{proof}
\begin{align*}
  &\Ep [w_{ct} (H_{t+\Delta-1})
    \left( A_t - \tilde{p}_t(1 \given X_t) \right) \phi (H_{t+k}) \given X_t] \\
  = &\Ep \left[ \frac{\tilde{p}_t (A_t|X_{t})}{p_t (A_t|H_{t})} \left(A_t - \tilde{p}_t(1 \given X_t) \right)
      \Ep \left[ \frac{ {\prod_{s=1}^{\Delta-1} }{\bf 1} [A_{t+s} = 0] }{\prod_{s=1}^{\Delta-1} p_{t+s} ( A_{t+s } \given H_{t+s}) }\phi (H_{t+k}) \given H_t, A_t \right] \given X_t \right] \\
  = &\Ep \left[ \sum_{a \in \{0,1\}}  \frac{\tilde{p}_t (A_t \given X_{t})}
    {p_{t} ( A_{t } \given H_{t}) } p_t (A_t \given H_t)
    \left(a - \tilde{p}_t(1 \given X_t) \right) \Ep \left[
    \frac{ {\prod_{s=1}^{\Delta-1} }{\bf 1} [A_{t+s} = 0] }{\prod_{s=1}^{\Delta-1} p_{t+s} ( A_{t+s } \given H_{t+s}) }
    \phi (H_{t+k}) \given H_t, A_t = a \right]  \given X_t \right] \\
  = &\tilde{p}_t (1 \given X_{t}) \left(1 - \tilde{p}_t(1 \given X_t) \right)
    \E \left[ \Eeta \left[ \phi (H_{t+k}) \given H_t, A_t=1 \right]
    - \Eeta \left[ \phi (H_{t+k}) \given H_t, A_t=0 \right] \given X_t \right]
\end{align*}
If~$\phi (H_{t+k}) = w_{t+1,\Delta} Y_{t,\Delta}$, then by Lemma~\ref{lemma:change_of_reference} the above equals
\begin{align*}
  \tilde{p}_t (1 \given X_{t}) \left(1 - \tilde{p}_t(1 \given X_t) \right)
  &\E \left[ \E_{\eta} \left[ Y_{t,\Delta} \given H_t, A_t=1 \right] -
    \E_{\eta} \left[ Y_{t,\Delta} \given H_t, A_t=0 \right] \given X_t \right]\\
  = \tilde{p}_t (1 \given X_{t}) \left(1 - \tilde{p}_t(1 \given X_t) \right)
  &\E \bigg[ \Ep \left[
  \frac{ {\prod_{s=1}^{\Delta-1} }{\bf 1} [A_{t+s} = 0] }{\prod_{s=1}^{\Delta-1} p_{t+s} ( A_{t+s } \given H_{t+s}) }
  \phi (H_{t+k}) \given H_t, A_t = 1 \right] \\
  - &\Ep \left[
  \frac{ {\prod_{s=1}^{\Delta-1} }{\bf 1} [A_{t+s} = 0] }{\prod_{s=1}^{\Delta-1} p_{t+s} ( A_{t+s } \given H_{t+s}) }
  \phi (H_{t+k}) \given H_t, A_t = 0 \right] \given X_t \bigg] \\
  = \tilde{p}_t (1 \given X_{t}) \left(1 - \tilde{p}_t(1 \given X_t) \right) &\beta(t; X_t)
\end{align*}
as desired.
\end{proof}

Lemma~\ref{lemma:2} implies the function~$\tilde{p}_t$ must be
conditionally independent of~$H_t$ given~$X_t$ in order to guarantee a consistent
estimator of the projection of~$\beta^\star_c$.

\begin{lemma}
	The solutions~$(\hat{\alpha},\hat{\beta})$ that minimize equation~\eqref{eq:conditional_ls}
	are consistent estimators for:
	\begin{align*}
	\alpha^\star & = \left(\E \left[ \sum_{t=1}^T  g_t(H_t) g_t(H_t)'\right]\right)^{-1}
                       \E \left[ \sum_{t=1}^T g_t(H_t) \alpha (t; H_t) \right] \\
	\beta^\star &= \left(\E \left[  \sum_{t=1}^T \tilde{\sigma}_t^2 (X_t)
                      f_t(X_t) f_t(X_t)'\right]\right)^{-1}
                      \E \left[ \sum_{t=1}^T \tilde{\sigma}_t^2 (X_t)  f_t(X_t) \beta (t; X_t) \right]
	\end{align*}
        where~$\tilde{\sigma}_t^2 (X_t) = \tilde{p}_t(1 \given X_t) (1-\tilde{p}_t(1 \given X_t))$.
\end{lemma}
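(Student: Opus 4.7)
My plan is to treat $(\hat{\alpha}_c, \hat{\beta}_c)$ as M-estimators for a convex quadratic criterion and identify their probability limits via the population first-order conditions. Differentiating~\eqref{eq:conditional_ls} with respect to $\alpha_c$ and $\beta_c$ yields the sample normal equations
\begin{equation*}
\mathbb{P}_n \!\left[ \sum_{t=1}^T I_t w_{ct}(H_{t+\Delta-1}) g_t(H_t) r_t \right] = 0, \qquad
\mathbb{P}_n \!\left[ \sum_{t=1}^T I_t w_{ct}(H_{t+\Delta-1}) (A_t - \tilde{p}_t(1|X_t)) f_t(X_t) r_t \right] = 0,
\end{equation*}
where $r_t = Y_{t,\Delta} - g_t(H_t)'\alpha_c - (A_t - \tilde{p}_t(1|X_t)) f_t(X_t)' \beta_c$. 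Under the stated finite-moment and invertibility assumptions, a uniform law of large numbers shows these equations converge uniformly on compacta to their population analogues, so by standard M-estimation arguments $(\hat{\alpha}_c, \hat{\beta}_c) \to_p (\alpha_c^\star, \beta_c^\star)$, the unique solution to the population normal equations.

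I would next show that the population equations decouple. The coupling term in each equation is an expectation of $I_t w_{ct}(H_{t+\Delta-1})(A_t - \tilde{p}_t(1|X_t))$ against an $H_t$-measurable outer product of features. Pulling the $H_t$-measurable factor outside via the tower property and invoking Lemma~\ref{lemma:change_of_reference} (or, equivalently, the calculation inside the proof of Lemma~\ref{lemma:2}) gives
\begin{equation*}
\Ep\!\left[w_{ct}(H_{t+\Delta-1})\bigl(A_t - \tilde{p}_t(1|X_t)\bigr) \,\big|\, H_t\right] = \sum_{a\in\{0,1\}} \tilde{p}_t(a|X_t)\bigl(a - \tilde{p}_t(1|X_t)\bigr) = 0,
\end{equation*}
so the cross blocks vanish and the system splits into independent linear equations in $\alpha_c$ and in $\beta_c$.

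For the $\alpha_c$ block, Lemma~\ref{lemma:change_of_reference} with $\phi\equiv 1$ collapses the Gram matrix to $\E[\sum_t I_t g_t(H_t) g_t(H_t)']$, and the same lemma with $\phi = Y_{t,\Delta}$ gives $\Ep[w_{ct}(H_{t+\Delta-1})Y_{t,\Delta}\mid H_t] = \alpha(t;H_t)$ on the right-hand side, yielding $\alpha^\star$ upon inversion. For the $\beta_c$ block, a computation analogous to the cross-term one gives $\Ep[w_{ct}(H_{t+\Delta-1})(A_t - \tilde{p}_t(1|X_t))^2\mid H_t] = \tilde{\sigma}_t^2(X_t)$, so the Gram matrix collapses to $\E[\sum_t I_t \tilde{\sigma}_t^2(X_t) f_t(X_t) f_t(X_t)']$. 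For the right-hand side, Lemma~\ref{lemma:2} with $\phi = Y_{t,\Delta}$ yields
\begin{equation*}
\Ep\bigl[w_{ct}(H_{t+\Delta-1})(A_t - \tilde{p}_t(1|X_t)) Y_{t,\Delta} \,\big|\, H_t\bigr] = \tilde{\sigma}_t^2(X_t)\bigl(\Eeta[Y_{t,\Delta}\mid H_t, A_t=1] - \Eeta[Y_{t,\Delta}\mid H_t, A_t=0]\bigr),
\end{equation*}
and taking an additional conditional expectation given $X_t$ on $\{I_t=1\}$ returns $\tilde{\sigma}_t^2(X_t)\beta(t;X_t)$ by Lemma~\ref{lemma:cond_effect} and the definition of $\beta(t;x)$. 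Inversion then delivers $\beta^\star$.

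The hardest part will be the bookkeeping of iterated conditional expectations, making sure that at each step the change of reference is applied to a function measurable with respect to the correct $\sigma$-algebra and with the correct treatment-history dependence. Everything else reduces to routine linear algebra: invertibility of the two Gram matrices is exactly the invertibility hypothesis, and consistency of the plug-in estimators then follows from the continuous mapping theorem.
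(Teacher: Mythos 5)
Your proposal is correct and follows essentially the same route as the paper's proof: differentiate the criterion to obtain normal equations, use the change-of-reference lemma (Lemma~\ref{lemma:change_of_reference}) and Lemma~\ref{lemma:2} to kill the cross terms via centering and to identify $\Ep[w_{ct}\mid H_t]=1$, $\Ep[w_{ct}(A_t-\tilde{p}_t(1\mid X_t))^2\mid H_t]=\tilde{\sigma}_t^2(X_t)$, and the treatment-effect term. The only cosmetic difference is that you retain the $I_t$ factor from~\eqref{eq:conditional_ls} in the Gram matrices (arguably more faithful to the criterion than the paper's own display) and you write out the sample normal equations before passing to their population analogues, which the paper does implicitly.
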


\begin{proof}
The solutions~$(\hat{\alpha},\hat{\beta})$ that minimize equation~\eqref{eq:conditional_ls}
are consistent estimators for the solutions that minimize the following
\[
\E \left[ \sum_{t=1}^T w_{ct} (H_{t+\Delta} ) \left( Y_{t,\Delta} - g_t(H_t)^\prime \alpha^\star
-  (A_t - \tilde{p}_t (1 \given X_t) ) f_t (X_t)^\prime \beta^\star \right)^2 \right]
\]
Differentiating the above equation with respect to~$\alpha^\star$
yields a set of~$p$ estimating equations.
\begin{align*}
0_{q^\prime}&= \Ep \left[ \sum_{t=1}^T w_{ct} (H_{t+\Delta-1}) \left( Y_{t,\Delta} - g_t(H_t)^\prime \alpha^\star
-  (A_t - \tilde{p}_t (1 \given X_t) ) f_t (X_t)^\prime \beta^\star \right) g_t(H_t) \right] \\
&= \sum_{t=1}^T \E \left[ \Ep \left[ w_{ct} (H_{t+\Delta-1}) \left( Y_{t,\Delta} - g_t(H_t)^\prime \alpha^\star
-  (A_t - \tilde{p}_t (1 \given X_t) ) f_t (X_t)^\prime \beta^\star \right)
\given H_t \right] g_t(H_t) \right]
\end{align*}
By Lemma~\ref{lemma:2}, $\Ep \left[ w_{ct} (H_{t+\Delta}) (A_t - \tilde{p}_t (1 \given X_t) )
f_t (X_t)^\prime \beta^\star \given H_t \right] = 0$.
Therefore, we have,
\begin{align*}
0_{q^\prime}&= \sum_{t=1}^T \E \left[ \Ep \left[ w_{ct} (H_{t+\Delta-1}) \left( Y_{t,\Delta} - g_t(H_t)^\prime \alpha^\star \right) \given H_t \right] g_t(H_t) \right]\\
&= \sum_{t=1}^T \E \left[ g_t(H_t) \alpha (t; H_t) - g_t(H_t)  g_t(H_t)^\prime \alpha^\star \right]
\end{align*}
and we have the desired equivalence.
The second equality is due to Lemma~\ref{lemma:change_of_reference}.
Now differentiating with respect to~$\beta^\star$ yields
\begin{align*}
0_{q_c}&= \Ep \left[ \sum_{t=1}^T w_{ct} (H_{t+\Delta-1}) \left( Y_{t,\Delta} - g_t(H_t)^\prime \alpha^\star
-  (A_t - \tilde{p}_t (1 \given X_t) ) f_t (X_t)^\prime \beta^\star \right) (A_t - \tilde{p}_t(1 \given X_t)) f_t (X_t) \right]
\end{align*}
By Lemma~\ref{lemma:2}, we have $\Ep \left[ w_{ct} (H_{t+\Delta-1}) (A_t - \tilde{p}_t (1 \given X_t) )
Y_{t,\Delta}  \given H_t \right] =
\tilde{p}_t (1 \given X_t) (1- \tilde{p}_t (1 \given X_t)) \beta_{\eta} (t; X_t)$, and $\Ep \left[ w_{t} (H_{t+\Delta-1}) (A_t - \tilde{p}_t (1 \given X_t) ) g_t(H_t)  \given H_t \right] = 0$.
The final term satisfies
\[
\Ep \left[ w_{t} (H_{t+\Delta-1}) (A_t - \tilde{p}_t (1 \given X_t) )^2 \given H_t \right] =
\tilde{p}_t (1 \given X_t) (1 - \tilde{p}_t (1 \given X_t) ) = \tilde{\sigma}_t^2 (X_t)
\]
by Lemma~\ref{lemma:change_of_reference}.
Then we have
\begin{align*}
0_{q}&= \sum_{t=1}^T \E \left[ \tilde{\sigma}_t^2 (X_t) f_t(X_t) \beta_{\eta}(t;X_t) -  \tilde{\sigma}_t^2 (X_t) f_t (X_t) f_t (X_t)^\prime \beta^\star \right]
\end{align*}
as desired.
\end{proof}

\begin{proof}[Proof of Asymptotic Normality in Lemma~\ref{lemma:centering}]
We now consider the issue of asymptotic normality.  First, let
\[
\epsilon_{ct} = Y_{t,\Delta} - g_t(H_t)^{\prime} \alpha^\star - (A_t-\tilde{p}_t(X_t)) f_t(X_t)^{\prime} \beta^\star,
\]
$\hat{\theta} = (\hat{\alpha}, \hat{\beta})$, and $\theta^\star = (\alpha^\star, \beta^\star)$.
Since $X_t \subset H_t$ we define
$h_t(H_t)^\prime = (g_t(H_t)^\prime, (A_t-\tilde{p}_t(1 \given X_t)) f_t(X_t)^\prime)$. Then
\begin{align*}
\sqrt{N} ( \hat{\theta} - \theta^\star ) &= \sqrt{N} \bigg \{ \bigg(
                                              \mathbb{P}_N \sum_{t=1}^T  w_{ct} (H_{t+\Delta-1})
                                              h_{t}(H_t) h_t(H_t)^\prime \bigg)^{-1}
                                              \bigg[ \bigg( \mathbb{P}_N \sum_{t=1}^T Y_{t,\Delta}
                                              w_{ct} (H_{t+\Delta-1} ) h_t(H_t) \bigg) \\
                                            &- \bigg( \mathbb{P}_N \sum_{t=1}^T
                                              w_{ct} ( H_{t+\Delta-1}) h_t(H_t) h_t (H_t)^\prime \bigg)
                                              \theta^\star \bigg] \bigg \} \\
                                            &= \sqrt{N} \bigg \{ E \bigg[  \sum_{t=1}^T
                                              w_{ct} ( H_{t+\Delta-1}) h_t(H_t) h_t (H_t)^\prime \bigg]^{-1} \\
                                            &\bigg[ \bigg( \mathbb{P}_N \sum_{t=1}^T
                                              w_{ct} ( H_{t+\Delta-1}) \epsilon_{ct}
                                              h_t(H_t) \bigg) \bigg] \bigg \} + o_p ( {\bf 1} )
\end{align*}
By definitions of $\alpha^\star$ and $\beta^\star$ and the previous consistency argument
\[
E \left[ \sum_{t=1}^T  w_{ct} ( H_{t+\Delta-1}) \epsilon_{ct} h_t(H_t) \right]  = 0
\]
Then under moments conditions, we have asymptotic normality with variance $\Sigma_{\theta}$ given by
\begin{align*}
\Sigma_{\theta} &= E \left[ \sum_{t=1}^T w_{ct} ( H_{t+\Delta-1}) h_t(X_t) h_t(X_t)^\prime \right]^{-1} \\
                &E \left[ \sum_{t=1}^T w_{ct} ( H_{t+\Delta-1}) \epsilon_{ct} h_t(X_t)
                  \times  \sum_{t=1}^T w_{ct} ( H_{t+\Delta-1}) \epsilon_{ct} h_t(X_t)^\prime \right] \\
                &E \left[ \sum_{t=1}^T w_{ct} ( H_{t+\Delta-1}) h_t(X_t) h_t(X_t)^\prime \right]^{-1}
\end{align*}
Due to centering and the previous lemma, the expectation of the matrix
$w_{ct} (H_{t+\Delta-1}) h_t(X_t)h_t(X_t)^\prime$ is block diagonal and therefore,
we can extract the sub-covariance matrix~$\Sigma_{\beta}$
``cleanly'' from the above formula.  Using this, we have
\begin{align*}
 \Sigma_{\beta} &=  \left[ \sum_{t=1}^T E[ (A_t - \tilde{p}_t (X_t) )^2 w_t ( H_{t+\Delta}) f_t (X_t) f_t (X_t)^\prime ] \right]^{-1} \\
	&\cdot E \bigg[ \sum_{t=1}^T w_t ( H_{t+\Delta} ) \epsilon_{ct}
                  (A_t - \tilde{p}_t(X_t)) f_t(X_t)
          \times  \sum_{t=1}^T w_t ( H_{t+\Delta} ) \epsilon_{ct}
                  (A_t - \tilde{p}_t(X_t)) f_t(X_t)^\prime \bigg] \\
 &\, \cdot \left[ \sum_{t=1}^T E[(A_t - \tilde{p}_t (X_t) )^2 w_t ( H_{t+\Delta}) f_t(X_t) f_t(X_t)^\prime \right]^{-1}
\end{align*}
as desired.
\end{proof}

\begin{proof}[Proof of asymptotic distribution of~$T_{cN}$]
By the above proof, we have
\[
\sqrt{N} ( \hat{\beta}_c - \beta^\star_c )^\prime \to N(0, \Sigma)
\]
where~$\Sigma = Q^{-1} W Q^{-1}$, as defined in Lemma~\ref{lemma:centering}.
Since a quadratic form of the normal distribution has a chi-square distribution,
we have
\[
N ( \hat{\beta}_c - \beta^\star_c )^\prime \Sigma^{-1} ( \hat{\beta}_c - \beta^\star_c ) \to \chi^2_{q_c}.
\]
$T_{cN}$ replaces~$\Sigma$ with a consistent estimator~$\hat{\Sigma}$.  By independence
of the covariance estimator and the equation above, we have
\[
q_c^{-1} N ( \hat{\beta}_c - \beta^\star_c )^\prime \hat{\Sigma}^{-1} ( \hat{\beta}_c - \beta^\star_c )
\sim^a F (q_c, N-q^\prime-q_c).
\]
For large~$N$, we have the F-distribution with degrees of freedom $q_c$ and $N-q^\prime -q$
is well approximated by a multiple of the $\chi^2$-distribution with degree of freedom~$q_c$;
that is, if $X_N \sim F(q_c, N-q^\prime-q_c)$ then $Y = \lim_{N\to \infty} q_c X_N$
has a chi-square distribution with degrees of freedom~$q_c$.

For large~$N$, asymptotic normality implies~$\hat{\beta}_c$ has
an approximate normal distribution with mean~$\beta^\star_c$ and variance~$\Sigma/N$.
This implies that~$\hat{\beta}_c \Sigma^{-1} \hat{\beta}_c$ has an approximate
non-central chi-square distribution with degrees of freedom~$q_c$ and non-centrality
parameter~$\beta^\star_c \Sigma^{-1} \beta^\star_c$.  That is,
\[
 N \cdot \hat{\beta}_c \Sigma^{-1} \hat{\beta}_c \sim^{a} \chi^2_{q_c} ( N \beta^\star_c \Sigma^{-1} \beta^\star_c )
\]
where~$\sim^a$ means approximately in distribution for large~$N$.
As~$\hat{\Sigma}^{-1}$ is a consistent estimator of~$\Sigma$,
this implies that
\[
T_{cN} = N \hat{\beta}_c \hat{\Sigma}^{-1} \hat{\beta}_c \sim^{a} q_c F(q_c, N-q^\prime -q; \beta^\star_c \Sigma^{-1} \beta^\star_c ).
\]
That is,~$\frac{1}{q_c} T_{cN}$ has an approximate non-central F-distribution.
\end{proof}

\section{Tradeoff between approximation error and degrees of freedom for
	sample size calculations}
\label{app:tradeoff}

This section provides a discussion of two tradeoffs involved in
attempting to reduce the required  sample size for a given power.
The first tradeoff involves the complexity of the projection of the
treatment effect and the second tradeoff involves the complexity of
the projection involving the control variables.
For expositional simplicity, we discuss the tradeoffs in the case
of sample size calculations for marginal treatment effects.
We make the simplifying assumptions that the proximal response
is a known function of the participant's data within a window of
length~$\Delta = 1$ and that the participant is always
available for treatment~(i.e.,~$I_t = 1$ for all $t=1,\ldots,T$).

In the marginal setting, the sample size is
the smallest integer~$N$ that satisfies:
\begin{equation}
\label{eq:ssmarg}
1-F_{q_m, N- (q^\prime+q_m); N \gamma_m}
\left( \frac{N-(q^\prime +1)}{N - (q^\prime+q_m)} F^{-1}_{q_m, N- (q^\prime+q_m ); 0} (1- \alpha_0) \right)
\geq 1- \beta_0.
\end{equation}
where~$\gamma_m$ is the non-centrality parameter
equal to $( \beta^\star_m )^\prime Q_m W_m^{-1} Q_m \beta_m^\star$
with
\begin{align*}
W_m = \E \bigg[ \sum_{t=1}^T w_{mt} ( H_t )\,
{\epsilon}_{mt} (A_t - \tilde{p}_t (1 ) )  f_t
&\times\sum_{t=1}^T  w_{mt}( H_{t} )\, {\epsilon}_{mt}
(A_t - \tilde{p}_t (1) )  f_t^\prime \bigg], \nonumber \\
Q_m = \sum_{t=1}^T \tilde{p}_t (1) (1 -
\tilde{p}_t (1) ) &f_t \, f_t^\prime,
\text{ and }
w_{mt} (H_t) = \frac{\tilde{p}_t (A_t)}{p_t (A_t \given H_t)},
\nonumber
\end{align*}
and $\epsilon_{mt} = Y_{t,\Delta} - g_t (H_t)^\prime \alpha_m^\star -
(A_t - \tilde{p}_t (1) ) f_t^\prime \beta_m^\star$.
$F_{d_1, d_2; \lambda }$ and $F^{-1}_{d_1, d_2; \lambda }$ denote the cumulative and inverse distribution
functions respectively for the non-central $F$-distribution
with degrees of freedom~$(d_1, d_2)$ and non-centrality
parameter~$\lambda$.

Define the error term,~$\epsilon_t$, by
\[
\epsilon_{t} = Y_{t,\Delta} -  E[w_{mt} (H_t) Y_{t,\Delta}\mid H_t] -
(A_t - \tilde{p}_t (1) ) \beta (t).
\]
The error term has conditional mean zero; that is,~$\E [ \epsilon_t
\given H_t, A_t ] = 0$.
See Appendix~\ref{app:tradeofftechnical} for the derivation.
Importantly we can write~$\epsilon_{mt}$ in terms of the
error term~$\epsilon_t$:
\begin{align*}
\epsilon_{mt} &= \epsilon_{t} +
\left( E[w_{mt} (H_t) Y_{t,\Delta}\mid H_t] -
g_t(H_t)^\prime \alpha^\star_m \right) +
(A_t - \tilde{p}_t  (1) ) \left( \beta(t) -
f_t^\prime \beta_m^\star \right) \\
&= \epsilon_{t} + e_{\alpha} (t; H_t) + (A_t - \tilde{p}_t  (1) ) e_{\beta} (t).
\end{align*}
The term~$e_{\alpha} (t; H_t)$ is the approximation error related to
the complexity of the $L_2$ projection involving the control variables.
The term~$e_{\beta} (t)$ is the approximation error related to the complexity of the
$L_2$ projection of the treatment effect.

The goal of the remainder of this section is to provide an intuitive understanding of the tradeoff among
the errors due to~$L_2$ projections (i.e., $e_{\alpha} (t; H_t)$ and
$e_{\beta} (t)$), sample size, and power.
We start by forming an approximate sample size formula
based on equation~\eqref{eq:ssmarg}.
In particular when~$N$ and~$q_m$ are large the sample size can be approximated by
\begin{equation}
\label{eq:approx_ss}
\gamma_m^{-1} \left( 2 \cdot z_{1-\beta_0}^2 +
\sqrt{2 q_m} z_{1-\alpha_0} +
2 z_{1-\beta_0} \sqrt{z_{1-\beta_0}^2 +
	q_m/2 + \sqrt{2 q_m} z_{1-\alpha_0} } \right).
\end{equation}
where~$z_{c} = \Phi^{-1} (c)$ is the inverse-normal distribution evaluated
at~$c \in (0,1)$.
See Lemma~\ref{lemma:ss_approx} in Appendix~\ref{app:tradeofftechnical} for technical details.

Given the approximate sample size formula,
we next discuss the tradeoff
involving the complexity of the
projection of the treatment effect.
We start by making three additional assumptions:
\begin{itemize}
	\item The control variables are correctly specified;
	that is,
	$\E [ w_{mt} (H_{t}) Y_{t,\Delta} \given H_t ] = g_t (H_t)^\prime \alpha^{\star}_m$
	(i.e., $e_{\alpha} (t; H_t) \equiv 0$).
	\item The error term~$\epsilon_t$ satisfies the second moment condition~$\E [ \epsilon^2_t \given H_t, A_t] = \sigma^2$.
	\item $\tilde{p}_t (1)=\E [ p_t (1 \given H_t)]$.
\end{itemize}
Under these conditions,
the non-centrality parameter~$\gamma_m$ can be approximated by
\begin{equation}
\label{eq:firstapprox}
\left[ \frac{\sigma^2}{\bar{\Theta} \sigma^2 + \bar{\Psi} \bar{e}_{\beta}^2} \right]
\left(\frac{\beta_m^\star}{\sigma}\right)^\prime \left( \sum_{t=1}^T
\tilde{p}_t (1) (1 - \tilde{p}_t (1) ) f_t f_t^\prime \right)
\left(\frac{\beta_m^\star}{\sigma}\right)
\end{equation}
where $\bar{e}^2_{\beta} = \frac{1}{T} \sum_{t=1}^T e_{\beta}^2 (t)$, and
\begin{align*}
\bar{\Psi} &= \frac{1}{T} \sum_{t=1}^T \left( \E \left[ \frac{\tilde{p}_t (1) ( 1 - \tilde{p}_t (1))^3
}{p_t (1 \given H_t)} + \frac{( 1 - \tilde{p}_t (1)) \tilde{p}_t
	(1)^3  }{1 - p_t (1 \given H_t)}  \right] - (1 -   \tilde{p}_t (1)
) \tilde{p}_t (1) \right)
\end{align*}
and
\begin{align*}
\bar{\Theta} &= \frac{1}{T} \sum_{t=1}^T \E \left[ \frac{\tilde{p}_t (1) (1 - \tilde{p}_t (1))}{p_t (1 \given H_t) (1 - p_t (1 \given H_t))} \right]
\end{align*}
It turns out that since~$\tilde{p}_t (1)=\E [ p_t (1 \given H_t)]$, then $\bar{\Psi} \geq 0$ and
$\bar{\Psi} = 0$ if
$p_t (1 \given H_t) = \tilde{p}_t (1) = \frac{1}{2}$.
See Lemma~\ref{lemma:noncentrality} in Appendix~\ref{app:tradeofftechnical} for technical details.

We now combine equations~\eqref{eq:firstapprox} and~\eqref{eq:approx_ss}.
For large values of~$q_m$ the  sample size
is approximately
\begin{equation}
\label{eq:tradeoff}
N\approxeq\frac{(z_{1-\beta_0} + z_{1-\alpha_0}) \sqrt{2q_m}}{(\beta_m^\star/\sigma) \left( \sum_{t=1}^T \tilde{p}_t (1) (1 - \tilde{p}_t (1) ) f_t f_t^\prime \right) (\beta_m^\star/\sigma)}
\left[ \bar{\Theta} + \bar{\Psi} \left( \frac{\bar{e}_{\beta}}{\sigma} \right)^2 \right].
\end{equation}
For a fixed average projected treatment effect~$T^{-1} \sum_{t=1}^T
f_t^\prime (\beta_m^\star/\sigma)$,
we have seen, in simulation, little variation
in~$(\beta_m^\star/\sigma)^\prime \left(
\sum_{t=1}^T \tilde{p}_t (1) (1 - \tilde{p}_t (1) ) f_t f_t^\prime \right) (\beta_m^\star/\sigma)$ as a
function of~$q_m$.  So suppose we fix this average projected treatment
effect at some value.
Then the tradeoff between the error in approximating the treatment
effect  and the dimension of the number, $q_m$, of stratification
variables used in this approximation is represented by the term
$\sqrt{q_m}\left[ \bar{\Theta} + \bar{\Psi} \left(
\frac{\bar{e}_{\beta}}{\sigma} \right)^2 \right]$.   This is quite
interesting as even when the size of  approximation error, $\bar{e}_{\beta}$ can be made sufficiently close to $0$ so that $\sqrt{q_m}  \bar{\Psi} \left(
\frac{\bar{e}_{\beta}}{\sigma} \right)^2$ is small, the term  $\sqrt{q_m}
\bar{\Theta}$ remains.  This argues for our recommendation that one use the least
complex (e.g. smallest $q_m$) projection for the treatment effect that
is reasonable in forming the test statistic that determines the sample
size. Also  if the randomization probabilities
were set to $1/2$ then $\Psi=0$, again supporting our recommendation of selecting
the least complex projection that is feasible.

We now turn to the tradeoff involving the complexity of the
projection with respect to the control variables.
We replace the assumptions in the prior
discussion with the following two assumptions:
\begin{itemize}
	\item The marginal treatment effect is correctly specified; that is,
	$\beta(t) = f_t^\prime \beta$
	\item The error term~$\epsilon_t$ satisifes the following second moment condition:
	$\E [ \epsilon^2_t \given H_t, A_t] = \sigma^2$.
\end{itemize}
Under these conditions, the non-centrality
parameter~$\gamma_m$ is approximated by
\[
\left[ \frac{\sigma^2}{\bar{\Theta} \sigma^2 + \bar{\Xi}_{\alpha}} \right] \left(\frac{\beta_m^\star}{\sigma}\right)^\prime \left( \sum_{t=1}^T \tilde{p}_t (1) (1 - \tilde{p}_t (1) ) f_t f_t^\prime \right)\left(\frac{\beta_m^\star}{\sigma}\right)
\]
where~$ \bar{\Xi}_{\alpha} = \frac{1}{T} \sum_{t=1}^T \Xi_{\alpha} (t)$, and
\[
\Xi_{\alpha} (t) = \E \left[ e_{\alpha}^2 (t; H_t) \frac{\tilde{p}_t (1 ) (1 - \tilde{p}_t (1))}{p_t (1 \given H_t) (1 - p_t (1 \given H_t))} \right].
\]
$\bar{\Theta}$ is as previously defined.
See Lemma~\ref{lemma:noncentrality2} in Appendix~\ref{app:tradeofftechnical} for technical details.

The above approximation of the non-centrality parameter~$\gamma_m$
implies that the approximate sample size formula given
by equation~\eqref{eq:approx_ss} has a multiplicative constant
equal to~$ (\bar{\Theta} + \frac{\bar{\Xi}_{\alpha}}{\sigma^2})$.
By including appropriately chosen control variables, one can hope for
a steep reduction in~$e_{\alpha}(t;H_t)^2$ and thus in $\bar{\Xi}_{\alpha}$.
For fixed sample size~$N$, significance level~$\alpha_0$,
and dimension~$q_m$, this leads to an increase in power.
The error reduction's impact on power is less dependent
on design (choice of randomization probabilities
and choice of $\tilde p_t$) unlike
the impact of error~$\bar{e}^2_{\beta}$ from the prior
discussion.
The dimension~$q^\prime$ does not appear in equation~\eqref{eq:approx_ss}
as we have assumed~$N - q^\prime - q_m \gg 0$ but $q^\prime$ does
appear in the small sample analog~\eqref{eq:ssmarg}.
Fixing all other quantities,  the sample size $N$
increases with increasing $q^\prime$ in \eqref{eq:ssmarg}.
This discussion shows the benefit
of choosing a small number of control variables that are strongly correlated with the proximal response.
In the smoking cessation study, for example, a natural control variable
is the fraction of time stressed in the hour prior to time~$t$.
This low dimensional~$(q^\prime = 1)$ control variable
may substantially lower~$\bar{\Xi}_{\alpha}$ leading to
an increase in power for fixed sample size.

\subsection{Technical details}
\label{app:tradeofftechnical}

The following derivation shows that the error term~$\epsilon_t$
has conditional mean zero (i.e.,~$\E [ \epsilon_t \given H_t, A_t ] = 0$).
We do this by showing that we can write~$\E [Y_{t,\Delta} \mid H_t, A_t]$ as a
function of~$\E [ w_{mt} (H_t) Y_{t,\Delta} \mid H_t]$,~$\beta(t)$, and~$\tilde{p}_t (1)$:
\begin{align*}
\E[w_{mt} (H_t) Y_{t,\Delta}\mid H_t] &= \tilde{p}_t (1) \E[ Y_{t,\Delta}\mid H_t, A_t = 1] + (1-\tilde{p}_t  (1)) \E[Y_{t,\Delta}\mid H_t, A_t = 0] \\
\Rightarrow \E[Y_{t,\Delta}\mid H_t, A_t = 0] &= \E[w_{mt} (H_t) Y_{t,\Delta}\mid H_t] -  \tilde{p}_t  (1) \beta (t) \\
\E[Y_{t,\Delta}\mid H_t, A_t = 1] - \beta(t) &= \E[w_{mt} (H_t) Y_{t,\Delta}\mid H_t] -  \tilde{p}_t  (1) \beta (t) \\
\Rightarrow \E[ Y_{t,\Delta}\mid H_t, A_t = 1] &= \E[w_{mt} (H_t) Y_{t,\Delta}\mid H_t] +  (1-\tilde{p}_t  (1)) \beta (t) \\
\Rightarrow \E[ Y_{t,\Delta}\mid H_t, A_t] &= \E[w_{mt} (H_t) Y_{t,\Delta}\mid H_t] +  (A_t-\tilde{p}_t  (1)) \beta (t).
\end{align*}

We now deduce the approximate sample size formula~\eqref{eq:approx_ss}.

\begin{lemma} \normalfont
	\label{lemma:ss_approx}
	Given a specified significance level $\alpha_0$, power~$1-\beta_0$,
	and dimensions~$q^\prime$
	and $q_m$,
	then when~$q_m$ and sample size~$N$ are sufficiently large
	equation~\eqref{eq:ssmarg} implies that $N$ can be
	approximated by
	\begin{equation*}
	\gamma_m^{-1} \left( 2 \cdot z_{1-\beta_0}^2 +
	\sqrt{2 q_m} z_{1-\alpha_0} +
	2 z_{1-\beta_0} \sqrt{z_{1-\beta_0}^2 +
		q_m/2 + \sqrt{2 q_m} z_{1-\alpha_0} } \right).
	\end{equation*}
	where~$z_{c} = \Phi^{-1} (c)$ is the inverse-normal distribution evaluated
	at~$c \in (0,1)$.
\end{lemma}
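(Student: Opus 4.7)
The plan is to convert the exact F-distribution condition into a Gaussian tail condition via two successive limits (first $N\to\infty$, then $q_m$ large), then solve an elementary quadratic for $N\gamma_m$.

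First I would dispense with the finite-$N$ nuisances in equation~\eqref{eq:ssmarg}. Because $N$ is assumed large, the prefactor $(N-(q'+1))/(N-(q'+q_m))$ tends to $1$. Moreover, as the denominator degrees of freedom $N-(q'+q_m)\to\infty$, the central F-quantile satisfies $q_m\, F^{-1}_{q_m,N-(q'+q_m);0}(1-\alpha_0)\to \chi^2_{q_m;1-\alpha_0}$, and the non-central F random variable times $q_m$ converges in distribution to a non-central chi-square $\chi^2_{q_m}(N\gamma_m)$. Thus, to leading order, condition~\eqref{eq:ssmarg} becomes
\[
\Pr\!\left(\chi^2_{q_m}(N\gamma_m) > \chi^2_{q_m;1-\alpha_0}\right)\geq 1-\beta_0.
\]

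Next I would apply the large-$q_m$ normal approximation to both chi-square quantities. The central quantile satisfies $\chi^2_{q_m;1-\alpha_0}\approx q_m+\sqrt{2q_m}\,z_{1-\alpha_0}$, while the non-central variable has mean $q_m+N\gamma_m$ and variance $2q_m+4N\gamma_m$, hence is approximately $\mathcal{N}(q_m+N\gamma_m,\,2q_m+4N\gamma_m)$. Substituting, cancelling the common $q_m$ in the mean, and rearranging, the power requirement becomes
\[
N\gamma_m-\sqrt{2q_m}\,z_{1-\alpha_0}\;\geq\;z_{1-\beta_0}\sqrt{2q_m+4N\gamma_m}.
\]

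Finally I would square and solve as a quadratic in $x:=N\gamma_m$. Writing $a=\sqrt{2q_m}\,z_{1-\alpha_0}$ and $b=z_{1-\beta_0}^2$, the inequality at equality gives $x^2-(2a+4b)x+(a^2-2bq_m)=0$, whose relevant (larger) root is
\[
x = a+2b + 2\sqrt{b}\,\sqrt{a+b+q_m/2}
  = 2z_{1-\beta_0}^2 + \sqrt{2q_m}\,z_{1-\alpha_0} + 2z_{1-\beta_0}\sqrt{z_{1-\beta_0}^2+q_m/2+\sqrt{2q_m}\,z_{1-\alpha_0}}.
\]
Dividing by $\gamma_m$ yields the stated approximation. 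The main obstacle is not the algebra but justifying the two limiting steps simultaneously; in particular, one must control the remainder in the normal approximation to $\chi^2_{q_m}(N\gamma_m)$ uniformly over the non-centrality $N\gamma_m$, which grows with $q_m$. A clean way to handle this is to note that $\gamma_m$ is fixed while $N$ and $q_m$ jointly diverge, so $N\gamma_m$ scales like $\sqrt{q_m}$ (consistent with the final formula), keeping the ratio $N\gamma_m/q_m$ small and thereby validating the Gaussian regime for the non-central chi-square via the standard Edgeworth-type expansion.
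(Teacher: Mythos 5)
Your proposal is correct and follows essentially the same route as the paper's sketch proof: pass from the non-central $F$ condition in~\eqref{eq:ssmarg} to a non-central $\chi^2_{q_m}(N\gamma_m)$ tail condition, apply large-$q_m$ normal approximations to both the central quantile ($q_m+\sqrt{2q_m}\,z_{1-\alpha_0}$) and the non-central variable (mean $q_m+N\gamma_m$, variance $2q_m+4N\gamma_m$, which the paper derives via the representation $\sum_j(X_j+\lambda_j)^2$), and solve the resulting quadratic — the paper substitutes $y=\sqrt{q_m/2+N\gamma_m}$ while you solve directly in $N\gamma_m$, but the selected roots coincide. Your closing remark on uniformly controlling the Gaussian approximation to the non-central chi-square mirrors the paper's own caveat that its argument is only a sketch requiring uniform convergence results.
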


\begin{hproof}
	The following is a sketch proof based on asymptotic normal
	approximations for non-central chi-squared distributions.
	A complete proof requires careful consideration of uniform
	convergence results.

	We know that
	\[
	N ( \hat{\beta}_m - \beta_m^\star)^T \hat{Q}_m \hat{W}_m^{-1} \hat{Q}_m ( \hat{\beta}_m - \beta_m^\star) \to \chi^2_{q_m}.
	\]
	Thus in large samples the distribution of the test statistic~$T_{mN} = \hat{\beta}_m^T \hat{Q}_m \hat{W}_m^{-1} \hat{Q}_m \hat{\beta}_m$
	has an approximate~$\chi^2_{q_m} (N \gamma_m)$ distribution where
	$\gamma_m = (\beta^\star_m)^T \hat{Q}_m \hat{W}_m^{-1} \hat{Q}_m \beta^\star_m$.
	Thus sample size~$N$ is the smallest integer that satisfies
	\[
	1 - F_{\chi^2_{q_m} (N \gamma_m)} \left( F_{\chi^2_{q_m} (0)}^{-1} (1- \alpha_0) \right) \geq 1 - \beta_0.
	\]
	$F_{\chi^2_d (\lambda) }$ and $F^{-1}_{\chi^2_d (\lambda) }$ denote the cumulative and inverse distribution
	functions respectively for the non-central $\chi^2$-distribution
	with degree of freedom~$d$ and non-centrality
	parameter~$\lambda$.

	Now for large~$q_m$ we know that if~$U \sim \chi^2_{q_m} (0)$ then
	\[
	\frac{U - q_m}{\sqrt{2 q_m} }
	\]
	has an approximate standard normal distribution. Thus
	\[
	F_{\chi^2_{q_m} (0)} (u) \approx \Phi \left( \frac{u - q_m}{\sqrt{2 q_m}} \right)
	\]
	where~$\Phi$ is the cdf for the standard normal distribution.
	Thus
	\[
	F_{\chi^2_{q_m} (0)}^{-1} (1-\alpha_0) \approx z_{1-\alpha} \sqrt{2 q_m} + q_m
	\]
	where~$z_{1-\alpha_0} = \Phi^{-1} (1-\alpha_0)$.

	Next we know that a $\chi^2_{q_m} (N \gamma_m)$ is the distribution
	of~$\sum_{j=1}^{q_m} (X_j + \lambda_j)^2$ where~$X_j$ are iid standard
	normal random variables and~$\{ \lambda_j \}_{j=1}^{q_m}$ satisfy
	$\sum_{j=1}^{q_m} \lambda_j^2 = N \gamma_m$.
	But
	\begin{align*}
	\sum_{j=1}^{q_m} (X_j + \lambda_j)^2 &= \sum_{j=1}^{q_m} X_j^2 + 2 \sum_{j=1}^{q_m} \lambda_j X_j + \sum_{j=1}^{q_m} \lambda_j^2 \\
	&= \sum_{j=1}^{q_m} X_j^2 + 2 \sum_{j=1}^{q_m} \lambda_j X_j + N \gamma_m
	\end{align*}
	and we know that
	\[
	\left( \frac{\sum_{j=1}^{q_m} X_j^2 - q_m}{\sqrt{2 q_m}} , \frac{\sum_{j=1}^{q_m} \lambda_j X_j}{\sum_{j=1}^{q_m} \lambda_j^2} \right)
	\]
	converge in distribution as~$q_m \to \infty$ to independent standard normal
	random variables.
	Thus
	\[
	F_{\chi^2_{q_m} (N \gamma_m)} (v) = \pr \left( \sum_{j=1}^{q_m} (X_j + \lambda_j)^2 \leq v \right)
	\]
	For~$v = z_{1-\alpha_0} \sqrt{2 q_m} + q_m$, we have
	\[
	F_{\chi^2_{q_m} (N \gamma_m)} (v)  = \pr \left(  \frac{\sum_{j=1}^{q_m} X_j^2 - q_m}{\sqrt{2 q_m}} +
	\sqrt{\frac{2 N \gamma_m}{ q_m}} \frac{\sum_{j=1}^{q_m} \lambda_j X_j}{\sqrt{N \gamma_m} } + \frac{N \gamma_m}{\sqrt{2 q_m}} \leq z_{1-\alpha_0} \right) .
	\]
	For $q_m$ large but $N$ fixed the right hand side is approximately equal to
	\begin{align*}
	&\pr \left( Z_1 + \sqrt{\frac{2 N \gamma_m}{ q_m}} Z_2 \leq z_{1-\alpha_0} - \frac{N \gamma_m}{\sqrt{2 q_m}} \right) \\
	=&\pr \left( \frac{Z_1 + \sqrt{\frac{2 N \gamma_m}{ q_m}} Z_2}{\sqrt{1 + \frac{2 N \gamma_m}{q_m}}} \leq \frac{z_{1-\alpha_0} - \frac{N \gamma_m}{\sqrt{2 q_m}}}{\sqrt{1 + \frac{2 N \gamma_m}{q_m}}} \right) \\
	=& \Phi  \left( \frac{z_{1-\alpha_0} \sqrt{2 q_m} - N \gamma_m}{\sqrt{2 \cdot (q_m + 2 N \gamma_m)}} \right)
	\end{align*}
	where~$(Z_1,Z_2)$ are independent standard normal random variables.
	So we want the smallest integer~$N$ such that
	\begin{equation}
	\label{eq:approxssform}
	1-\Phi  \left( \frac{z_{1-\alpha_0} \sqrt{2 q_m} - N \gamma_m}{\sqrt{2 \cdot (q_m + 2 N \gamma_m)}} \right) \geq 1 - \beta_0.
	\end{equation}
	This yields the equation
	\[
	z_{1-\alpha_0} \sqrt{2 q_m} - N \gamma_m = -z_{1-\beta_0} \sqrt{2 \cdot (q_m + 2 N \gamma_m)}
	\]
	since~$z_{\beta_0} = - z_{1-\beta_0}$.
	Let~$y = \sqrt{q_m/2 + N \gamma_m}$; then we can rewrite the above equation as
	\[
	y^2 - 2 z_{1-\beta_0} y - \left( z_{1-\alpha_0} \sqrt{2 q_m} + \frac{q_m}{2} \right).
	\]
	The quadratic formula yields
	\[
	y = z_{1-\beta_0} \pm \sqrt{z_{1-\beta_0}^2 + \left( z_{1-\alpha_0} \sqrt{2 q_m} + \frac{q_m}{2} \right) }
	\]
	Solving for~$N$ yields
	\[
	N = \frac{1}{\gamma_m} \left( 2z_{1-\beta_0}^2 + z_{1-\alpha_0} \sqrt{2 q_m} \pm 2 z_{1-\beta_0} \sqrt{z_{1-\beta_0}^2 + z_{1-\alpha_0} \sqrt{2 q_m} + q_m/2} \right).
	\]
	It rests to find the correct sign for the final term.
	We know that sample size is the smallest integer~$N$ to
	satisfy equation~\eqref{eq:approxssform}.
	Using our formula for~$N$ we have
	\begin{align*}
	z_{1-\alpha_0} \sqrt{2 q_m} - N \gamma_m = -2z_{1-\beta_0}^2 \pm 2 z^2_{1-\beta_0} \sqrt{1 + \frac{1}{z_{1-\beta_0}^2} \left(z_{1-\alpha_0} \sqrt{2 q_m} + q_m/2\right)}.
	\end{align*}
	To satisfy equation~\eqref{eq:approxssform} for power greater than
	$50\%$ we need the left-hand side of the above equation to be negative.
	If the second term on the right-hand side of the above equation is
	positive then the whole right hand side is positive
	as the term within the square-root is greater than
	one.  Therefore, the second term must be negative; so
	sample size~$N$ is given by
	\[
	N = \frac{1}{\gamma_m} \left( 2z_{1-\beta_0}^2 + z_{1-\alpha_0} \sqrt{2 q_m} + 2 z_{1-\beta_0} \sqrt{z_{1-\beta_0}^2 + z_{1-\alpha_0} \sqrt{2 q_m} + q_m/2} \right).
	\]

\end{hproof}

Lemma~\ref{lemma:ss_approx} specifies a large sample  analytic relationship
among $N$, $q_m$,~$\alpha_0$,~$\beta_0$, and~$\gamma_m$.
Next, Lemma~\ref{lemma:noncentrality} establishes a relationship between the
non-centrality parameter~$\gamma_m$ and the approximation error due to
$L_2$ projection of the treatment effect (i.e.,~$\{ e_{\beta} (t) \}_{t=1,\ldots, T}$).

\begin{lemma} \normalfont
	\label{lemma:noncentrality}
	Recall~$e_{\beta} (t) := \beta(t) - f_t^\prime \beta^\star_m$,
	$e_{\alpha} (t; H_t) := \E [ w_{mt} (H_{t}) Y_{t,\Delta} \given H_t ] - g_t (H_t)^\prime \alpha^{\star}_m$,
	and the error term~$\epsilon_t$ is given by
	\[
	\epsilon_t = Y_{t,\Delta} -  \E [ w_{mt} (H_{t}) Y_{t,\Delta} \given H_t ] -
	(A_t - \tilde{p}_t (1)) \beta (t)
	\]
	and satisfies $\E [ \epsilon_t \given H_t, A_t] = 0$.
	We make the following assumptions:
	\begin{itemize}
		\item The control variables are correctly specified;
		that is,
		$\E [ w_{mt} (H_{t}) Y_{t,\Delta} \given H_t ] = g_t (H_t)^\prime \alpha^{\star}_m$
		(i.e., $e_{\alpha} (t; H_t) \equiv 0$).
		\item The error term~$\epsilon_t$ satisfies the second moment condition~$\E [ \epsilon^2_t \given H_t, A_t] = \sigma^2$.
		\item $\tilde{p}_t (1)=\E [ p_t (1 \given H_t)]$.
	\end{itemize}
	Under these conditions,
	the non-centrality parameter~$\gamma_m$ can be approximated by
	\[
	\left[ \frac{\sigma^2}{\bar{\Theta} \sigma^2 + \bar{\Psi} \bar{e}_{\beta}^2} \right]
	\left(\frac{\beta_m^\star}{\sigma}\right)^\prime \left( \sum_{t=1}^T
	\tilde{p}_t (1) (1 - \tilde{p}_t (1) ) f_t f_t^\prime \right)
	\left(\frac{\beta_m^\star}{\sigma}\right)
	\]
	where $\bar{e}^2_{\beta} = \frac{1}{T} \sum_{t=1}^T e_{\beta}^2 (t)$, and
	\begin{align*}
	\bar{\Psi} &= \frac{1}{T} \sum_{t=1}^T \left( \E \left[ \frac{\tilde{p}_t (1) ( 1 - \tilde{p}_t (1))^3
	}{p_t (1 \given H_t)} + \frac{( 1 - \tilde{p}_t (1)) \tilde{p}_t
		(1)^3  }{1 - p_t (1 \given H_t)}  \right] - (1 -   \tilde{p}_t (1)
	) \tilde{p}_t (1) \right)
	\end{align*}
	and
	\begin{align*}
	\bar{\Theta} &= \frac{1}{T} \sum_{t=1}^T \E \left[ \frac{\tilde{p}_t (1) (1 - \tilde{p}_t (1))}{p_t (1 \given H_t) (1 - p_t (1 \given H_t))} \right]
	\end{align*}
	Note that if~$\tilde{p}_t (1)=\E [ p_t (1 \given H_t)]$, then $\bar{\Psi} \geq 0$ and
	$\bar{\Psi} = 0$ {\bf if and only if}
	$p_t (1 \given H_t) = \tilde{p}_t (1) = \frac{1}{2}$ (see proof below).
\end{lemma}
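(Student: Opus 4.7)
The plan is to evaluate $\gamma_m = (\beta_m^\star)^\prime Q_m W_m^{-1} Q_m \beta_m^\star$ by computing $W_m$ explicitly and showing it is approximately proportional to $Q_m$, after which $Q_m W_m^{-1}Q_m$ collapses to a scalar multiple of $Q_m$.  Under $e_\alpha \equiv 0$ I would use the decomposition $\epsilon_{mt} = \epsilon_t + (A_t - \tilde p_t(1)) e_\beta(t)$ and set $S_t = w_{mt}\epsilon_{mt}(A_t - \tilde p_t(1))f_t$, so that $W_m = \sum_{s,t}\E[S_s S_t^\prime]$, and then handle the diagonal and off-diagonal contributions separately using the tower property and the identity $\E[\epsilon_t \given H_t, A_t] = 0$.

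For the diagonal $s = t$, conditioning on $(H_t, A_t)$ together with $\E[\epsilon_t^2 \given H_t, A_t] = \sigma^2$ kills the cross term in $\epsilon_{mt}^2$ and leaves $\sigma^2 + (A_t - \tilde p_t(1))^2 e_\beta(t)^2$.  A direct Bernoulli computation on $A_t \given H_t$ then gives
\[
\E[w_{mt}^2 (A_t - \tilde p_t(1))^{2} \given H_t] = \frac{\tilde p_t(1)^2(1-\tilde p_t(1))^2}{p_t(1\given H_t)(1-p_t(1\given H_t))},
\]
and the analogous formula for the fourth power; taking outer expectations recovers $\Theta_t$ as the kernel of the noise piece and
\[
\xi_t := \E\!\left[\frac{\tilde p_t(1)(1-\tilde p_t(1))^3}{p_t(1\given H_t)} + \frac{(1-\tilde p_t(1))\tilde p_t(1)^3}{1-p_t(1\given H_t)}\right]
\]
as the kernel of the error piece.

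For the off-diagonal $s \neq t$, say $s > t$, I would condition first on $H_s$ (making $S_t$ measurable) and then on $A_s$; using $\E[\epsilon_s \given H_s, A_s] = 0$ together with the Bernoulli identity $\E[w_{ms}(A_s - \tilde p_s(1))^2 \given H_s] = \tilde p_s(1)(1-\tilde p_s(1))$ yields $\E[S_s \given H_s] = \tilde p_s(1)(1-\tilde p_s(1)) e_\beta(s) f_s$, and by symmetry $\E[S_t^\prime] = \tilde p_t(1)(1-\tilde p_t(1)) e_\beta(t) f_t^\prime$, so $\E[S_s S_t^\prime] = \tilde p_s(1)(1-\tilde p_s(1))\tilde p_t(1)(1-\tilde p_t(1))e_\beta(s)e_\beta(t) f_s f_t^\prime$.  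Writing the off-diagonal sum as the full sum over $(s,t)$ minus the $s=t$ contribution absorbs a $\tilde p_t(1)^2(1-\tilde p_t(1))^2 e_\beta(t)^2 f_t f_t^\prime$ correction into the diagonal error kernel, converting $\xi_t$ into $\Psi_t = \xi_t - \tilde p_t(1)(1-\tilde p_t(1))$ and leaving a rank-one remainder $R = \bigl(\sum_t \tilde p_t(1)(1-\tilde p_t(1)) e_\beta(t) f_t\bigr)\bigl(\sum_t \tilde p_t(1)(1-\tilde p_t(1)) e_\beta(t) f_t\bigr)^\prime$.  Thus $W_m = \sum_t \tilde p_t(1)(1-\tilde p_t(1))[\Theta_t \sigma^2 + \Psi_t e_\beta(t)^2] f_t f_t^\prime + R$; replacing the $t$-varying coefficients by the averages $\bar\Theta$, $\bar\Psi$, $\bar e_\beta^2$ and treating $R$ as lower order gives $W_m \approx (\bar\Theta\sigma^2 + \bar\Psi \bar e_\beta^2) Q_m$, from which $Q_m W_m^{-1} Q_m \approx Q_m/(\bar\Theta\sigma^2 + \bar\Psi \bar e_\beta^2)$ and the claimed formula for $\gamma_m$ follows.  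Controlling these averaging approximations and the rank-one remainder is the main obstacle; the rest is a routine moment expansion.

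Finally, for the sign claim: under $\tilde p_t(1) = \E[p_t(1 \given H_t)]$, Jensen's inequality applied to $x \mapsto 1/x$ gives $\E[1/p_t(1\given H_t)] \geq 1/\tilde p_t(1)$ and $\E[1/(1-p_t(1\given H_t))] \geq 1/(1-\tilde p_t(1))$.  Substituting into $\Psi_t$ yields $\Psi_t \geq (1-\tilde p_t(1))^3 + \tilde p_t(1)^3 - \tilde p_t(1)(1-\tilde p_t(1)) = 1 - 4\tilde p_t(1)(1-\tilde p_t(1)) = (2\tilde p_t(1) - 1)^2 \geq 0$, so $\bar\Psi \geq 0$; equality throughout forces both Jensen tightness (so $p_t(1\given H_t) = \tilde p_t(1)$ almost surely) and $\tilde p_t(1) = 1/2$, giving the stated if-and-only-if condition.
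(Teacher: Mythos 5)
Your decomposition, moment calculations, and Jensen argument for $\bar{\Psi}\geq 0$ all match the paper's proof step for step. There is, however, one genuine gap: your treatment of the rank-one remainder $R=\bigl(\sum_t \tilde p_t(1)(1-\tilde p_t(1)) e_\beta(t) f_t\bigr)\bigl(\sum_t \tilde p_t(1)(1-\tilde p_t(1)) e_\beta(t) f_t\bigr)^\prime$. You propose to ``treat $R$ as lower order'' and flag controlling it as the main obstacle, but as stated that step would fail: without further information each entry of the vector $v=\sum_t \tilde p_t(1)(1-\tilde p_t(1)) e_\beta(t) f_t$ can be of order $T\bar e_\beta$, so $R=vv^\prime$ can be of order $T^2\bar e_\beta^2$, which would \emph{dominate} the retained term $\sum_t \tilde p_t(1)(1-\tilde p_t(1))\Psi_t e_\beta(t)^2 f_t f_t^\prime$ (of order $T\bar e_\beta^2$) by a factor of $T$, not be negligible relative to it.

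The missing idea is that $R$ vanishes identically. Since $\beta_m^\star$ is by definition the weighted $L_2$ projection minimizing $\sum_{t=1}^T \tilde p_t(1)(1-\tilde p_t(1))\left(\beta(t)-f_t^\prime\beta_m\right)^2$, its normal equations read $\sum_{t=1}^T \tilde p_t(1)(1-\tilde p_t(1))\, e_\beta(t)\, f_t = 0_{q_m}$, hence $v=0$ and $R=0$ exactly. This orthogonality is precisely the device the paper uses: it rewrites $\sum_{s\neq t}\tilde p_s(1)(1-\tilde p_s(1))e_\beta(s)f_s = -\tilde p_t(1)(1-\tilde p_t(1))e_\beta(t)f_t$ and obtains the cross-term contribution $-\sum_t\bigl(\tilde p_t(1)(1-\tilde p_t(1))e_\beta(t)\bigr)^2 f_t f_t^\prime$ with no remainder, which is exactly your correction converting $\xi_t$ into $\Psi_t$. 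With that one observation inserted, your argument is the paper's argument; the only remaining approximation is the replacement of the $t$-varying coefficients $\Theta_t,\Psi_t,e_\beta(t)^2$ by their time averages, which the paper also treats as a heuristic approximation rather than a controlled bound. Your closing identity $(1-\tilde p)^3+\tilde p^3-\tilde p(1-\tilde p)=(2\tilde p-1)^2$ is a cleaner way of stating the paper's final inequality and yields the same equality characterization.
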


\begin{proof}
	Under the above assumptions, the model error~$\epsilon_{mt}$ decomposes
	into two components:
	\begin{align*}
	\epsilon_{mt} &= \epsilon_{t} + (A_t - \tilde{p}_t (1) ) ( \beta(t) -
	f_t^\prime \beta^\star_m ) \\
	&= \epsilon_{t} + (A_t - \tilde{p}_t (1)) e_{\beta} (t).
	\end{align*}
	Plugging the decomposition into the
	formula for~$W_m$, we have
	\begin{align*}
	W_m= \E & \bigg[ \sum_{t=1}^T w_{mt} ( H_{t} )\, \bigg( \epsilon_{t}
	+ (A_t - \tilde{p}_t (1)) e_{\beta} (t) \bigg)(A_t - \tilde{p}_t (1 ) )  f_t \\
	& \times \sum_{t=1}^T w_{mt} ( H_{t} )\, \bigg( \epsilon_{t}
	+ (A_t - \tilde{p}_t (1)) e_{\beta} (t) \bigg)(A_t - \tilde{p}_t (1) )  f_t^\prime .
	\end{align*}
	We can decompose the above formula into various terms,
	which we now walk through step by step.
	The first term involves only the error~$\epsilon_t$.
	By the assumption that~$\E [ \epsilon_t \given H_t, A_t ] = 0$
	and~$\Delta = 1$ the cross terms are zero,
	we have
	\[
	\sum_{s \neq t} \E \bigg[ w_{mt} ( H_{t} ) \, \epsilon_{t} (A_t -   \tilde{p}_t (1) )^2 f_t  \times
	w_{ms} ( H_{s} ) \, \epsilon_{s} (A_s -   \tilde{p}_s (1) )^2 f_s^\prime \bigg].
	\]
	equals zero; then we have the term
	\begin{align*}
	\E \bigg[ \sum_{t=1}^T w^2_{mt} ( H_{t} )\, \epsilon^2_{t} (A_t -   \tilde{p}_t (1) )^2 f_t f_t^\prime \bigg]
	&= \sum_{t=1}^{T} \E \left[ w^2_{mt} ( H_{t} )\, (A_t -   \tilde{p}_t (1) )^2 \E[ \epsilon^2_{t} | H_t, A_t] \right] f_t f_t^\prime \\
	&= \sigma^2 \sum_{t=1}^{T} \E \left[ w^2_{mt} ( H_{t} )\, (A_t -   \tilde{p}_t (1) )^2 \right] f_t f_t^\prime \\
	&= \sigma^2 \sum_{t=1}^{T} \E \left[ \frac{\tilde{p}_t (1)^2 (1 - \tilde{p}_t (1))^2}{p_t (1 \given H_t)} +
	\frac{\tilde{p}_t (1)^2 (1 - \tilde{p}_t (1))^2}{1-p_t (1 \given H_t)} \right] f_t f_t^\prime \\
	&= \sigma^2 \sum_{t=1}^{T} \E \left[ \frac{(\tilde{p}_t (1) (1 - \tilde{p}_t (1)))^2}{p_t (1 \given H_t) (1- p_t (1 \given H_t))} \right] f_t f_t^\prime.
	\end{align*}

	Due to the same reasoning above, the cross-terms
	involving both $\epsilon_t$ and error $e_{\beta} (t)$
	are zero.

	The next term involves only the
	approximation error~$(A_t -  \tilde{p}_t (1) ) e_{\beta} (t)$:
	\begin{align*}
	W_{m,\beta} = \E \bigg[ &\sum_{t=1}^T w^2_{mt} ( H_{t} ) \,
	e_{\beta} (t)^2 (A_t -   \tilde{p}_t (1) )^4 f_t f_t^\prime \bigg] \\
	+ \E \bigg[ &\sum_{s \neq t} w_{mt} ( H_{t} )\,  e_{\beta} (t) (A_t -   \tilde{p}_t (1) )^2 f_t
	\times w_{ms} ( H_{s} )\,  e_{\beta} (s) (A_s -   \tilde{p}_s (1) )^2 f_s^\prime \bigg]
	\end{align*}
	We first investigate the second term (i.e. the cross-product term).
	Taking expectations, we have
	\begin{align*}
	&\sum_{s \neq t}  e_{\beta} (t) (1 -   \tilde{p}_t (1) ) \tilde{p}_t
	(1)  f_t \times e_{\beta} (s) (1 -   \tilde{p}_s (1)) \tilde{p}_s
	(1) f_t^\prime \\
	= &\sum_{t = 1}^{T} \left[ e_{\beta} (t) (1 -   \tilde{p}_t (1) ) \tilde{p}_t
	(1)  f_t \times \sum_{s \neq t} e_{\beta} (s) (1 -   \tilde{p}_s
	(1)) \tilde{p}_s (1) f_s^\prime \right]
	\end{align*}
	By definition of the~$L_2$ projections, we know that
	\[
	\sum_{s=1}^T  (1 -   \tilde{p}_s (1) ) \tilde{p}_s (1)  (\beta (s) - f_s^\prime \beta) f_s = 0_p
	\]
	This implies for each~$t = 1,\ldots, T$, we have
	\[
	\sum_{s \neq t}  (1 -   \tilde{p}_s (1) ) \tilde{p}_s (1)  (\beta (s)
	- f_s^\prime \beta) f_s = - (1 -   \tilde{p}_t (1) ) \tilde{p}_t (1)  (\beta (t) - f_t^\prime \beta) f_t.
	\]
	Plugging this in the cross-term becomes
	\[
	- \sum_{t=1}^T  \left( e_{\beta} (t) (1 -   \tilde{p}_t (1) ) \tilde{p}_t (1) \right)^2  f_t f_t^\prime.
	\]
	The first term can be simplified by
	\[
	\E \left[ w^2_{mt} ( H_{t} ) \, (A_t -   \tilde{p}_t (1) )^4 \right] =
	\tilde{p}_t (1) ( 1 - \tilde{p}_t (1)) \E \left[
	\frac{\tilde{p}_t (1) ( 1 - \tilde{p}_t (1))^3}{p_t (1 \given H_t)}
	+ \frac{( 1 - \tilde{p}_t (1)) \tilde{p}_t (1)^3  }{1 - p_t (1 \given H_t)}
	\right].
	\]
	The above implies that~$W_{m,\beta}$ is equal to
	\[
	\sum_{t=1}^T (1 -   \tilde{p}_t (1) ) \tilde{p}_t (1) \Psi_t \, e_{\beta} (t)^2 f_t f_t^\prime
	\]
	where
	\begin{align*}
	\Psi_t &= \E \left[ \frac{\tilde{p}_t (1) ( 1 - \tilde{p}_t (1))^3
	}{p_t (1 \given H_t)} + \frac{( 1 - \tilde{p}_t (1)) \tilde{p}_t
		(1)^3  }{1 - p_t (1 \given H_t)}  \right] - (1 -   \tilde{p}_t (1)
	) \tilde{p}_t (1)
	\end{align*}
	We assume~$\tilde{p}_t (a)$ is chosen such that it equals
	$\E \left[ p_t (a \given H_t) \right]$
	for~$a \in \{0,1\}$.
	Under this assumption, Jensen's inequality implies that
	$\E \left[ \frac{\tilde{p}_t (a)}{p_t (a \given H_t)} \right] \geq 1$
	for~$a \in \{0,1\}$ with equality holding only if~$p_t (a \given H_t)$
	is constant almost surely (i.e., $p_t (a \given H_t) = \tilde{p}_t (a)$
	a.s.).
	Therefore,
	\[
	\Psi_t \geq ( 1 - \tilde{p}_t (1))^3 +  \tilde{p}_t (1)^3 -
	(1 -   \tilde{p}_t (1) ) \tilde{p}_t (1) \geq 0
	\]
	for~$\tilde{p}_t (1) \in [0,1]$ with equality
	holding only if~$\tilde{p}_t (1) = 1/2$.
	Therefore~$\Psi_t = 0$ \emph{if and only if} $p_t (1 \given H_t) =
	\tilde{p}_t (1) = \frac{1}{2}$ almost surely.

	Combining all of the above we have
	\[
	W_m = \sum_{t=1}^T \tilde{p}_t (1) (1 - \tilde{p}_t (1)) \left[ \Theta_t \sigma^2 +
	\Psi_t e_\beta (t)^2 \right] f_t f_t^\prime
	\]
	where~$\Theta_t = \E \left[ \frac{\tilde{p}_t (1) (1 - \tilde{p}_t (1))}{ p_t (1 \, | \, H_t) (1-p_1 (1 \, | \,  H_t))} \right]$.
	We approximate this by
	\[
	W_m \approx \left[ \bar{\Theta} \sigma^2 +
	\bar{\Psi} \bar{e}_\beta^2 \right] \cdot \sum_{t=1}^T \tilde{p}_t (1) (1 - \tilde{p}_t (1))  f_t f_t^\prime = \left[ \bar{\Theta} \sigma^2 +
	\bar{\Psi} \bar{e}_\beta^2 \right] Q_m
	\]
	where~$\bar{u}$ is the average of~$T^{-1} \sum_{t=1}^T u_t$.
	This implies the non-centrality parameter is
	approximated by
	\[
	\left[ \frac{\sigma^2}{\bar{\Theta} \sigma^2 + \bar{\Psi} \bar{e}_{\beta}^2 }
	\right]
	\left(\frac{\beta_m^\star}{\sigma}\right)^\prime \left( \sum_{t=1}^T
	\tilde{p}_t (1) (1 - \tilde{p}_t (1) ) f_t f_t^\prime \right)
	\left(\frac{\beta_m^\star}{\sigma}\right)
	\]
	as desired.
\end{proof}

The following lemma provides a complementary result
to Lemma~\ref{lemma:noncentrality}.  In particular,
it provides a relation between the non-centrality
parameter and choice of control variables.

\begin{lemma} \normalfont
	\label{lemma:noncentrality2}
	Recall~$e_{\beta} (t) := \beta(t) - f_t^\prime \beta^\star_m$,
	$e_{\alpha} (t; H_t) := \E [ w_{mt} (H_{t}) Y_{t,\Delta} \given H_t ] - g_t (H_t)^\prime \alpha^{\star}_m$,
	and the error term~$\epsilon_t$ is given by
	\[
	\epsilon_t = Y_{t,\Delta} -  \E [ w_{mt} (H_{t}) Y_{t,\Delta} \given H_t ] -
	(A_t - \tilde{p}_t (1)) \beta (t)
	\]
	and satisfies $\E [ \epsilon_t \given H_t, A_t] = 0$.
	We make the following assumptions:
	\begin{itemize}
		\item The marginal treatment effect is correctly specified; that is,
		$\beta(t) = f_t^\prime \beta$
		\item The error term~$\epsilon_t$ satisifes the following second moment condition:
		$\E [ \epsilon^2_t \given H_t, A_t] = \sigma^2$.
	\end{itemize}
	Under these conditions, the non-centrality
	parameter~$\gamma_m$ is approximated by
	\[
	\left[ \frac{\sigma^2}{\bar{\Theta} \sigma^2 + \bar{\Xi}_{\alpha}} \right] \left(\frac{\beta_m^\star}{\sigma}\right)^\prime \left( \sum_{t=1}^T \tilde{p}_t (1) (1 - \tilde{p}_t (1) ) f_t f_t^\prime \right)\left(\frac{\beta_m^\star}{\sigma}\right)
	\]
	where~$ \bar{\Xi}_{\alpha} = \frac{1}{T} \sum_{t=1}^T \E [  \bar{\Xi}_{\alpha} (t)]$, and
	\[
	\bar{\Xi}_{\alpha} (t) = \E \left[ e_{\alpha}^2 (t; H_t) \frac{\tilde{p}_t (1 ) (1 - \tilde{p}_t (1))}{p_t (1 \given H_t) (1 - p_t (1 \given H_t))} \right].
	\]
	$\bar{\Theta}$ is as defined in Lemma~\ref{lemma:noncentrality}.
\end{lemma}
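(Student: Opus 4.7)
The plan is to parallel the proof of Lemma~\ref{lemma:noncentrality}, but with the roles of the two approximation errors interchanged. Since $\beta(t) = f_t^\prime \beta_m^\star$ is assumed correct, $e_\beta(t) \equiv 0$, so the decomposition
\[
\epsilon_{mt} = \epsilon_t + e_\alpha(t; H_t)
\]
holds. Substituting this into the formula for $W_m$ and expanding the outer product decomposes $W_m$ into a double sum over $(s,t) \in \{1,\ldots,T\}^2$ of four types of contributions: the pure $\epsilon_t \epsilon_s$ term, the two mixed cross terms $\epsilon_t e_\alpha(s; H_s)$ and $e_\alpha(t; H_t)\epsilon_s$, and the pure $e_\alpha(t; H_t) e_\alpha(s; H_s)$ term, each weighted by $w_{mt}(H_t) w_{ms}(H_s)(A_t - \tilde{p}_t(1))(A_s - \tilde{p}_s(1)) f_t f_s^\prime$.

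The main technical step is to show that all off-diagonal $(s \neq t)$ contributions vanish, via iterated expectations with a carefully chosen conditioning filtration. For the $\epsilon_t \epsilon_s$ terms this is exactly the argument used in Lemma~\ref{lemma:noncentrality}: conditioning on the later-index $(H_\cdot, A_\cdot)$ factors out the earlier piece and kills the later one via $\E[\epsilon_\cdot \mid H_\cdot, A_\cdot] = 0$. For the pure $e_\alpha$ cross terms, the key identity is $\E[w_{mt}(H_t)(A_t - \tilde{p}_t(1)) \mid H_t] = \sum_a \tilde{p}_t(a)(a - \tilde{p}_t(1)) = 0$; conditioning on $H_t$ with $t$ the larger index (so that $e_\alpha(s; H_s)$ and its weight are $H_t$-measurable and factor out) then yields zero. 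The mixed cross terms vanish analogously: when the $\epsilon$ index is larger, condition on its $(H,A)$ pair and use zero conditional mean of $\epsilon$; when the $e_\alpha$ index is larger, condition on the later history and use the balance identity above.

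For the diagonal $s = t$ contribution, the mixed term $\E[w_{mt}^2(H_t)(A_t - \tilde{p}_t(1))^2 \epsilon_t e_\alpha(t;H_t)]$ vanishes by taking $\E[\cdot \mid H_t, A_t]$ and applying $\E[\epsilon_t \mid H_t, A_t] = 0$. The remaining $\epsilon_t^2$ and $e_\alpha(t;H_t)^2$ pieces are handled by the direct calculation
\[
\E[w_{mt}^2(H_t)(A_t - \tilde{p}_t(1))^2 \mid H_t] = \frac{(\tilde{p}_t(1)(1-\tilde{p}_t(1)))^2}{p_t(1 \mid H_t)(1 - p_t(1 \mid H_t))},
\]
combined with the second-moment assumption $\E[\epsilon_t^2 \mid H_t, A_t] = \sigma^2$ and the definitions of $\Theta_t$ and $\Xi_\alpha(t)$. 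These respectively contribute $\sigma^2 \tilde{p}_t(1)(1-\tilde{p}_t(1))\Theta_t$ and $\tilde{p}_t(1)(1-\tilde{p}_t(1))\Xi_\alpha(t)$, giving
\[
W_m = \sum_{t=1}^T \tilde{p}_t(1)(1-\tilde{p}_t(1))\bigl[\Theta_t \sigma^2 + \Xi_\alpha(t)\bigr] f_t f_t^\prime.
\]

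Finally, approximating the time-indexed factor $\Theta_t \sigma^2 + \Xi_\alpha(t)$ by its time-average $\bar{\Theta}\sigma^2 + \bar{\Xi}_\alpha$ and pulling the constant outside yields $W_m \approx (\bar{\Theta}\sigma^2 + \bar{\Xi}_\alpha)\, Q_m$. Substituting into $\gamma_m = (\beta_m^\star)^\prime Q_m W_m^{-1} Q_m \beta_m^\star$ and multiplying numerator and denominator by $\sigma^2$ yields the claimed approximation. The main obstacle is the careful bookkeeping of the four off-diagonal pairings to ensure that the correct conditioning order is applied in each case, so that neither $\epsilon$-type factors nor $A - \tilde{p}$-balance factors are accidentally lost; the remaining calculations are then mechanical consequences of the weighting and second-moment conditions.
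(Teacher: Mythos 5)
Your proposal is correct and follows essentially the same route as the paper's proof: the decomposition $\epsilon_{mt}=\epsilon_t+e_\alpha(t;H_t)$ under correct specification of the treatment effect, elimination of all cross terms via $\E[\epsilon_t\given H_t,A_t]=0$ and the centering identity $\E[w_{mt}(H_t)(A_t-\tilde p_t(1))\given H_t]=0$, the same diagonal computation of $\E[w_{mt}^2(H_t)(A_t-\tilde p_t(1))^2\given H_t]$, and the same time-averaging step yielding $W_m\approx(\bar\Theta\sigma^2+\bar\Xi_\alpha)Q_m$. Your bookkeeping of the off-diagonal pairings is in fact somewhat more explicit than the paper's, which simply cites the prior lemma's logic for those terms.
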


\begin{proof}
	Under the above assumptions, the model error~$\epsilon_{mt}$ decomposes
	into the error~$\epsilon_t$
	and two approximation error terms:
	\begin{align*}
	\epsilon_{mt} &= \epsilon_{t} + (\alpha (t; H_t) - g_t (H_t)^\prime
	\alpha_m ) + (A_t - \tilde{p}_t (1) ) ( \beta(t) -
	\f_t^\prime \beta_m ) \\
	&= \epsilon_{t} +
	(\alpha (t; H_t) - g_t (H_t)^\prime \alpha_m )
	\end{align*}
	The third term is zero by the assumption of properly
	specified treatment effect.
	Plugging the decomposition into the
	formula for~$W_m$, we have
	\begin{align*}
	W_m= \E & \bigg[ \sum_{t=1}^T w_{mt} ( H_{t} )\, \bigg( \epsilon_{t}
	+ e_{\alpha} (t; H_t) \bigg)  (A_t - \tilde{p}_t (1) ) f_t \\
	& \times \sum_{t=1}^T w_{mt} ( H_{t} )\, \bigg( \epsilon_{t}
	+ e_{\alpha} (t; H_t) \bigg)(A_t - \tilde{p}_t (1) )  f_t^\prime\bigg].
	\end{align*}
	We can decompose the above formula into various terms,
	which we now walk through step by step.
	The first terms involve only the
	error~$\epsilon_t$. These were taken care of
	in the prior proof; following the logic in that proof,
	the cross-terms involving both $\epsilon_t$ and
	error $e_{\alpha} (t)$ are zero.

	The next terms involve only the
	approximation error~$e_{\alpha} (t; H_t)$:
	\begin{align*}
	W_{m,\alpha} = \E \bigg[ &\sum_{t=1}^T w^2_{mt} ( H_{t} ) \,
	e^2_{\alpha} (t; H_t) (A_t -   \tilde{p}_t (1) )^2 f_t f_t^\prime \bigg] \\
	+ \E \bigg[ &\sum_{s \neq t} w_{mt} ( H_{t} )\,  e_{\alpha} (t; H_t) (A_t -   \tilde{p}_t (1) ) f_t
	\times w_{ms} ( H_{s} )\,  e_{\alpha} (s; H_s) (A_s -   \tilde{p}_s (1) ) f_s^\prime \bigg]
	\end{align*}
	The cross-product term is zero due to
	centering (i.e.,~$\E [ w_{ms} (H_s) (A_s - \tilde{p}_s (1))|H_t] = 0$).
	The first term can be simplified by
	\[
	\E \left[ e^2_{\alpha} (t; H_t) w^2_{mt} ( H_{t} ) \, (A_t -   \tilde{p}_t (1) )^2 \right] =
	\E \left[ e^2_{\alpha} (t; H_t)
	\frac{\left(\tilde{p}_t (1) ( 1 - \tilde{p}_t (1))\right)^2}
	{p_t (1 \given H_t) (1 - p_t (1 \given H_t))}
	\right].
	\]
	Define
	\[
	\Xi_{\alpha} (t) = \E \left[ e^2_{\alpha} (t; H_t)
	\frac{\left(\tilde{p}_t (1) ( 1 - \tilde{p}_t (1))\right)}
	{p_t (1 \given H_t) (1 - p_t (1 \given H_t))} \right]
	\]
	Then~$W_{m,\alpha}$ is equal to
	\[
	\sum_{t=1}^T (1 -   \tilde{p}_t (1) ) \tilde{p}_t (1) \Xi_{\alpha} (t) f_t f_t^\prime
	\]

	Combining all of the above we have
	\[
	W_m = \sum_{t=1}^T \tilde{p}_t (1) (1 - \tilde{p}_t (1)) \left[ \Theta_t \sigma^2 +
	\Xi_{\alpha} (t) \right] f_t f_t^\prime
	\]
	We approximate this by
	\[
	W_m = \left[ \bar{\Theta} \sigma^2 +
	\bar{\Xi}_{\alpha} \right] \cdot \sum_{t=1}^T \tilde{p}_t (1) (1 - \tilde{p}_t (1))  f_t f_t^\prime = \left[ \bar{\Theta} \sigma^2 +
	\bar{\Xi}_{\alpha} \right] Q_m
	\]
	where~$\bar{u}$ is the average of~$T^{-1} \sum_{t=1}^T u_t$.
	This gives the desired result.
\end{proof}

\section{Sample size calculation for marginal case}
\label{app:marginal_ss}

To test $H_0 : \beta(t) = 0, t = 1\ldots, T$ we use the test statistic
\[
T_{mN}= N \hat{\beta}_m^\prime  \hat{Q}_m \hat{W}_m^{-1} \hat{Q}_m \hat{\beta}_m
\]
where $N$ is the sample size and  $\hat{W}_m$ is given by
\[
\mathbb{P}_n \left[ \sum_{t=1}^T I_t \, w_{mt} ( H_{t+\Delta-1} )\, \hat{\epsilon}_{mt} (A_t - \tilde{p}_t (1) )  f_t
\times\sum_{t=1}^T  I_t \,w_{mt}( H_{t+\Delta-1} )\, \hat{\epsilon}_{mt} (A_t - \tilde{p}_t (1) )  f_t^\prime \right]
\]
with $\hat{\epsilon}_{mt}= Y_{t,\Delta} - g_t(H_t)^\prime \hat{\alpha}_m
-  (A_t - \tilde{p}_t (1) ) f_t^\prime\hat{\beta}_m$, and $\hat{Q}_m$ is given by
\[
\sum_{t=1}^T\mathbb{P}_n  \left[ I_t \, w_{mt} (H_{t+\Delta-1}) (A_t - \tilde{p}_t (1) )^2 f_t \, f_t^\prime \right].
\]
Here we have implicitly assumed that $\hat{W}_m$ is invertible.
The following lemma provides the distribution of $T_{mN}$:

\begin{lemma}[Asymptotic Distribution of  $T_{mN}$]
	\normalfont
	\label{lemma:centering_marginal}
	Under invertibility assumptions,
	$$
	N \left(\hat{\beta}_m-\beta_m^\star\right)^\prime  \hat{Q}_m \hat{W}_m^{-1} \hat{Q}_m \left(\hat{\beta}_m-\beta_m^\star\right)\longrightarrow_d \chi^2_{q_m}.
	$$
	When $N$ is large, consistency of mean and variance estimators
	as well as asymptotic normality imply the distribution of $q_m^{-1} T_{mN}$
	is well-approximated by a noncentral F-distribution distribution,
	$F_{q_m, N-q^\prime-q_m; N\gamma_{m}}$,
	where
	\begin{align}
	\gamma_{m} =  \beta_m^\star {Q_m} W_m^{-1} {Q_m} &\beta_m^\star, \label{noncent_marginal} \\
	W_m = E \bigg[ \sum_{t=1}^T I_t \, w_{mt} ( H_{t+\Delta-1} )\,
	{\epsilon}_{mt} (A_t - \tilde{p}_t (1) )  f_t
	&\times\sum_{t=1}^T  I_t \,w_{ct}( H_{t+\Delta-1} )\, {\epsilon}_{ct}
	(A_t - \tilde{p}_t (1) )  f_t^\prime \bigg], \nonumber \\
	{\epsilon}_{mt} = Y_{t,\Delta} - g_t(H_t)^\prime {\alpha_c^\star}
	-  (A_t - \tilde{p}_t (1) ) &f_t^\prime{\beta_m^\star},
	\text{ and } \nonumber \\
	Q_m = \sum_{t=1}^T E  \bigg[ I_t \, \tilde{p}_t (1) (1 -
	\tilde{p}_t (1) ) ) &f_t \, f_t^\prime \bigg]. \nonumber
	\end{align}
\end{lemma}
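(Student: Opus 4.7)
The argument is entirely parallel to the proof of Lemma~\ref{lemma:centering} carried out in Appendix~\ref{app:asymptotics}; the only differences come from replacing the conditional reference probability $\tilde p_t(1\mid X_t)$ and feature $f_t(X_t)$ by the unconditional reference probability $\tilde p_t(1)$ and feature $f_t$, and from replacing the weight $w_{ct}$ by $w_{mt}$. The plan has four steps.

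First, I would establish the marginal analog of Lemma~\ref{lemma:change_of_reference}: for any measurable $\phi(H_{t+k})$ with $k\ge 0$,
\[
\E_{\bf p}\!\left[w_{mt}(H_{t+\Delta-1})\,\phi(H_{t+k})\mid H_t\right]=\E_{\eta}\!\left[\phi(H_{t+k})\mid H_t\right],
\]
where $\E_{\eta}$ now denotes expectation under the reference distribution that draws $A_t$ with probability $\tilde p_t(1)$ and sets $A_{t+1},\dots,A_{t+\Delta-1}=0$. The argument is the same iterated-conditioning telescoping that proves Lemma~\ref{lemma:change_of_reference}; only the notation changes because $\tilde p_t(1)$ does not depend on $X_t$. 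From this follow the marginal analog of Lemma~\ref{lemma:2},
\[
\E_{\bf p}\!\left[w_{mt}(H_{t+\Delta-1})(A_t-\tilde p_t(1))\,Y_{t,\Delta}\right]=\tilde p_t(1)(1-\tilde p_t(1))\,\beta(t),
\]
and the centering identity $\E_{\bf p}[w_{mt}(H_{t+\Delta-1})(A_t-\tilde p_t(1))\,g_t(H_t)]=0$.

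Second, I would verify consistency of $(\hat\alpha_m,\hat\beta_m)$ for $(\alpha_m^\star,\beta_m^\star)$. Differentiating the sample criterion in \eqref{eq:marginal_ls} gives estimating equations of the form $\mathbb{P}_n\,\Psi_m(H_{T+\Delta-1};\alpha,\beta)=0$. By the centering identity above the score at $(\alpha_m^\star,\beta_m^\star)$ has mean zero, and the $(\alpha_m,\beta_m)$ block of the expected Hessian is block diagonal (off-diagonal blocks vanish by the same centering). Under the finite moment and invertibility assumptions, standard M-estimator arguments (a uniform law of large numbers plus a continuous mapping step) yield $\hat\beta_m\to\beta_m^\star$ in probability. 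This is the place where the assumption $\tilde p_t(1)$ does not depend on $H_t$ is crucial for orthogonality.

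Third, I would establish asymptotic normality by expanding the first-order conditions around $(\alpha_m^\star,\beta_m^\star)$. The delta-method/M-estimator expansion gives
\[
\sqrt N\bigl(\hat\theta-\theta^\star\bigr)=\Bigl(\E\!\sum_{t=1}^T w_{mt}(H_{t+\Delta-1})\,h_t h_t^\prime\Bigr)^{-1}\!\sqrt N\,\mathbb{P}_n\!\sum_{t=1}^T w_{mt}(H_{t+\Delta-1})\,\epsilon_{mt} h_t+o_p(1),
\]
where $h_t^\prime=(g_t(H_t)^\prime,(A_t-\tilde p_t(1))f_t^\prime)$. Applying the multivariate CLT across independent participants and extracting the $\beta_m$ block (which decouples by the block-diagonal structure) yields
\[
\sqrt N(\hat\beta_m-\beta_m^\star)\xrightarrow{d} \mathcal N\!\bigl(0,\,Q_m^{-1}W_m Q_m^{-1}\bigr),
\]
with $Q_m$ and $W_m$ as in the statement. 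Consistency of $\hat Q_m$ and $\hat W_m$ for $Q_m$ and $W_m$ follows from the (uniform) law of large numbers together with consistency of $\hat\epsilon_{mt}$.

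Fourth, I would conclude. Since a quadratic form of a Gaussian vector is chi-squared, Slutsky plus the consistency of $\hat Q_m \hat W_m^{-1}\hat Q_m$ for $Q_m W_m^{-1} Q_m$ gives
\[
N(\hat\beta_m-\beta_m^\star)^\prime\hat Q_m\hat W_m^{-1}\hat Q_m(\hat\beta_m-\beta_m^\star)\xrightarrow{d}\chi^2_{q_m},
\]
which is the first claim. For the second claim, the same Gaussian approximation of $\hat\beta_m$ around $\beta_m^\star$ implies that, for large $N$, $T_{mN}=N\hat\beta_m^\prime\hat Q_m\hat W_m^{-1}\hat Q_m\hat\beta_m$ is approximately noncentral chi-squared with $q_m$ degrees of freedom and noncentrality $N\gamma_m=N(\beta_m^\star)^\prime Q_m W_m^{-1} Q_m\beta_m^\star$; the usual Hotelling/$F$ small-sample correction (exactly as used for $T_{cN}$ in Appendix~\ref{app:technical}) then replaces the chi-squared by $q_m F_{q_m,N-q^\prime-q_m;N\gamma_m}$.

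The routine part is Step~3, which follows from standard stacked-score M-estimator machinery. The only step that requires genuine care is verifying the block-diagonal structure and the marginal analog of Lemma~\ref{lemma:2} in Step~1, because the centering used in $(A_t-\tilde p_t(1))$ must neutralize both the $g_t(H_t)$ nuisance score and the weight $w_{mt}$; once that is in place, everything else falls out by the template of Appendix~\ref{app:asymptotics}.
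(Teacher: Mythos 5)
Your plan is correct and follows essentially the same route as the paper, which itself proves only the conditional case in Appendix~\ref{app:asymptotics} and states that the marginal case ``follows in a similar manner with only minor modification.'' Your four steps (marginal analogs of Lemma~\ref{lemma:change_of_reference} and Lemma~\ref{lemma:2}, consistency via the block-diagonal estimating equations, the M-estimator expansion giving normality with variance $Q_m^{-1}W_mQ_m^{-1}$, and the quadratic-form/Slutsky plus noncentral-$F$ approximation) are exactly the modifications the paper intends.
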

We set the rejection region for the test $H_0 : \beta(t; x) = 0, t = 1\ldots, T$:
\begin{equation}
\label{eq:reject_marginal}
\left \{ T_{mN} > \frac{q_m \, ( N - (q^\prime +1) )}{N- (q^\prime+q_m )} F_{q_c, N - (q^\prime+q_c);0}^{-1} \left( 1-\alpha_0 \right) \right \}
\end{equation}
with $\alpha_0$ a specified significance level.    For details regarding
further small sample size adjustments, used when analyzing the data,  see
Appendix~\ref{app:ssa}.

The sample size is the smallest value~$N$ such that
\begin{equation}
\label{eq:ss_marginal}
1-F_{q_m, N- (q^\prime+q_m ); N \gamma_m}
\left(  \frac{N-(q^\prime+1)}{N-(q^\prime+q_m)} F^{-1}_{q_m, N- (q^\prime+q_m ); 0} (1- \alpha_0) \right)
\geq 1- \beta_0.
\end{equation}
Calculation of the sample size $N$ is non-trival due to the unknown
form of the noncentrality parameter, $N\gamma_m$ in (\ref{noncent_marginal}).
We now review the three-step procedure for sample size calculations.

In the first step, equation~\eqref{noncent_marginal} along
with information elicited from the scientist is used to
calculate, via Monte-Carlo integration, $\gamma_m$ in
the non-centrality parameter.
The resulting non-centrality parameter,~$\hat{\gamma}_m$,
is plugged in to Equation~\eqref{eq:ss_marginal}
to solve for an \emph{initial} sample size estimate~$\hat{N}_0$.
In the second step we use a binary search algorithm
to search over a neighborhood of~$\hat{N}_0$.
For each sample size~$N$ required by the binary search
algorithm, $K$ samples each of~$N$ simulated participants
are run.  Within each simulation, the rejection region
for the test is given by equation~\eqref{eq:reject_marginal}
at the specified significance level.
The average number of rejected null hypotheses
across the~$K$ simulations is the estimated
power for the sample size~$N$.  The
sample size is the minimal~$N$ with estimated
power above the pre-specified threshold~$1-\beta_0$.
In the last, third, step we conduct a variety of simulations
to assess the robustness of the sample size calculator
to any assumptions and to make adjustments
to ensure robustness.

\subsection{Application to motivating example}

Table~\ref{app:tab:est_sample_sizemarg} presents the estimated
sample size from our proposed method
to detect a specified alternative for the
conditional proximal effect
given power of~$80\%$ and significance level~$5.0\%$ for the smoking cessation study.
We assume the day of maximal effect is day~$5$ and
the initial effect is $0$ for both levels of the time-varying variable~$X_t$.
The average treatment effect~$\bar{\beta}$ is
set to three plausible values.
\begin{table}[!htb]
	\caption{Estimated sample size, $N$, and
		achieved power.}
	\label{app:tab:est_sample_sizemarg}
	\centering
	\begin{tabular}{l | c c}
		& Sample size & Power \\ \hline
		$\bar{\beta} = 0.030$ & 50 & 80.0 \\
		$\bar{\beta} = 0.025$ & 77 & 80.0 \\
		$\bar{\beta} = 0.020$ & 121 & 80.4 \\ \hline
	\end{tabular}
\end{table}
We perform $1000$ simulations.
Each simulation is based on the Markov chain~$P$,
the sequence of markov chain under action~$P_t^{(1)}$,
and the randomization probability~$p_t (1 \given H_t)$.
These components completely specify the generative model.
Each simulation consists of generating data for~$N$ individuals
and performing the hypothesis test using equation~\eqref{eq:reject}
with the small-sample size adjustment described in Appendix~\ref{app:ssa}.

The third step in forming the simulation-based sample size
calculator is to assess robustness to the assumptions.
We are particularly concerned with the sensitivity
of the calculator to the assumptions on the form of
the transition matrix $P^{(0)}$.
The following is an illustration of how we might
assess robustness to the form of the transition matrix
and, how as a result of the assessment, we make
the calculator more robust to the assumptions.

\subsubsection{Misspecification of transition matrix~$P^{(0)}$}
\label{app:subsubsection:ball}

As in Section~\ref{subsubsection:ball},
we test robustness of the sample size calculator
to misspecification of the transition matrix~$P^{(0)}$
for the Markov chain,~$V_t$, under no treatment;
the treatment effect is still correctly specified.
We suppose the misspecification stems from
noise related to the information elicited from
scientists.
Let~$B_{(\epsilon, \epsilon^\prime)}$ denote
an~$(\epsilon,\epsilon^\prime)$-ball around the
inputs~$(\bar{W}, \bar{Z})$
and~$\Omega_{(\epsilon, \epsilon^\prime)}$ be
the subset of $B_{(\epsilon, \epsilon^\prime)}$ as defined
in Section~\ref{subsubsection:ball}.
Table~\ref{app:tab:ball_1} presents
estimated power under the previously calculated
sample sizes for~$\Omega_{(0.02,4)}$ and~$\Omega_{(0.01,2)}$
respectively.
For both~$(\epsilon,\epsilon^\prime) = (0.01,2)$
and~$(\epsilon,\epsilon^\prime) = (0.02, 4)$,
the estimated power is significantly
below the pre-specified 80\% level
for all three choices of the
average treatment effect~$\bar{\beta}$.

\begin{table}[!htb]
      \caption{Misspecification of transition matrix~$P^{(0)}$:
        minimum estimated power \\
        over set of matrices in~$\Omega_{\epsilon, \epsilon^\prime}$
      }
      \label{app:tab:ball_1}
      \centering
        \begin{tabular}{c |c c}
          & \multicolumn{2}{c}{$(\epsilon, \epsilon^\prime) = $} \\
          & $(0.02,4)$ & $(0.01,2)$ \\ \hline
          $\bar{\beta} = 0.030$ & 43.1 & 66.3 \\
          $\bar{\beta} = 0.025$ & 37.6 & 63.8 \\
          $\bar{\beta} = 0.020$ & 27.3 & 57.6 \\ \hline
        \end{tabular}
\end{table}

\subsubsection{Deviations from a time-inhomogenous transition matrix
under no treatment}
\label{app:subsubsection:weekend}

As in Section~\ref{subsubsection:weekend}, next we test robustness
of the sample size calculator to a different type of misspecification of the transition
matrix~$P^{(0)}$,
that of time-inhomogeneity; as before the treatment effect is still
correctly specified.
We specify~$P^{(0)}_{\text{weekend}}$ via
inputs~$(\bar{W}_{\text{weekend}},\bar{Z}_{\text{weekend}})$
given in Section~\ref{subsubsection:weekend}.
Using the inputs we construct two alternate versions of
what the true transition matrix~$P^{(0)}_{\text{weekend}}$ might be.

To test the calculator, we generate data using the no-treatment
transition matrices $P^{(0)}_{\text{weekend}}$ (for the weekend)
and $P^{(0)}$(for the weekday).  This data is simulated so that
the treatment effect used by the calculator is still correct (e.g. we
select the transition matrices under treatment, $P_{d(t)}^{(1)}$, to
ensure this).

Table~\ref{app:tab:power_weekendeffect}
presents power calculations under these
alternative generative models.
We see that the power is below the
pre-specified 80\% threshold for both
inputs across the three possible values of the
average treatment effect~$\bar{\beta}$.
If the scientist thought such deviations feasible,
then the above analysis suggests for the smoking
cessation example that the sample size be set
to ensure a least~$80\%$ power \emph{over a
set of feasible choices for time-inhomogeneous
choices for the no-treatment transition matrix.}

\begin{table}[!ht]
  \caption{Estimated power under generative
    model with time-inhomogeneous Markov chain.}
  \label{app:tab:power_weekendeffect}
  \centering
  \begin{tabular}{c | r r}
    & \multicolumn{2}{c}{Estimated power} \\
    & Weekend Input 1 & Weekend Input 2 \\ \hline
    $\bar{\beta} = 0.030$ & 82.9 & 75.4 \\
    $\bar{\beta} = 0.025$ & 78.6 & 77.0 \\
    $\bar{\beta} = 0.020$ & 76.4 & 76.9 \\ \hline
  \end{tabular}
\end{table}

\subsection{Deviations from a Markovian generative model}

Here we test robustness of the sample size calculator
to the over-fit semi-Markovian deviations presented in the main text.
To test the calculator, we generate data using the
no-treatment semi-Markov model specified in Appendix~\ref{app:SMC}.
The data is simulated so that the treatment effect used
by the calculator is correct. See Appendix~\ref{app:SMC} for
a discussion of how this was achieved.

Table~\ref{tab:power_SMCeffect_marginal} 
presents achieved power under these alternative 
generative models. 
We see that the achieved power is 
well above the pre-specified 80\%
threshold in each case. 
Therefore the sample size calculator is robust to
such complex deviations from the Markovian generative model.

\begin{table}[!ht]
  \caption{Estimated power under 
    semi-Markov generative.}
  \label{tab:power_SMCeffect_marginal}
  \centering
  \begin{tabular}{c | r }
    & Estimated power \\ \hline
    $\bar{\beta} = 0.030$ & 92.5 \\
    $\bar{\beta} = 0.025$ & 91.2 \\
    $\bar{\beta} = 0.020$ & 88.3 \\ \hline
  \end{tabular}
\end{table}

\subsection{Adjustments to the  simulation-based calculator}

We have now evaluated the simulation calculator.
Here we make adjustments to the simulation calculator to ensure
robustness. First, we note that the simulation calculator is robust to
the potential semi-Markovian deviation. 
We make the decision that we are not
concerned with lack of robustness to
deviations from a time-inhomogenous transition matrix.
Instead we focus on making the simulation calculator
robust to misspecification of transition matrix.

The above analysis suggests for the smoking
cessation example that the sample size should be
set to ensure at least 80\% power \emph{over a set of
feasible choices for the transition matrix~$P^{(0)}$}.
We fix~$(\epsilon,\epsilon^\prime) = (0.01,2)$ to be our
tolerance to misspecification of the inputs.
For each set of inputs~$(W,Z) \in \Omega_{0.01,2}$,
the sample size calculator will compute a sample size, and the
maximum of this set of computed sample sizes will be chosen
to ensure tolerance to misspecification of the transition
matrix. Table~\ref{app:tab:est_sample_size_robust} presents
the sample size under this procedure as well as the
avhieved \emph{minimum power} over the set~$\Omega_{\epsilon,  \epsilon^\prime}$.

\begin{table}[!htb]
    \caption{Estimated sample size, $N$, and
      computed power under~$\epsilon = 2$
      and $\epsilon^\prime = 0.01$.
    }
    \label{app:tab:est_sample_size_robust}
    \centering
    \begin{tabular}{l | c c}
      & Sample size & Minimum Power \\ \hline
      $\bar{\beta} = 0.030$ & 66 & 80.2 \\
      $\bar{\beta} = 0.025$ & 113 & 80.5 \\
      $\bar{\beta} = 0.020$ & 195 & 80.6 \\ \hline
    \end{tabular}
\end{table}

We have illustrated the three-step procedure to forming a
sample size calculator for the smoking cessation study example.
Suppose the scientist specifies an average treatment
effect $\bar{\beta}$ equal to $0.025$.
Based on the above discussion a sample size,~$N$,
of $113$ would be recommended to ensure power
above the pre-specified 80\% threshold across a set of feasible
deviations from the assumed generative model.

\section{Small sample size adjustment for covariance estimation}
\label{app:ssa}

The robust sandwich covariance
estimator~\cite{Mancl2001}
for the entire variance matrix is given
by~$Q^{-1} \Lambda Q^{-1}$.  The
first term,~$Q$, is given by
\[
\left( \sum_{i=1}^N D_i^T W_i D_i \right)
\]
where $D_i$ is the model matrix for
individual~$i$ associated with
equation~\eqref{eq:conditional_ls}, and
$W_i$ is a diagonal matrix of weights either
constructed from $w_{ct} (H_{t+\Delta-1})$
or $w_{mt} (H_{t+\Delta-1})$ for the conditional and marginal
test statistics respectively.
The middle term~$\Lambda$ is given
by
\[
\sum_{i=1}^N D_i^\prime W_i (I_i - H_i)^{-1}
e_i e_i^\prime (I_i - H_i)^{-1} W_i D_i
\]
where $I_i$ is an identity matrix
of correct dimension, $e_i$ is
the individual-specific residual
vector and
\[
H_i = D_i
\left( \sum_{i=1}^N D_i^\prime W_i D_i \right)^{-1}
D_i^\prime W_i
\]
From $Q^{-1} \Lambda Q^{-1}$ we extract
$\hat{\Sigma}_{\beta}$.

\section{Additional details for smoking cessation example sample size calculation}
\label{app:add_details}

Table~\ref{tab:std_effect1} presents the standardized effect
sizes for the two levels of the stratifying variable~$X_t$ 
under the Markov generative model introduced in 
Section~\ref{subsubsection:calculator_smoking}.

\begin{table}[!htb]
  \caption{Standardized effects under the Markovian generative model}
  \label{tab:std_effect1}
  \centering
  \begin{tabular}{l | c c}
    & $X_t = $ ``Non-stress'' & $X_t = $``Stress'' \\ \hline
    $\bar{\beta} = 0.030$ & 0.059 & 0.034 \\
    $\bar{\beta} = 0.025$ & 0.052 & 0.030 \\
    $\bar{\beta} = 0.020$ & 0.038 & 0.020 \\ \hline
  \end{tabular}
\end{table}

Table~\ref{tab:std_effect1} presents the standardized effect
sizes for the two levels of the stratifying variable~$X_t$ 
under the semi-Markov generative model introduced in 
Section~\ref{subsection:SMCdev}.

\begin{table}[!htb]
  \caption{Standardized effects under the semi-Markovian generative model}
  \label{tab:std_effect2}
  \centering
  \begin{tabular}{l | c c}
    & $X_t = $ ``Non-stress'' & $X_t = $``Stress'' \\ \hline
    $\bar{\beta} = 0.030$ & 0.074 & 0.036 \\
    $\bar{\beta} = 0.025$ & 0.063 & 0.028 \\
    $\bar{\beta} = 0.020$ & 0.049 & 0.024 \\ \hline
  \end{tabular}
\end{table}

\subsection{Analytic form of the treatment effect for the smoking cessation example}
\label{app:example_treatment_effect}

For the smoking cessation study, action at decision time~$t$ implies that the individual
is unavailable for treatment for the subsequent hour; therefore~$p_{t+s} (A_{t+s} = 0 \given H_{t+s} ) = 1$
for~$s = 1,\dots, \Delta-1$ given $A_t = 1$.
In this case, we have~$\prod_{j=t+1}^{t+\Delta-1} \frac{1_{A_j=0}}{p_j(A_j|H_j)} = 1$.
Recall the proximal response~$Y_{t,\Delta}$ is equal to $\Delta^{-1} \sum_{u=1}^T 1_{X_{t+u} = 1}$.
Therefore we have
\begin{align*}
	&\E \bigg [ \E \bigg [ \prod_{j=t+1}^{t+\Delta-1}
	\frac{1_{A_j=0}}{p_j(A_j|H_j)}Y_{t,\Delta} \, \bigg| \, A_t = 1 , H_t \bigg] \Given I_t = 1, X_t = x \bigg ] \\
	= &\Delta^{-1} \sum_{s=1}^{\Delta} \E \bigg [ \pr \left(
            X_{t+s} = 1 \given A_t = 1 , H_t \right) \given I_t = 1,
            X_t = x \bigg ]  \\
	= &\Delta^{-1} \sum_{s=1}^{\Delta} \sum_{u \in \{0,1,2\} } \E \bigg [ \pr \left(
            X_{t+s} = 1, U_{t+s} = u \given A_t = 1 , V_t \right) \given I_t = 1,
            X_t = x \bigg ]  \\
	= &\Delta^{-1} \sum_{s=1}^{\Delta} \sum_{u \in \{0,1,2\} }\pr
            \left( X_{t+s} = 1, V_{t+s} = u \given A_t = 1 , X_t  = x,
            U_t = 1 \right)
\end{align*}
where the second equality is due to the Markov property assumption.
Under the specified generative model, the time-varying stratification variable is a Markov process
with transition rule~$P_{d(t)}^{1}$ \emph{for the hour following action}.  Therefore,
\begin{align*}
	\Delta^{-1} \sum_{s=1}^{\Delta} \sum_{u \in \{0,1,2\} } &\pr
        \left( X_{t+s} = 1, U_{t+s} = u \given A_t = 1 , X_t  = x, U_t
        = 1 \right) \\
	= \Delta^{-1} \sum_{s=1}^{\Delta} \sum_{u \in \{0,1,2\}} &\left[ \left( P_{d(t)}^{1} \right)^{s} \right]_{(x,1), (1,u)}.
\end{align*}
If instead~$A_t = 0$, then for~$s > 1$
\begin{align*}
	&\E \bigg [ \prod_{j=1}^{\Delta-1}
	\frac{1_{A_j=0}}{p_j(A_j|H_j)} 1_{X_{t+s} = 1} \, \bigg| \, A_t = 0 , H_t \bigg] \\
	= &\E \bigg [ \prod_{j=1}^{s-1}
	\frac{1_{A_j=0}}{p_j(A_j|H_j)} 1_{X_{t+s} = 1} \E \left[ \prod_{j=s}^{\Delta-1}
	\frac{1_{A_j=0}}{p_j(A_j|H_j)} \Given H_{t+s} \right] \, \bigg| \, A_t = 0 , H_t \bigg] \\
	= &\E \bigg [ \prod_{j=1}^{s-1}
	\frac{1_{A_j=0}}{p_j(A_j|H_j)} 1_{X_{t+s} = 1} \, \bigg| \, A_t = 0 , H_t \bigg].
\end{align*}
Where the third inequality is a consequence of Lemma~\ref{lemma:change_of_reference}.
Taking~$s = 2$, we see that
\begin{align*}
	&\E \bigg [ \frac{1_{A_{t+1}=0}}{p_{t+1} (A_{t+1} | H_{t+1})} 1_{X_{t+2} = 1} \, \bigg| \, A_t = 0 , H_t \bigg] \\
	= &\E \bigg [ \frac{1_{A_{t+1}=0}}{p_{t+1} (A_{t+1} | H_{t+1})} \E \left[ 1_{X_{t+2} = 1} \given H_{t+1}, A_{t+1} \right] \, \bigg| \, A_t = 0 , H_t \bigg] \\
	= &\E \bigg [ \sum_{a \in \{0,1\} } p_{t+1} (A_{t+1} = a \given H_{t+1})
            \frac{1_{A_{t+1}=0}}{p_{t+1} (A_{t+1} | H_{t+1})}
            \pr \left( X_{t+2} = 1 \given H_{t+1}, A_{t+1} = a \right) \, \bigg| \, A_t = 0 , H_t \bigg] \\
	= &\E \bigg [
            \pr \left( X_{t+2} = 1 \given H_{t+1}, A_{t+1} = 0 \right)
            \, \bigg| \, A_t = 0 , H_t \bigg] \\
	= &\E \bigg [ \sum_{u \in \{0,1,2\} } \pr \left( X_{t+2} = 1,
            U_{t+2} = u \given X_{t+1}, U_{t+1}, A_{t+1} = 0 \right)
            \, \bigg| \, A_t = 0 , H_t \bigg] \\
	= &\sum_{x^\prime \in [k]} \sum_{u^\prime \in \{0,1,2\}}
            \sum_{u \in \{0,1,2\} } \pr
            \left( X_{t+2} = 1, U_{t+2} = u \given X_{t+1} = x^\prime,
            U_{t+1} = u^\prime, A_{t+1} = 0 \right) \\
        &\times \pr \left( X_{t+1}
            = x^\prime, U_{t+1} = u^\prime \given H_t , A_{t+1} = 0
          \right)  \\
\end{align*}
\begin{align*}
	= &\sum_{x^\prime  \in [k]}  \sum_{u^\prime \in \{0,1,2\}}
            \sum_{u \in \{0,1,2\} } \pr \left( X_{t+2} = 1, U_{t+2} =
            u \given X_{t+1} = x^\prime, U_{t+1} = u^\prime, A_{t+1} =
            0 \right) \\
        &\times \pr \left( X_{t+1} = x^\prime, U_{t+1} = u^\prime \given
            X_t , U_t = 1, A_{t+1} = 0 \right) \\
	= & \sum_{ u \in \{0,1,2\} }\left[ \left( P^{0} \right)^{2} \right]_{(X_t,1), (1,u)}
\end{align*}
Expanding on this for~$s \geq 2$ the following equality holds
\[
	\E \bigg [ \prod_{s=1}^{\Delta-1}
	\frac{1_{A_j=0}}{p_j(A_j|H_j)} 1_{X_{t+s} = 1} \, \bigg| \, A_t = 0 , H_t \bigg]
	= \sum_{u \in \{0,1,2\} } \left[ \left( P^{0} \right)^{s} \right]_{(X_t,1), (1,u)}
\]
For~$s = 1$ the result holds trivially. The analysis implies that
\[
\E \bigg [ \E \bigg [ \prod_{s=1}^{\Delta-1}
	\frac{1_{A_j=0}}{p_j(A_j|H_j)} Y_{t,\Delta} \, \bigg| \, A_t = 0 , H_t \bigg] \Given X_t = x, I_t =1 \bigg]
	= \Delta^{-1} \sum_{s=1}^{\Delta -1} \sum_{u \in \{0,1,2\} } \left[ \left( P^{0} \right)^{s} \right]_{(x,1), (1,u)}.
\]

\section{Technical details on the semi-Markovian generative model}
\label{app:SMC}

For a semi-Markov process, we let~$Y_n$ denote the~$n$th
state the process enters and~$S_{n}$ denote the time of that
transition.  Each time-homogeneous, semi-Markov process
is characterized by the kernel function:
\[
Q_{ij} (x) \equiv P [ Y_{n+1} = j, S_{n+1} - S_{n} \leq x \given Y_n = i ].
\]
This is the conditional probability of next being in state~$j$
and the transition occurring before time~$x$ given the prior
state is state~$i$.The probability that any transition takes place within the next
$x$ time units is given by summing up the
leaving probabilities for each possible state
$j$,~$Q_{i} (x) = \sum_{j \neq i} Q_{ij} (x)$, not taking into
account transitions from~$i$ to~$i$.

We have a closed form expression for~$Q_{ij} (x)$ for each~$x$ given the
parameters from the above model.
If a transition is allowed
from~$i$ to~$j$ then
\[
Q_{ij} (x; \theta) = \left( 1 - \exp \left[ - \left(
    \frac{x + 0.5}{\lambda_i}\right)^{\kappa_i} \right] \right) \Omega_{ij}
\]
where~$\lambda_{i}$ and~$\kappa_{i}$ are the Weibull distribution
parameters given state~$i$, $\Omega_{ij}$ is the probability
of transitioning from state~$i$ to state~$j$, and~$\theta$ denotes the 
entire set of parameters underlying the semi-Markov model.
Note that from any state~$j$ one can transition
to only $1$ or $2$ other states and therefore the
kernel function is quite low-dimensional.

For a time-homogeneous, semi-Markov process we need to know
the probability of ending up in state $j$ at a time $x$
conditional on starting in state~$i$ at time~$0$.
\begin{align*}
  p_{ij} (x; \theta) & = \delta_{ij} [ 1- Q_{i} (x; \theta) ] \\
                     &+ \sum_{k \neq i}
                       \sum_{v=1}^{x} p_{kj} (x - v; \theta) \left[ F (v; \lambda_i,
                       \kappa_i) - F (v-1; \lambda_i,
                       \kappa_i) \right]
                       \Omega_{ik}
\end{align*}
This requires knowledge of~$p_{kj} (x^\prime; \theta)$ for all~$x^\prime < x$
which is not known a priori.  However, we have
initial conditions~$p_{ij} (0; \theta) = \delta_{ij}$.
Then for the first point in the discretization we have
\begin{align*}
p_{ij} (1) &= \delta_{ij} [ 1 - Q_{i} (1)] +
\sum_{k \neq i} p_{kj} (0) \cdot \left[ F (1; \lambda_i,
               \kappa_i) - F (0; \lambda_i,
               \kappa_i) \right] \Omega_{ik} \\
           &= \delta_{ij} [ 1 - Q_{i} (1)] +
             \left( 1 - \delta_{ij} \right)
             \left[ F (1; \lambda_i,
               \kappa_i) - F (0; \lambda_i,
               \kappa_i) \right] \Omega_{ij}.
\end{align*}
We omitted dependence of~$p_{ij}$ and~$Q_i$ on~$\theta$ for
the sake of space.
This completely determines~$p_{ij} (1)$ for all~$i$ and~$j$.
We can iterate on knowing these parameters to solve for~$p_{ij} (x)$
for each $x = 1,\ldots, \Delta$.
For~$x = 2$, for example, we have:
\begin{align*}
p_{ij} (2) &= \delta_{ij} [ 1 - Q_{i} (2)] +
\sum_{k \neq i} \bigg[ p_{kj} (0) \cdot \left[ F (2; \lambda_i,
               \kappa_i) - F (1; \lambda_i,
               \kappa_i) \right] \Omega_{ik}  \\
           &+ p_{kj} (1) \cdot \left[ F (1; \lambda_i,
             \kappa_i) - F (0; \lambda_i,
             \kappa_i) \right] \Omega_{ik} \bigg].
\end{align*}
Therefore, for a given model specification we can compute the
expected fraction of time classified as ``stressed'' in the next hour.
Let~$\mathcal{A}$ denote the set of states that correspond to
currently being classified as stressed, then
\[
\mu (i; \theta) = \Delta^{-1} \sum_{x=1}^{\Delta} \sum_{j \in \mathcal{A}} p_{ij}
(x; \theta) .
\]
where again~$\theta$ denotes the set of parameters
of the transition and duration models and~$i$ denotes the
current state.
Let~$\theta_0$ denote the parameters for the baseline
generating model.  
Then we define the proximal outcome conditional on being
currently non-stressed as
\[
\mu_{0}(\theta^\prime) = \frac{\sum_{i \in A^{c}} \pi_{i} (\theta_0)
\mu (i; \theta^\prime) }{\sum_{i \in A^c} \pi_{i} (\theta_0) }
\]
and the proximal outcome conditional on being
currently stressed as
\[
\mu_{1}(\theta^\prime) = \frac{\sum_{i \in A} \pi_{i} (\theta_0)
\mu (i; \theta^\prime) }{\sum_{i \in A} \pi_{i} (\theta_0) }.
\]
Note that the proximal outcomes above are defined summing
over the stationary distribution with respect to~$\theta_0$ (i.e.,
$\pi_i (\theta_0)$).  We do this because it is a decent approximation
to the true setting.  Interventions occur infrequently and so 
we expect the stationary distribution over the baseline parameters to
be close to the true distribution of the states given~$X_t = x$

To construct the semi-Markov generating model
under treatment we wish to find
\[
\arg \min_{\theta^\prime} \max_{x \in \{0,1\} } \left \| (\mu_{x;
    \theta^\prime} - \mu_{x;\theta_0}) - \tilde{\beta}_x \right \| 
\]
where~$\tilde{\beta}_x$ is specified alternative treatment effect
for~$X_t = x$.
Thus the problem now turns into an optimization problem.  
Fortunately generic black-box optimization routines in R
were found to be sufficient.

\section{Code to Generate Simulation Results}

The R code used to generate the smoking cessation example results in this paper can be obtained
from \url{https://github.com/wdempsey/stratified_mrt}.

\end{document}